\newcommand{\ignore}[1]{}
\newtheorem{lemma}{Lemma}
\newtheorem{theorem}{Theorem}
\newtheorem*{theorem*}{Theorem}
\newtheorem{definition}{Definition}
\newtheorem{remark}{Remark}
\title{Coalition-Safe Equilibria with Virtual Payoffs}
\author{Aggelos Kiayias %
  \thanks{Electronic address: \texttt{akiayias@inf.ed.ac.uk}}}
\affil{ University of Edinburgh 
\& IOHK}
\author{Aikaterini-Panagiota Stouka%
  \thanks{Electronic address: \texttt{a.stouka@ed.ac.uk}  %Corresponding author (full-time student)
  }}
\affil{University of Edinburgh \& IOHK} 
\begin{document}

\maketitle

%Game theoretic analyses of Bitcoin %and related blockchain protocols
%have shown a diverse, and seemingly contradictory,  set of behaviors depending on the modeling used.

\ignore{  
Motivated by previous work in the context of fair blockchain protocols \cite{fruitchain}[PODC 2017]
we study their incentives via a new definitional framework that puts forth the concept of a Decentralised
Nash Equilibrium (DNE). This notion is suitable for blockchain protocols in the sense 
that (i)
it provides a versatile way to describe a wide variety of utility functions that are based on rewards recorded in the ledger and cost incurred during ledger maintenance, 
(ii) it takes into account that each participant may have a divergent view of the rewards given to the other participants, as the rewards themselves employed is subject to consensus among players (it is well defined without assuming that the blockchain protocols have specific security properties such as common prefix property defined in \cite{backbone}) %and blockchain protocols themselves aim to achieve such consensus 
(iii) it accounts for the stochastic nature of these protocols enforcing the equilibrium condition to hold with overwhelming probability. We proceed to use this framework to study two important utility functions that have been considered without a systematic analysis
in previous works: maximising relative rewards and maximising absolute rewards minus absolute cost. 
Given this we show how Bitcoin is a DNE for absolute rewards assuming static difficulty
(but not relative) as well as any fair blockchain protocol is a DNE 
for arbitrarily large coalitions w.r.t. absolute rewards, and for bounded below $n/2$ coalitions w.r.t. relative rewards extending what is known for fair protocols which was merely
an equilibrium result for absolute rewards against coalitions  bounded below $n/2$. 
}
\ignore{  
Game theoretic analyses of Bitcoin %and related blockchain protocols
have shown a diverse, and seemingly contradictory,  set of behaviors depending on the modeling and the utility function being used. In this work, we propose a novel concept of equilibrium, called
Decentralised Nash Equilibrium (DNE) that is suitable for 
 analyzing game theoretic aspects of blockchain protocols. 
% A DNE strategy has the characteristic that the utility of a player
  %is the minimum one that is gained with high probability, quantified over the views of all those parties 
%that follow that strategy.
Our notion, compared to the existing notions, has the advantages that simultaneously (i)  takes into account that each participant may have a divergent view of the rewards given to the other participants, as the reward mechanism employed is subject to consensus among players (it is well defined independently of whether the underlying protocol achieves consensus)
(ii) accounts for the stochastic nature of these protocols enforcing the equilibrium condition to hold with overwhelming probability (iii) provides a versatile way to describe a wide variety of utility functions that are based on rewards recorded in the ledger and cost incurred during ledger maintenance. We proceed to use this framework to %to show 
%positive and negative results regarding incentive compatibility for blockchain protocols  according to different utility functions.
%Moreover we use our model
(i) %unify previous divergent views on incentives in Bitcoin and 
 provide a unified picture of incentives in the Bitcoin blockchain,  for absolute and relative rewards based utility functions (ii) prove novel results regarding incentives of the Fruitchain blockchain protocol [PODC 2017]  showing that the equilibrium condition holds for collusions up to $n-1$ players for absolute rewards based utility functions and less than $n/2$ for relative rewards based utility functions, with the latter result holding for any ``weakly fair'' blockchain protocol, a new property that we introduce. 
}

\begin{abstract}
Consider a set of parties invited to execute a protocol $\Pi$. The protocol  will incur some cost to run while in the end (or at regular intervals), it will populate and update local tables that assign   (virtual) rewards to participants. Each participant aspires to offset the costs of participation by these virtual payoffs that are provided in the course of the protocol. 
In this setting, we introduce and study a notion of coalition-safe equilibrium. 
In particular, we consider a  strategic coalition of participants that is centrally
coordinated and potentially deviates from $\Pi$ with the objective to
increase its utility with respect to the view of {\em at least one} of the other participants.  The protocol $\Pi$ is called a coalition-safe equilibrium with virtual payoffs (EVP) if no such protocol deviation exists. 
We apply our notion to study incentives in blockchain protocols. Compared to prior work, our framework has the advantages that it simultaneously (i)  takes into account that each participant may have a divergent view of the rewards given to the other participants, as the reward mechanism employed is subject to consensus among players (and our notion is well defined independently of whether the underlying protocol achieves consensus or not)
(ii) accounts for the stochastic nature of these protocols enforcing the equilibrium condition to hold with overwhelming probability (iii) provides a versatile way to describe a wide variety of utility functions that are based on rewards recorded in the ledger and cost incurred during ledger maintenance. We proceed to use our framework to %to show 
%positive and negative results regarding incentive compatibility for blockchain protocols  according to different utility functions.
%Moreover we use our model
 %unify previous divergent views on incentives in Bitcoin and 
  provide a unified picture of incentives in the Bitcoin blockchain,  for absolute and relative rewards based utility functions, as well as  prove novel results regarding incentives of the Fruitchain blockchain protocol [PODC 2017]  showing that the equilibrium condition holds for collusions up to $n-1$ players for absolute rewards based utility functions and less than $n/2$ for relative rewards based utility functions, with the latter result holding for any ``weakly fair'' blockchain protocol, a new property that we introduce and may be of independent interest. 
\end{abstract}

\section{Introduction}

A game involves a number of  participants that engage with each other following a certain strategy profile which incurs individual costs and rewards. The utility of each participant, which rational participants aspire to maximize,  is some compound real-valued function that takes into account the costs incurred and rewards  resulting by the interaction. A common characteristic is that costs and rewards are bestowed authoritatively via some infrastructure that is typically external to the game execution. 
Contrary to this, in this work, we study a game-theoretic setting where rewards are {\em virtual} and are recorded as an outcome of the interaction of the participants individually in each participant's local view. Thus, while costs are incurred authoritatively as before, rewards are ``in the eye of the beholder'' and in the end of the interaction two participants may have diverging views about the rewards that each game participant has received, while any single participant $P$ cares fundamentally that the other participants conclude in their local views that $P$ has received rewards. 

Our motivation comes from the setting of distributed ledgers. These protocols were originally studied as an instance of the state machine replication problem~\cite{DBLP:journals/csur/Schneider90} but recently were 
popularised again due to the introduction of the Bitcoin blockchain protocol~\cite{nakamoto}. \iffalse (See Appendix \ref{bitcoinintro}). \fi Bitcoin is a cryptocurrency based on a blockchain protocol that maintains a public ledger containing the history  of all transactions. The protocol was formally analyzed in the cryptographic setting in \cite{backbone,rafael}. The main idea behind the protocol is that transactions are organized into blocks and blocks form a chain, as each block contains the hash of  the previous block. The longest chain is selected to determine the public ledger. A block is produced when a proof of work puzzle \cite{puzzle,puzzle2,puzzle3,puzzle4} is solved by a node called  miner. The miner that produces a block earns an amount of Bitcoin as a reward.
%This protocol serves as the consensus layer of the Bitcoin cryptocurrency \cite{nakamoto}. %Bitcoin was introduced by \cite{nakamoto} and formally analyzed in \cite{backbone}. %In more detail Bitcoin is a cryptocurrency based on a blockchain protocol that maintains a public ledger containing the history  of all transactions. It was introduced by \cite{nakamoto} and formally analyzed in \cite{backbone}. %, where two fundamental properties of the blockchain data structure were proven: ``common prefix property'' and ``chain quality''. 
%For more details see Appendix \ref{bitcoinintro}.  
One distinguishing feature of blockchain protocols is the emphasis they put on the incentives of the participating entities. Classically, consensus \cite{DBLP:journals/toplas/LamportSP82} was considered in various threat models, such as fail-stop failures or Byzantine. However the incentive and game theoretic
aspects of the protocol have received less attention.

%Some game theoretic notions that are related to incentives in such multi player games are \textit{Nash equilibrium, incentive compatibility and participation constraint}, (cf.  Appendix \ref{definitions} for more detailed information on them). 

In blockchain protocols, the rewards that are bestowed to the participants are not assigned in an authoritative manner by some external entity, but rather are recorded as an outcome of bookkeeping that takes place by the interaction of the participants. In such setting, the relevant question is 
whether a strategic coalition of participants has an incentive to follow the protocol or to deviate. In its simplest form we consider a ``monolithic'' such coalition (abstracted as an adversary) that considers deviating  from the protocol in a coordinated fashion with the aim to increase the joint utility of its members. 
%The fundamental assumption in this type of analysis is that rational participants aim to maximize their utility. 

\ignore{
Some game theoretic notions that are related to incentives in such multi player games are \textit{Nash equilibrium, incentive compatibility and participation constraint}, (cf.  Appendix \ref{definitions} for more detailed information on them). 
A strategy profile, which indicates how each participant behaves in the game,  is an \textit{$\epsilon$-Nash equilibrium} when the following holds: if all but one of the participants follow their strategy indicated by the strategy profile,
the remaining participant has no incentives to deviate 
from its indicated strategy as well, as its utility can only  be increased  by a small insignificant amount bounded by $\epsilon$, see e.g., \cite{essential}. Extended notions of equilibria capture strategic coalitions as well, cf.  \cite{coalition,strongn}, giving rise to ``Strong'' Nash Equilibria. Note that if we show that a blockchain protocol is an $\epsilon$-Nash equilibrium, we know that nobody  has the incentive to deviate from the protocol, if everybody else follows the protocol. \textit{ Incentive compatibility} notion appears in a stronger and in a weaker degree.
 ``Dominant-strategy incentive-compatibility'' is satisfied when there is not a strict better strategy than telling the truth or following the protocol respectively whatever the other participants do. `` Bayesian-Nash incentive-compatibility''  is a  weaker notion  and a protocol satisfies it when there is  a type of Nash equilibrium denoted by ```Bayesian Nash equilibrium'', where all the participants tell the truth supposing that all the other participants do the same\cite{multi}. In cryptocurrency literature some times incentive compatibility notion is used as equivalent to Nash equilibrium notion. In economics maximizing the profits or maximizing the utility  is an \textit{optimization problem} that includes two constraints. The first constraint is \textit{incentive compatibility} and the second constraint is \textit{participation constraint} meaning that when a participant participates in a game does not have lower utility compared to not participating. \cite{brit}}
 
 Different aspects of incentives in Bitcoin were studied in \cite{demystifying,kroll,selfish,mininggame,selfish3,baloon,whale,bribery,zikas}  and some type of incentive compatibility for
blockchain protocols was studied in the context of a few  protocols, see e.g.,  \cite{tor,fruitchain,sol} 
(cf.  Appendix \ref{definitions} for background information on game theoretic notions). %The notion of an equilibrium is not sufficient by itself to completely describe rational behavior, as e.g., there can be  multiple equilibria and  strategic play in a game  may end up in any one of them.\ignore{In economics maximizing  profits  is a\textit{optimization problem} with two constraints : \textit{incentive compatibility} and  \textit{participation constraint} meaning that a party does not lose when it participates in a game compared to not participating. \cite{brit} }
%%% Challenges in applying these results to blockchain protocols
%%It was early on realized that it is extremely important
%to study the incentives behind participation
%in a blockchain protocol.%However,
%applying game theoretic results and definitions in blockchain protocols proved challenging, as the utility function should be considered as random variable and not as the expected value and in addition the definitions should be modular and  work for any kind of utility. So the existing definitions of incentive compatibility use utility as expected value or even if they overcome this, they are restricted to a specific utility function and they do not  take into account all the relevant parameters, like the cost of each hash trial that is used for proof of work. 
With respect to studying the participation in the core blockchain protocol, 
Kroll et al. in~\cite{ kroll} show that  a certain modeling of the Bitcoin protocol is a Nash equilibrium, while
Eyal and Sirer in~\cite{selfish} show that Bitcoin is not incentive compatible because of a type of attack called selfish 
mining that works for any level of hashing power (for \textit{Nash equilibrium} and \textit{incentive compatibility} definition see Appendix \ref{definitions}). Then again, Kiayias et al. in~\cite{mininggame}  show that there are thresholds 
of hashing power where certain games that abstract Bitcoin have honest behavior as a 
Nash equilibrium. The above seemingly contradictory results stem from differences in the game theoretic modeling of the underlying blockchain protocol and the utility function that is postulated. In addition, the existing notions of equilibria (cf. Section~\ref{sec:otherrelated} below) do not appear to be sufficient to completely capture the rational behavior of participants. First, given the anticipated
long term execution length of such games it is important to consider the variance of utility and thus merely looking at expected utility might be insufficient. Second, 
the reward mechanism employed is subject to consensus
among participants and given that the protocol itself aims to achieve such consensus, each participant may have a divergent view 
of the rewards given to the other participants. Thus it is important
that the model used to examine the protocol takes into account the possibility of such
divergence and  the game should be well defined independently of whether the resulting interaction achieves consistency or safety, as such properties should be the result of the rational interaction of participants, not a precondition for it! 

%Motivated by the above, our framework for coalition-safe equilibria with virtual rewards  is thus  describing incentives for blockchain protocols
%that combines cryptography with game theory, is compatible and enhances previous works and  is versatile enough to describe different types of utility functions while  at the same time taking  into account all the above considerations. We present our results in more detail in the next subsection. 

%
%to have a notion of a  Nash Equilibrium to support this model, where a coalition of rational participants deviates not only when it can increase its expected joint utility but even if it can increase its actual joint utility with not negligible probability. In addition as this notion is used for decentralised protocols it needs to take account that different nodes have different views and thus different perspective of participants'  utility.

\subsection{Our Results}

\paragraph{Execution model:}   Our model \ignore{strictly} generalizes the  execution model of \cite{backbone} and it is based  on the ``real-world'' protocol execution model of   \cite{Canetti2,Canetti1,UC2,Canetti4} with the additional feature that certain operations of the protocol are abstracted as oracles and calling such oracles incurs a certain cost to the callee. In this way, 
the cost of each participant is solely dependent on  participants' actions and aggregates the expenditure that is  incurred during the execution based on the oracle queries posed. For example in the case of a 
proof-of-work blockchain protocol this may amount to the number of queries posed to the hash function.

\paragraph{Utility with Virtual Payoffs:} At any point  of an execution, each participant has a local view regarding the virtual rewards of all participants, including themselves. 
The key observation for defining utility in our setting is that given that the rewards are virtual, it is not particularly advantageous for  a participant to be in a state where  
according to its own bookkeeping she has collected some rewards; instead what is important, is what {\em other} participants believe about one participant's rewards.
In this way we define two types of reward functions $R^{\max},R^{\min}$ which will correspondingly give rise to two utility functions. The $R^{\max}$ rewards of a coalition represent the maximum amount of rewards a coalition has received quantified over all {\em other} participants (which do not belong to the coalition), while $R^{\min}$ is similarly the minimum amount of rewards. 
%Generalizing these functions to coalitions of parties can be done in a straightforward manner by summing rewards and costs over all coalition members. 

\paragraph{Equilibria with Virtual Payoffs (EVP):}
Based on these functions (reward, cost and utility functions), we present a formal notion   of approximate Nash equilibrium, called coalition-safe Equilibrium with Virtual Payoffs (EVP). 
%Our notion strictly %generalizes, as we will describe in detail in the next subsection, the notion used in \cite{fruitchain} and %\iffalse In high level,   
Informally, a protocol $\Pi$ is an EVP if it   guarantees that with    overwhelming probability, a  rational strategic actor (hence called the {\em adversary}) who controls a coalition of participants, cannot gain by deviating more than an insignificant amount in terms of utility in the view of {\em any} of the other participants.
As a result, for a given protocol $\Pi$,  if there is  a small, but non-negligible,  probability that the utility of the adversary deviating from $\Pi$  becomes significantly higher in the view of a single other participant then such protocol will {\em not be} an EVP. 
  
In more details,  our notion of equilibrium is defined by  examining two independent executions of the protocol in question. In the first execution the adversary controlling a coalition follows the protocol while in the second execution it might deviate in some strategic fashion. In both executions the participants who are not controlled by the adversary (we refer to them as honest participants) follow the protocol.  The way in which we examine these two executions is by comparing the utilities of the adversary in these two executions for all possible environments. The underlying protocol is EVP when with overwhelming probability the $U^{\max}$ utility of the adversary  when it deviates is not significantly higher compared to its $U^{\min}$ utility when it follows the protocol. 
This means that in order for our protocol not to be an EVP, there will be an alternative  strategy 
and an environment with respect to which,  
  the execution where the adversary deviates in the view of one honest participant results,  with a non-negligible probability,  to a significantly higher utility 
  compared to the {\em lowest} utility determined for the adversarial coalition  in the execution where it follows the protocol when quantified over {\em all} the honest participants.

\paragraph{EVP Analysis of Blockchain Protocols:}
In our analysis,  we revisit three important utility definitions for blockchain protocols: 
(i) absolute rewards (ii) absolute rewards minus absolute cost and (iii) relative rewards. With the term absolute rewards  we refer to  the amount of the rewards that a set of participants 
 receives at the end of the execution. With the term absolute cost we mean the cost that this set of participants pays during the execution expressed in absolute
terms.  With the term relative rewards we refer to the rewards of this set of participants divided by the total rewards given to all the participants.  
We note that the first and the third type of utility have been considered in a number of previous works, specifically, 
\cite{kroll,fruitchain} used the first type and
\cite{mininggame,tor,selfish}  used the third type. In addition the second type was used in \cite{absolutecost,zikas,gap}.
\iffalse
In our results, we consider three important utility definitions defined for any set of participants: 
(i) absolute rewards (ii) absolute rewards minus absolute cost and (iii) relative rewards. For more details see Appendix  \ref{utility}.
\fi \ignore{With the term absolute rewards  we refer to  the amount of the rewards that a set of participants 
 receives at the end of the execution. With the term absolute cost we mean the cost that this set of participants pays during the execution expressed in absolute
terms.  With the term relative rewards we refer to the rewards of this set of participants divided by the total rewards given to all the participants.  
We note that the first and the third type of utility have been considered in a number of previous works, specifically, 
\cite{kroll,fruitchain} used the first type and
\cite{mininggame,tor,selfish}  used the third type. In addition the second type was used in \cite{absolutecost,zikas,gap}... .XXX describe them - and why XXX and who else has considered them XXX. 
'' This type of cost model is consistent with cloud mining~\cite{cloud} where participants establish a contract that may last a long period of time and they pay a fixed rental fee per time unit.} 

Using our model we prove positive and negative results regarding the incentives in Bitcoin unifying previous seemingly divergent views on how the protocol operates in terms of incentives, cf. Theorems \ref{absolutebitcoin},\ref{absolutecostbitcoin},\ref{relativebitcoin}. Specifically, we prove that Bitcoin with fixed target is an EVP in the static setting with utility based on absolute rewards, and absolute rewards minus absolute costs, while it is not with respect to relative rewards, cf. Figure~\ref{fig:table}. 

Next, we prove regarding incentives of Fruitchain, \cite{fruitchain}, the following new result: when the utility is based on  absolute rewards minus absolute cost, the Fruitchain protocol is an EVP in the static  setting against a coalition including even up to {\em all but one} of the participants (Theorem \ref {absolutefruitchain}). Moreover we define a property called ``$(t,\delta)$-weak fairness'' that is weaker than ``fairness'' defined in \cite{fruitchain} or ``ideal chain quality'' described in \cite{backbone} and the ``race-free property'' in \cite{tor} (for more details see Section \ref{fruitfair}\iffalse also more detail about other related work regarding ``fairness'' exists in Appendix \ref{related}\fi) and is sufficient for proving that a protocol is EVP when the utility is based on  relative rewards (Theorem \ref{theorem8}). This allows us to also prove the following result: when the utility is based on relative rewards, the Fruitchain protocol is EVP in the static synchronous setting against any coalition including fewer than half of the number of the participants  (assuming participants of equal hashing power, cf. Theorem \ref{relativefruitchain}). Further, we note that  the approximation factor in the EVP is merely a constant additive factor. 
Regarding the level of rewards, in \cite{fruitchain} the total rewards $V$ of an execution are derived from from (a) the flat rewards of the fruits (for details regarding what a fruit according to \cite{fruitchain} is, see subsection \ref{fruitabstract}) and (b) the transaction fees from the transactions inside the fruits; in both cases these are distributed evenly among the miners and  $V$ is a fixed constant in the whole execution. 
Our result is also stronger in this respect, for both absolute and relative rewards based utilities, where we show that the protocol is an EVP even if rewards are a function of the security parameter or the length of the execution. 
\ignore{%%%   
 Instead, (i) when we study what happens in the Fruitchain protocol when the utility is equivalent to relative rewards we consider, like \cite{fruitchain}, that each fruit does not give the same reward (for example because of the transaction fees from the transactions it includes) and at the end of the execution the total rewards are distributed evenly among the miners of the fruits (but without the assumption that $V$ is a fixed constant) and (ii) when we study what happens in the Fruitchain when the utility is equivalent to absolute rewards minus absolute cost we assume that each fruit gives to its owner a fixed amount $w_f$ of money. The reason is that the assumption that $V$ is fixed does not always hold when the difficulty in mining a block is fixed. For example the adversary can reduce the total rewards and transaction fees by not asking all its queries, or by withholding honest blocks. We leave for future work to study what happens in \cite{fruitchain} when utility is equivalent to absolute rewards minus absolute cost, each fruit does not give the same amount of money and the total rewards  $V$ are shared evenly among the miners (without the assumption used in \cite{fruitchain} that $V$ is a fixed constant). 
 }%%%% 
 
We note that our model is synchronous and in our results we consider that the adversary is static and decides in the beginning of the execution the participants it will control and the cost it will pay during each round.  We will refer to it  as ``static adversary with fixed cost.'' This type of cost model is consistent with 
cloud mining~\cite{cloud} where participants establish a contract and they pay a fixed rental fee per time unit. In addition we suppose that the difficulty in mining a block is fixed.  Interesting directions for future work is devising protocols that are EVPs against  a dynamic adversary which adaptively fluctuates its mining resources, while the protocol itself adjusts mining difficulty; designing and proving that  such EVP protocols exist is an interesting open question. 

\begin{figure}[h]
\begin{center}
\begin{tabular}{|p{4cm}|c|c|c}
\hline
                     & AbsR/AbsR-C & RelR \\
\hline
Bitcoin fixed target & $n-1$ $^{(\ast)}$  & NO$^{(1)}$  \\
Bitcoin variable target & NO$^{(2)}$ & NO$^{(3)}$  \\
Fruitchain & $(n-1)$ $^{(\dagger)}$ &  $<n/2$ \\
\hline
\end{tabular}
\end{center}
\caption{\label{fig:table}Overview of our results as well as previous results that are consistent with the EVP model. AbsR stands for a utility based on absolute rewards, AbsR-C for a utility based on  absolute rewards minus absolute cost, while RelR stands for a utility based on relative rewards. The function in $n$ specifies the larger coalition size for which the equilibrium stands. $^{(1),(3)}$ are
derived from \cite{selfish}, 
$^{ (2)}$ is derived from \cite{koutsoupias}.  The $^{(\ast)}$ result is informally postulated in \cite{kroll}. A weaker bound of the $^{(\dagger)}$ result in terms of coalition size $(<n/2)$ was shown in \cite{fruitchain}. }
\end{figure}
 
\subsection{Other Related Work}
\label{sec:otherrelated}

\ignore{
Our model is a generalization of the execution model of the Bitcoin Backbone protocol~\cite{backbone} and it is based on \cite{Canetti2,Canetti1,UC2,Canetti4}. Some important differences are: (i) in our model we consider that each query to the random oracle that models a cryptographic hash function has a cost and (ii) our model includes more than one oracles that may model different cryptographic primitives. \par 
Note that in our model there exist honest participants and a rational adversary that controls 
a number of protocol participants and tries to maximize their joint utility. This reflects the Nash equilibrium notion for coalitions where we examine if a coalition
of rational participants has incentives to follow the indicated strategy given that the other participants follow their indicated strategy.} 

A closely related work that focused on Byzantine Agreement and rational behavior is   \cite{10.1007/978-3-642-31585-5_50}. 
Some distinctions between our work and \cite{10.1007/978-3-642-31585-5_50}  are that (i)
their utility model is tailored to the setting of  (single shot) binary Byzantine agreement, 
while we focus on distributed ledgers that record transaction and rewards for the participants, 
(ii) in the definition of equilibrium they  consider the expectation of utility as opposed
to bounds on utility that are supposed to hold with high probability,
(iii) at equilibrium, the rational adversary may deviate from the protocol 
as long as the properties of Byzantine agreement are not violated, while we consider any
protocol deviation as potentially invalidating our equilibrium objective as long as the adversarial
coalition
benefits in the view of one of the other participants.

One model,  introduced in 
\cite{Aiyer:2005:BFT:1095810.1095816},  that combines Byzantine participants, i.e., participants that can deviate from the protocol arbitrarily, in addition to honest and rational participants, is ``BAR.'' This model   includes three types of participants: altruistic,
Byzantine and rational  and was used to analyse two types of protocols, IC-BFT (Incentive-Compatible Byzantine Fault Tolerant) and Byzantine Altruistic Rational 
Tolerant (BART) protocols \cite{Aiyer:2005:BFT:1095810.1095816}. The first type of protocols (i) satisfies the security properties of a Byzantine Fault Tolerant protocol (safety and liveness) in a setting with Byzantine/honest participants and (ii) guarantees that the best choice for rational participants is to follow the protocol. This guarantee is provided under the following assumptions: 
(a) if following the protocol is a Nash equilibrium then the rational participants will adopt it as a  strategy,  (b) rational participants do not collude,
and (c) the expected utility of the rational participants is computed considering that the Byzantine participants react in such a way that minimizes the utility of the rational participants. One of the advantages of the IC-BFT model is that it can be used to argue that  rational participants have incentives to follow the protocol due to property (ii) and thus they can be considered as honest and in such case the resulting protocol will still be resilient to some Byzantine behaviour due to property (i). \ignore{ The BAR model includes three types of participants: altruistic,
Byzantine and rational. More details are given in Appendix \ref{related}. This model was used to analyse two types of protocols, IC-BFT (Incentive-Compatible Byzantine Fault Tolerant) and Byzantine Altruistic Rational 
Tolerant (BART) protocols \cite{Aiyer:2005:BFT:1095810.1095816}.  The first type of protocols (i) satisfies the security properties of a Byzantine Fault Tolerant protocol (safety and liveness) in a setting with Byzantine/honest participants and (ii) guarantees that the best choice for rational participants is to follow the protocol. This guarantee is provided under the following assumptions: 
(a) if following the protocol is a Nash equilibrium then the rational participants will adopt it as a  strategy,  (b) rational participants do not collude,
and (c) the expected utility of the rational participants is computed considering that the Byzantine participants react in such a way that minimizes the utility of the rational participants).
One of the advantages of the IC-BFT model is that it can be used to argue that  rational participants have incentives to follow the protocol due to property (ii) and thus they can be considered as honest and in such case the resulting protocol will still be resilient to some Byzantine behaviour due to property (i).
\par Note that \cite{fault} had also considered a  model including both rational and Byzantine participants with the difference that the Byzantine participants were considered to react arbitrarily and not in such a way that minimizes the utility of the rational participants.

The second type of protocols considered in the context of BART guarantee that the security properties are satisfied even if there exist rational participants. So it is weaker than the first type in the sense that  
it does not guarantee that rational participants will follow the protocol. It guarantees only that even if rational participants deviate in a way that increases their utility,
the security properties will not be violated.
}\par Another game theoretic notion that takes into account malicious and rational participants in the context of multi-party computation is called  ``$\epsilon$-$(k,t)$-robust Nash equilibrium'' defined in \cite{defin}.
%Whymentionthis??\footnote{Note that a $(k,0)$-robust strategy is defined in \cite{defin} as a $k$-resilient strategy.}
%
In this type of equilibrium no participant in a coalition of up to $k$ participants should be able to increase their utility given that there exist up to $t$ malicious participants. Note that in our case following \cite{fruitchain} when we consider coalitions we study their joint utility (by summing individual rewards) and not the utility of each participant separately something that results in a more relaxed notion in this respect (but still suitable for the distributed ledger setting: following \cite{backbone,fruitchain} when we study proof of work cryptocurrencies, each participant represents a specific amount of computational power. So a coalition of participants could also be thought to represent one miner). 
%This depicts the fact that a player can have more hashing
%power or stake compared to another player.   
\ignore{
Previous works on the general topic of rational multi-party protocols include %\cite{Halpern:2016:RCE:2933057.2933088} which was followed by a number of works
\cite{10.1007/978-3-319-02786-9_14,Abraham:2008:ATP:1400751.1400804,DBLP:conf/podc/DaniMRS11,DBLP:conf/tcc/FuchsbauerKN10,10.1007/11818175_11}
while a related line of research explored cheap talk
\cite{defin,DBLP:journals/geb/UrbanoV04,DBLP:conf/podc/LepinskiMP04,cheap3}.
For example cheap talk \cite{cheap1,cheap2} was used in   \cite{defin} for simulating  an honest mediator given (i) secure private channels between agents that incur no cost, (ii) a punishment strategy such as having the participants stop the protocol if misbehaviour is detected.}
%[Abraham, Dolev, Gonen, and Halpern 2006; Abraham, Dolev, and Halpern 2008; Barany 1992; Ben-Porath 2003; Dodis, Halevi, and Rabin 2000; Forges 1990; Heller 2005; Izmalkov, Lepinski, and Micali 2011; Lepinski, Micali, Peikert, and Shelat 2004; McGrew, Porter, and Shoham 2003; Shoham and Tennenholtz 2005; Urbano and Vila 2002; Urbano and Vila 2004])
%and the BAR model \cite{Aiyer:2005:BFT:1095810.1095816}.
%[Aiyer, Alvisi, Clement, Dahlin, Martin, and Porth 2005]; see, for example, [Moscibroda, Schmid, and Wattenhofer 2006; Wong, Levy, Alvisi, Clement, and Dahlin 2011]
%In addition, byzantine agreement with a rational adversary where the remaining parties play honestly has been studied in \cite{10.1007/978-3-642-31585-5_50}.

\ignore{
Another game theoretic notion that can be used to handle protocols operating in 
asynchronous networks is the ``ex post Nash equilibrium'' and was used in this context in \cite{10.1007/978-3-642-41527-2_5,Halpern:2016:RCE:2933057.2933088}. More details exist in Appendix \ref{related}. The way this was used in our context, was to include also adversarial nodes in addition to rational nodes and in \cite{10.1007/978-3-642-41527-2_5} the adversarial nodes would determine some specific choices in the protocol execution 
(such as the initial signal the agents get and the order in which agents are scheduled). The  equilibrium condition is required to hold regardless of the choices of the adversarial nodes and even if the rational participants know these choices.} In \cite{rationalframework} a framework for ``rational protocol design'' is described that is based on the  simulation paradigm. 
That framework was extended and used for examining the incentive compatibility of Bitcoin in \cite{zikas}. \iffalse For more details see Appendix \ref{related}.\fi The basic premise  is that the miners aim to maximize their expected revenue and the framework describes a game between two participants: a protocol designer D and an attacker A. The Designer D aims to design a protocol that maximizes the expected revenue of the non adversarial participants and keep the blockchain consistent without forks. The adversary A aims to maximize its expected revenue. \ignore{In \cite{zikas} the assumption is that the miners aim to maximize their expected revenue and the framework describes a game between two participants: a protocol designer D and and an attacker A. The Designer D has target to design a protocol that maximizes the expected revenue of the non adversarial participants and keeps the blockchain consistent without forks. The adversary A has target to maximize its expected revenue.}%In this game the utility of the designer decreases as the utility of the adversary increases and the opposite.%For this reason  framework in \cite{zikas} compared to our framework does not capture strategies, where the adversary deviates from the protocol but increases the revenue of all the participants.
One difference of our model compared to \cite{zikas}  is that we let the adversary deviate from the protocol not only if its expected utility increases significantly by deviating, but even if it can increase its actual utility significantly just with not negligible probability. In addition \cite{zikas} focuses exclusively on the incentive compatibility of  Bitcoin  and only when utility is equivalent to absolute rewards minus absolute cost. %(not relative rewards as we also do here). %The second constraint of our model saying that the utility of the adversary when it follows the protocol is higher than zero or very close to zero is related to participation constraint  in optimization problems in economics \cite{brit} and to individual rationality in game theory  \cite{individual, mininggame} and has appeared also in \cite{zikas}.  Note that our notion of incentive compatibility can express an optimization problem in economics \cite{brit}, but it is weaker than the dominant strategy concept.

Other related works that study the incentive compatibility of Bitcoin according to a specific utility are \cite{kroll,mininggame,selfish}.  In addition, the incentives of nodes who do not want necessarily to engage in  mining but they want to use the Bitcoin system for transactions have been studied in \cite{users}. \ignore{ Note that in  \cite{mininggame} there exists also a reference to % the fact that the cost should be small enough compared to the reward so that protocol satisfies 
Individual Rationality, which means that the expected utility is not negative. This is also related to the participation constraint used in optimization problems in economics \cite{brit} and has appeared also in \cite{zikas}. We have taken similar considerations into 
account in order to define the notion of an incentive compatible blockchain protocol.}\ignore{
Regarding the notion ``incentive compatibility'' we use :
(i)``Dominant-strategy incentive-compatibility'' refers to a setting where there is no  better strategy  than telling the truth or following the protocol irrespectively of what  the other participants do. ii)``Bayesian-Nash incentive-compatibility''  is a  weaker notion when a participant has incentives to tell the truth given  that all the other participants do the same, cf. \cite{multi}.}\par  
As we already mentioned, in \cite{fruitchain}  the  Fruitchain protocol is presented, which preserves the security properties of Bitcoin protocol and satisfies a $\delta$-approximate fairness property (assuming honest majority) that is shown to be enough for incentive compatibility when the utility is equivalent to absolute rewards. In addition, in \cite{fruitchain} a definition of approximate  Nash equilibrium is described, denoted by``$\rho$-coalition-safe $\epsilon$-Nash equilibrium'' that guarantees protocol conformity with overwhelming probability.
Our EVP definition  is both more general and more explicit in the sense that: (i) It includes a formal description of the properties of the protocol's executions that give rise to the random variables that should be compared. (ii) It includes a formal definition of reward and utility functions. (iii)  It takes into account in a rigorous way the fact that local views of honest participants may diverge and it is  well defined even when  the underlying protocol view of participants are inconsistent.  %and is more appropriate for utilities with value smaller than one like relative rewards. 
\ignore{
As mentioned above, our results are stronger in the sense that (i) we prove that the Fruitchain protocol is an EVP when the utility is equivalent to absolute rewards against a coalition including even up to all but one participants (not fewer than half of the participants as in \cite{fruitchain}) %and we have removed the assumption that total rewards are constant. (note that this assumption cannot be used when the difficulty of producing/mining a block is fixed then total rewards can be reduced if the adversary follows selfish mining strategy\cite{selfish}).
 (ii) we prove that Fruitchain is EVP also when the utility is equivalent to absolute rewards minus absolute cost (iii) we prove what happens in \cite{fruitchain} when the utility is equivalent to relative rewards showing in this way formally what  is the actual advantage of the Fruitchain over Bitcoin (because when utility is equivalent to absolute rewards and the difficulty in mining a block is fixed both Bitcoin and Fruitchain are EVP).}%and because the main advantage of Fruitchain compared to Bitcoin protocol appears when the adversary wants to maximize its relative rewards.
%and Individual Rationality \cite{individual} .
%it is more restricted as it is not suitable for utility functions that can take value smaller than $1$ like relative rewards or smaller than zero like absolute rewards minus absolute cost. 
\ignore{When the adversary wants to maximize the absolute rewards and the target of the blocks is fixed the results regarding incentive compatibility are almost the same in Bitcoin and Fruitchain and hold, as we have proved, without honest majority assumption. In addition Fruitchain protocol does not examine what happens when each query to the random oracle has a cost which cannot be neglected as mining needs a great amount of electricity.} %Moreover,our definition of incentive compatibility in contrast to \cite{fruitchain} includes also Individual Rationality \cite{individual} that examines if the participants have incentives to participate. \ignore{Finally the definition of the notion of approximate Nash equilibrium  in \cite{fruitchain} is not well defined as there is not a formal description of the properties of the executions that give us the random variables that we  should compare and it also does not include a formal definition of reward and utility function.} 
\par
\ignore{
As we have already explained we use in one of our proofs a notion of ``weak fairness'' that we define and is weaker than ``fairness'' described in \cite{fruitchain}, ``ideal chain quality'' described in \cite{backbone} and ``race-free property'' in \cite{tor} Another property related to ``fairness'' is ``t-immunity'' in \cite{defin}. The last property considers utility as an expectation. Note that the notion of fairness has also been  used in \cite{inclusive}. A notion of weak fairness has also 
been used in \cite{fairness} for a different purpose. Specifically in \cite{fairness} fairness refers to exchanges between participants; both or neither of the participants take the other's item.} Some other works that investigate the interplay between Cryptography and Game theory in different settings are  \cite{costlycom,gamecrypto,gc,defin,faircomputation}. \ignore{In \cite{defin, faircomputation} there are also some definitions related to Nash equilibrium that refer also to strategic coalitions, but the utility used is also based on expectation.}Some proof-of-stake blockchain protocols (protocols that do not rely on proof of work to achieve consensus) that can be proved to be incentive compatible using some notion of equilibrium are \cite{ouroboros,snow}. A framework for identifying attacks against the incentive schemes of the blockchain protocols is proposed in \cite{squir}. In \cite{latest}, proof of work blockchain protocols are modeled as stochastic games while in \cite{survey} a survey of game theoretic applications in the blockchain setting is presented. \par Previous works on the general topic of rational multi-party protocols include %\cite{Halpern:2016:RCE:2933057.2933088} which was followed by a number of works
\cite{10.1007/978-3-319-02786-9_14,Abraham:2008:ATP:1400751.1400804,DBLP:conf/podc/DaniMRS11,DBLP:conf/tcc/FuchsbauerKN10,10.1007/11818175_11}
while a related line of research explored cheap talk
\cite{defin,DBLP:journals/geb/UrbanoV04,DBLP:conf/podc/LepinskiMP04,cheap3}.
For example cheap talk \cite{cheap1,cheap2} was used in   \cite{defin} for simulating  an honest mediator given (i) secure private channels between agents that incur no cost, (ii) a punishment strategy such as having the participants stop the protocol if misbehaviour is detected. \par 
A game theoretic notion that can be used to handle protocols operating in 
asynchronous networks is the ``ex post Nash equilibrium" and was used in this context in \cite{10.1007/978-3-642-41527-2_5,Halpern:2016:RCE:2933057.2933088}. The way this was used in our context, was to include also adversarial nodes in addition to rational nodes and in \cite{10.1007/978-3-642-41527-2_5} the adversarial nodes would determine some specific choices in the protocol execution 
(such as the initial signal the agents get and the order in which agents are scheduled). The  equilibrium condition is required to hold regardless of the choices of the adversarial nodes and even if the rational participants know these choices.\par 

\ignore{For related work regarding other types of multi-party protocols, see ``ex post Nash equilibrium'' and ``cheap talk'' \iffalse \cite{defin,DBLP:journals/geb/UrbanoV04,DBLP:conf/podc/LepinskiMP04,cheap3} \fi in Appendix \ref{related}.} 

Another property (apart from these we have already referred to) related to ``fairness" is ``t-immunity" in \cite{defin}. This property also considers utility as an expectation. Note that the notion of fairness has also been  used in \cite{inclusive}. A notion of weak fairness has also 
been used in \cite{fairness} for a different purpose. Specifically in \cite{fairness} fairness refers to exchanges between participants; both or neither of the participants receive the intended output. \par Finally we note that coalition-safety has been examined also in the context of cheap talk \cite{coalitionsafecheap} and in computational games with mediator \cite{costlycom}.

\section{Our Model} 
\label{modelbody}
Our definition of \iffalse  robust \fi coalition-safe equilibria with virtual payoffs is built on a model of protocol execution that extends the model described in  \cite{backbone}, and is based on \cite{Canetti2,Canetti1,UC2,Canetti4}. This model constitutes the basis for analyzing incentives in an arbitrary blockchain protocol $\Pi$ (but is not necessarily restricted to blockchain protocols). The main components of the model are: a system of interactive Turing machines ITMs ($\mathcal{Z}$,$\mathcal{C}$), a strategic coalition 
of participants that abstractly are referred to as the ``adversary'', $\mathcal{A}$ which is also an ITM, and the ITM instances (ITIs) $P_1,P_2,...,P_n$  that represent the participants of our protocol that run the blockchain protocol $\Pi$. %(ITI is an ITM that runs a program on specific data).
 $\mathcal{C}$ is the control program that controls the interactions between the ITIs. $\mathcal{Z}$ is the ``environment'' or in other words the initial Turing machine that represents the external world to the protocol. It gives inputs to the participants and the adversary and it receives outputs from them. The adversary is static and controls a set of $t'$ participants $T\equiv \lbrace  P_{i_1},...,P_{i_{t'}}\rbrace \subseteq \lbrace P_1,...,P_n \rbrace\equiv S$ in the beginning of the execution. 
In the definition of equilibrium we will put forth, we consider executions where 
adversary follows an arbitrary strategy while the remaining participants follow $\Pi$. 

The execution is synchronous and is progressing in rounds as in \cite{backbone}, which means that at the end of each round all the honest participants receive all the messages sent from all the other honest participants. However, compared to \cite{backbone}, instead of just a random oracle on which a cryptographic hash function is modeled, we allow for many oracles where each oracle represents a cryptographic task, such as issuing a digital signature. We denote those by $O_1,\ldots, O_l$. The environment $\mathcal{Z}$ is forced by the control program $\mathcal{C}$ to activate all the participants in  sequence performing a ``round-robin''  participant execution. Each participant can ask each oracle $O_k$ an upper bounded number of queries $q_k$ during each round and each query has a cost $c_k$. The limitation in access is controlled by the control program $\mathcal{C}$. The participants produce messages  delivered via a ``Diffuse Functionality'' as in \cite{backbone}. %These messages can be for example the blocks of a blockchain or fruits such as in \cite{fruitchain}. 
  
The Diffuse functionality adjusts the protocol execution in rounds and determines the communication between the honest participants and the adversary. Specifically it allows the adversary to see the messages produced by the honest participants and delay them until the end of the round. So the adversary can deliver first its messages. However at the end of each round, the Diffuse functionality delivers to all the honest participants all the messages sent from the other honest participants.  Note that the Diffuse functionality gives the opportunity to the adversary to deliver first its own messages  to the honest participants. We provide this capability to the adversary ``for free'', i.e., robustness will be defined even in settings where the  adversary has an inexpensive way of influencing message delivery to its advantage. 

% Too specific
%Note that the way the honest parties choose the message they want to hold in the case of a tie between competing solutions/blocks, each blockchain protocol has a different rule. For example in the Bitcoin Backbone protocol \cite{backbone} the honest parties choose the first block they receive and therefore they will adopt the block from the adversary in the case of a tie. A different selection rule in the case of a tie, such as choosing at random \cite{selfish}, could eliminate the opportunity of the adversary described above.

In order to model our notion of  equilibrium we need to compare between two 
possible executions across arbitrary environments. Given this, it is important to fix the number of 
rounds the environment runs the protocol. To accomodate this, we will define 
 as \textit{$r$-admissible} an environment which  performs the protocol a number of rounds $r=p(\kappa)\neq 0$, where $p$ a polynomial, after which it will terminate the execution. Note also that in line with 
\cite{Canetti2,Canetti1} 
the input of the environment will be $1^{p'(\kappa)}$, where $p'$ a polynomial. %\footnote{The length of the input should be long enough so that the environment can give inputs to the honest parties and the adversary during the whole execution.}.
\ignore{
\begin{definition}
An environment $\mathcal{Z}$ that takes an input  $1^{p'(\kappa)}$, where $p'$ a polynomial, is $r$- admissible when :
\begin{itemize}
%\item it  gives inputs to the parties and the adversary with length equal to  the security parameter $\kappa$ during each round.
\item it is a probabilistic polynomial time (denoted by PPT) ITM .
\item it will perform the protocol a number of rounds $r=p(\kappa)\neq 0$, where $p$ a polynomial, after which it will terminate the execution. 
\end{itemize}

\end{definition}
}

\ignore{
We will describe a model of protocol execution that extends the model described in [EUROCRYPT 2015]\cite{backbone}, is based on \cite{Canetti2,Canetti1,UC2,Canetti4} and considers the adversary as static. This model can be the basis to analyze incentive compatibility of an arbitrary blockchain  protocol (but is not necessarily restricted to blockchain protocols). We will adopt the notation and definitions of the above papers. The complete description of the model is given in the appendix in the full version of the paper. \par  Let $\kappa$ be the security parameter of the system. The main components of our model are: a system of interactive Turing machines ITMs ($\mathcal{Z}$,$\mathcal{C}$), the adversary $\mathcal{A}$ that is also an ITM and the ITIs $P_1,P_2,...,P_n$  that represent the honest participants of our protocol and run the blockchain protocol $\Pi$ (ITI is an ITM that runs a program on specific data.) $\mathcal{C}$ is the control program that controls the interactions between the ITIs. $\mathcal{Z}$ is the environment and it is the initial Turing machine that represents the execution environment within which our protocol runs. It gives inputs to the participants and the adversary and it receives outputs from them. The environment takes an  input $1^{p'(\kappa)}$, where $p'$ a polynomial. \par
We will study only environments that we will call  $r$- admissible.
\begin{definition}
An environment $\mathcal{Z}$ that takes an input  $1^{p'(\kappa)}$, where $p'$ a polynomial, is $r$- admissible when :
\begin{itemize}
\item The environment is balanced which means that  gives inputs to the participants and the adversary with length equal to  the security parameter  $\kappa$ during each round .
\item It is PPT.
\item In the beginning of the execution the environment decides the round $r=p(\kappa)\neq 0$ , where $p$ a polynomial, after which it will terminate the execution. (As it is balanced this decision depends also in the length of the input, because it should have to give enough input to the participants during each round). 
\end{itemize}

\end{definition}
  The environment, the adversary and the participants are PPTs.
  The number $n$ of the participants is hardcoded in $\mathcal{C}$, but the participants cannot use it . Firstly the environment $\mathcal{Z}$  is forced by C to spawn the adversary.  Afterwards it is forced  to spawn $P_1,P_2,...,P_n$ . The model will be`` hybrid'' which means that there will be some Ideal Functionalities that will be  subroutines of the participants and the adversary.  In more detail, these subroutines are $l$ oracles $O_1,...,O_l$ and an ideal functionality denoted by the Diffuse functionality \cite{backbone}.  The oracles and the Diffuse functionality are subroutines of $P_1,P_2,...,P_n,\mathcal{A}$  and $P_1,P_2,...,P_n, \mathcal{A}$ are subroutines of $\mathcal{Z}$. The  Diffuse functionality will adjust the protocol execution in rounds  and will represent the communication between the honest participants and the adversary. The oracles represent some cryptographic tools to which honest participants have limited access during a round. For example these tools could be a random oracle as in \cite{backbone} or an oracle providing digital signatures. The limitation in access is controlled by the control program  $\mathcal{C}$. \par Initially the adversary chooses to control a set  $T\equiv \lbrace P_{i_1},...,P_{i_t'} \rbrace$ of  $t'$ participants . Then a round starts and  all the honest participants are activated  by the environment in the sequence performing a ``round-robin'' participant execution  and afterwards the adversary is activated.  During each round, each honest participant can ask  each oracle $O_k$ at most $q_k$ queries and each query costs $c_k$. Then each honest participant sends  the message that it produces by the queries to the Diffuse Functionality that is responsible to deliver this message to the participants if the adversary permits it. This message in the case of Bitcoin is its local chain.  The adversary asks each oracle  $O_k$ at most  $t'\cdot q_{k}$ questions. This restriction is achieved via the Control function. At the end of a round the adversary  sends to the Diffuse functionality its message and the participants to whom it wants to send this message. Then the round ends and the Diffuse Functionality delivers to all the honest participants all the messages produced by  honest participants that were dropped during the round by the adversary. \par The Diffuse functionality as described above gives the opportunity to the adversary to deliver first its solution (e.g block) to the honest participants and in the case we have a round during which multiple solutions have been produced the honest participants will receive first the solution from the adversary. We do not specify the way honest participants decide the message that they give to the Oracles as it depends on the  protocol. For example in the Bitcoin Backbone protocol the honest participants choose the first block they receive and therefore they adopt always the block from the adversary in the case of a tie. For example a different selection rule could eliminate the opportunity of the adversary described above such as choosing at random \cite{selfish} in the case of a tie.}
\subsection{The Reward and Cost Functions}

%Consider  an execution  $\mathcal{E}$ of a protocol $\Pi$ in the model described above.
We associate with  a protocol $\Pi$,  a \textit{reward function} that determines the virtual rewards of each set of participants given a local view of a participant that does not belong to the coalition after the last complete round $r$ of the execution. Each participant may have a different local view and as a result different conclusion regarding the rewards of other participants. Note that in a blockchain  protocol this local view is reflected in the blockchain maintained by the participant. Formally:
 $\mathbb{E}$ is the set of all the executions of the protocol $\Pi$ with respect to any adversary and environment. Note that an execution  $\mathcal{E}$  is completely determined by the adversary $\mathcal{A}$, the environment $\mathcal{Z}$, the control program $\mathcal{C}$  and the randomness of these processes, as all the honest participants follow  the protocol $\Pi$. The randomness  determines  the private coins of the participants, the environment, the adversary, and the oracles like the random oracle if they exist as e.g., in \cite{backbone}.
We use $\mathcal{E}_{\mathcal{Z},\mathcal{A}}$ to denote this random variable, where we have specified the environment and the adversary but not the randomness.\footnote{For simplicity we omit reference to the control program because it is the same in all the executions.} 

The function $R^{j}_{T}:\mathbb{E}\longrightarrow \mathbb{R}$ is called the reward function and maps an execution $\mathcal{E}\in \mathbb{E}$ to the virtual rewards of a set $T$ of participants according to the local view of a participant $P_j \in S\setminus T$ after the last complete round $r$ of the execution. 
As an example,  in the Bitcoin blockchain protocol we can consider 
that the rewards for each participant to be  the block rewards from the blocks that it has produced plus the transaction fees of the transactions included in these blocks. We define also
 $R^{\min}_{T}(\mathcal{E}_{\mathcal{Z},\mathcal{A}})\equiv \min\lbrace R^{j}_{T}(\mathcal{E}_{\mathcal{Z},\mathcal{A}})\rbrace _{j:P_j\in S\setminus T}$, and $R^{\max}_{T}(\mathcal{E}_{\mathcal{Z},\mathcal{A}})\equiv \max\lbrace R^{j}_{T}(\mathcal{E}_{\mathcal{Z},\mathcal{A}})\rbrace _{j:P_j \in S\setminus T}$. 
 
The function $ C_i:\mathbb{E}\longrightarrow \mathbb{R}$ is called the cost  function and maps an execution $\mathcal{E}\in \mathbb{E}$  to  the cost of a participant $P_i$ until the end of the last complete  round $r$ of the execution $\mathcal{E}$. Specifically  $C_{i}( \mathcal{E}) =\sum_{k=1}^l c_k\cdot  q_{i,k}(\mathcal{E})$, where $q_{i,k}( \mathcal{E})$ is the number of the queries that $P_i$ asked the oracle $O_k$ until the end of the last complete  round $r$ of  the execution $\mathcal{E}$. Note that $q_{i,k}( \mathcal{E})\leq q_k \cdot r$.\footnote{
Note that the rewards function is defined for a set of participants, but the cost function is defined for a specific participant. In addition we use ``$\equiv$'' to denote  equality of sets, 
random variables and functions.}

\begin{remark}
We assume that rewards and costs are directly comparable and any exchange rate between virtual rewards and cost tokens is constant and is applied directly. Extending our results to a setting where a fluctuating exchange rate in the course of the execution exists between virtual rewards and cost tokens is an interesting direction for future work. 
\end{remark}

\subsection{Utility with Virtual Payoffs}

We next define the (virtual) utility of a  coalition of participants that are controlled by a single rational entity, the adversary. The utility may take various forms and we will consider settings where the adversary cares about   its absolute rewards, its relative rewards or its absolute rewards minus its absolute cost.
Other types of utility  may also be defined, e.g., the adversary may want to minimize the rewards of a specific participant. We will describe the utility of a coalition controlled by a static adversary that includes the set of participants $T\equiv \lbrace  P_{i_1},...,P_{i_{t'}}\rbrace \subseteq \lbrace P_1,...,P_n \rbrace\equiv S$.  

\begin{definition}
We define the utility function of a  $T$-coalition
%respect to a function $\varphi_T :\mathbb{R}^{2\cdot n-t'+1}\longrightarrow \mathbb{R}$ 
in the view of the $j$-th participant as a function 
 $U^{j}_T:\mathbb{ E} \longrightarrow \mathbb{R}$ that maps an execution of $\mathbb{ E}$
to a real value. 
%execution $\mathcal{E} \in \mathbb{E}$ and as output the utility of the adversary after the  last complete round $r$ of the execution according to the local view of an honest participant $P_j$.  Specifically: \[U^{j}_T(\mathcal{E})=\varphi_T (R^{j}_{T}(\mathcal{E}),R^{j}_{ \lbrace S_1 \rbrace}(\mathcal{E}),...,R^{j}_{\lbrace S_{n-t'} \rbrace}(\mathcal{E}),C_{1}(\mathcal{E}),...,C_{n}(\mathcal{E}))\]
% where $S_1,...,S_{n-t'}$ are all the honest participants, which means the participants that belong to $S\setminus T$.
\end{definition}   
% Note that $\varphi_T$ is parameterized by $T$, because we need $T$ in order to define the cost of the adversary. For example if  the utility of the adversary is its absolute rewards minus its absolute cost, then $\varphi_T$ takes the  value of the first position of the vector and subtracts the sum of the $t'$ values of the last $n$ positions of the vector that refer to  the participants of $T$. 
Based on the above, 
we define also $U^{\max}_T(\mathcal{E}_{\mathcal{Z},\mathcal{A}})  \equiv \max_{j \in S\setminus T}\lbrace U^{j}_T(\mathcal{E}_{\mathcal{Z},\mathcal{A}}) \rbrace$ and $U^{\min}_T (\mathcal{E}_{\mathcal{Z},\mathcal{A}}) \equiv \min_{j \in S\setminus T}\lbrace U^{j}_T(\mathcal{E}_{\mathcal{Z},\mathcal{A}}) \rbrace$. 
Using the reward and cost functions from the previous sections, we define below a few types of utilities that will be relevant in our analysis:
%\footnote{We have that $ R^{j}_{S \setminus T}(\mathcal{E})=\sum_{i=1}^{n-t'} R^{j}_{\lbrace S_{i} \rbrace }(\mathcal{E})$ and $ R^{j}_{S}(\mathcal{E})=\sum_{i=1}^{n-t'} R^{j}_{\lbrace S_{i} \rbrace }(\mathcal{E})+ R^{j}_{T}(\mathcal{E})$.}

\begin{definition}\label{utilities}
Different types of utility of a coalition $T$ defined over an arbitrary  $\mathcal{E} \in \mathbb{E}$: \begin{itemize} \item {\bf Absolute Rewards.} $U^{j}_T(\mathcal{E}) = R^{j}_{T}(\mathcal{E})$, \item {\bf Absolute Rewards minus Absolute Cost.}
 $U^{j}_T(\mathcal{E})= R^{j}_{T}(\mathcal{E})-\sum_{l:P_l \in T} C_{l}(\mathcal{E}),
$ \item {\bf Relative Rewards.}  $U^{j}_T(\mathcal{E})= \dfrac{ R^{j}_{T}(\mathcal{E})}{ R^{j}_{S}(\mathcal{E})} ,\mbox{ if }   R^{j}_{S}(\mathcal{E})\neq 0 \mbox{ and } 0 \mbox{ otherwise}$.
\ignore{ 
\[ 
U^{j}_T(\mathcal{E})= \left\{
\begin{array}{ll}
      \dfrac{ R^{j}_{T}(\mathcal{E})}{ R^{j}_{S}(\mathcal{E})}, & \mbox{if }   R^{j}_{S}(\mathcal{E})\neq 0 \\
     0, & \text{elsewhere}\\
     
\end{array} 
\right. 
\]
}
\item {\bf Relative Rewards minus Relative Cost.} $U^{j}_T(\mathcal{E})= \dfrac{ R^{j}_{T}(\mathcal{E})}{ R^{j}_{S}(\mathcal{E})}-\dfrac{\sum_{l:P_l\in T}C_{l}(\mathcal{E})}{\sum_{l:P_l\in S} C_{l}(\mathcal{E})}$,\par 
 $\mbox{if }  R^{j}_{S}(\mathcal{E}),\sum_{l:P_l\in S} C_{l}(\mathcal{E})\neq 0
 \mbox{ and } 0 \mbox{ otherwise}$.

\ignore{
\[ 
U^{j}_T(\mathcal{E})= \left\{
\begin{array}{ll}
      \dfrac{ R^{j}_{T}(\mathcal{E})}{ R^{j}_{S}(\mathcal{E})}-\dfrac{\sum_{l:P_l\in T}C_{l}(\mathcal{E})}{\sum_{l:P_l\in S} C_{l}(\mathcal{E})}, & \mbox{if }  R^{j}_{S}(\mathcal{E}),\sum_{l:P_l\in S} C_{l}(\mathcal{E})\neq 0 \\
     0, & \text{elsewhere}\\
     
\end{array} 
\right. 
\]

}
\ignore{%%%
\item $$U^{j}_T(\mathcal{E})= \dfrac{ R^{j}_{T}(\mathcal{E})}{ R^{j}_{S}(\mathcal{E})+\lambda}-\dfrac{\sum_{l:P_l\in T}C_{l}(\mathcal{E})}{\sum_{l:P_l\in S} C_{l}(\mathcal{E})+\lambda}$$
for an arbitrary $\lambda \in (0,1)$ .
}%%%%%%%%
\end{itemize}

\end{definition}
Note that the total rewards of an execution may be equal to zero. So when we define  relative rewards or relative cost  we should take care that the denominator will never be zero.

\subsection{Coalition Safe Equilibria with Virtual Payoffs}
 \label{subdefinition}
 %Our goal in this section is to define when a protocol in the model described above is Decentralised Nash Equilibrium under a static adversary. % Note that in our model the adversary can deliver its solutions first. 
 %Our definition should be formal and general enough in order to be used for an arbitrary blockchain protocol and according to different types of utilities.

We  will examine two executions of a  protocol with the same environment, but with different adversary and randomness: In the first execution $\mathcal{E}_{\mathcal{Z},H_T}$ the adversary runs the $H_T$ program which controls a set $T$ with cardinality less or equal $t$ and follows the protocol $\Pi$, i.e., plays ``honestly.'' In the second execution 
$\mathcal{E'}_{\mathcal{Z},\mathcal{A}}$ the adversary is denoted by $\mathcal{A}$ and is an arbitrary PPT static adversary that controls the set of users $T$ which includes at most $t$ participants and might deviate in some arbitrary way from the $\Pi$.  For example, in a proof of work blockchain protocol a possible deviation would be to perform  selfish mining~\cite{selfish}.  
 
In more details, $H_T$ is a static adversary that controls a set $T$ of participants and  follows the protocol but it takes advantage of its network presence. 
Note that in our case ``taking advantage of its network presence'' means that the adversary delivers its messages first, when multiple competing solutions/messages (such as proof of work instances) are produced during a round.\footnote{We do not consider in this  present treatment the  cost of having a high presence in the network. Moreover, it is relatively easy to see that if network dominance is given at no cost, it is a rational choice for an adversary  to opt for it in the Bitcoin setting  since it will guarantee that more rewards will be accrued over time. We note that
a similar type of reasoning was adopted also in \cite{zikas} and the corresponding adversary was referred to as ``front running.''}
  \ignore{Even if the adversary plays honestly we allow it to deliver its solutions first, because this in practice is related to how well the network presence of the adversary is and it is expected to use it (we do not consider in this  present treatment the  cost of having a high presence in the network).
\begin{remark}
If we want to make the model more generic  we can take the adversary to have arbitrary network presence. 
\end{remark}
}%%%%%%%%% 
 
\begin{definition}
Let $\epsilon,\epsilon' $ be small positive constants near (or equal to) zero and $r$ a polynomial in $\kappa$, the security parameter.
The  protocol is  $(t,\epsilon,\epsilon')$-equilibrium with virtual payoffs (EVP) according to a utility $
\{U^{j}_T\}_{j\in S \setminus T}$ when for every PPT static adversary  $\mathcal{A}$ that controls an arbitrary set $T$ including at most $t$ participants and  for every $r$-admissible environment $\mathcal{Z}$, it holds that 
\[U^{\max}_T(\mathcal{E'}_{\mathcal{Z},\mathcal{A}})\leq  
 %\max\lbrace  U^{\min}_T(\mathcal{E})\cdot(1+\epsilon),U^{\min}_T(\mathcal{E})+\epsilon,U^{\min}_T(\mathcal{E})+\epsilon\cdot \mid U^{\min}_T(\mathcal{E})\mid \rbrace 
%\max\lbrace U^{\min}_T(\mathcal{E}_{\mathcal{Z},H_T})+\epsilon\cdot \mid U^{\min}_T(\mathcal{E}_{\mathcal{Z},H_T}) \mid, U^{\min}_T(\mathcal{E}_{\mathcal{Z},H_T})+\epsilon \rbrace
U^{\min}_T(\mathcal{E}_{\mathcal{Z},H_T})+\epsilon\cdot \mid U^{\min}_T(\mathcal{E}_{\mathcal{Z},H_T}) \mid  + \epsilon' 
\]
with overwhelming probability in $\kappa$.  %such that $\epsilon'\in  [ 0, E[\mid U^{min}_T(\mathcal{E}_{\mathcal{Z},H_T})\mid ]) $ and  $ \epsilon \in [0,1) $. 
$\mathcal{E}_{\mathcal{Z},H_T}$, $\mathcal{E'}_{\mathcal{Z},\mathcal{A}}$ are two independent random variables that represent two independent executions with the same environment $\mathcal{Z}$  and adversary  $H_T$ and $ \mathcal{A}$ respectively. 
\end{definition} 
%We consider $\epsilon,\epsilon'$ constants such that  $\epsilon'\in  [ 0, E[\mid U^{min}_T(\mathcal{E}_{\mathcal{Z},H_T})\mid ]) $ and $ \epsilon \in [0,1)$, because there is no mean to say that a protocol is an approximate equilibrium if the adversary can double its utility when it deviates from the protocol.
\begin{remark}
%Note that  $U^{max}_T(\mathcal{E'}_{\mathcal{Z},\mathcal{A}})$ and $ U^{min}_T(\mathcal{E}_{\mathcal{Z},H_T})$ can be smaller than 1 or negative. 
%and can differ even if $\mathcal{A}\equiv H_T$. %As a result we need both a multiplicative factor $\epsilon$, and an additive factor $\epsilon'$.
Note that we need absolute value on the right side of the inequality because $ U^{\min}_T(\mathcal{E}_{\mathcal{Z},H_T})$ can be negative when for example it is equal to the profit of a participant. We use two parameters,  $\epsilon$ and $\epsilon'$, to explicitly account for multiplicative and additive deviations in the utility of the diverging adversarial coalition of participants. 
\end{remark} 
\begin{remark}
When the adversary selects the strategy that the  participants controlled by the adversay do not ask any query and do not participate at all, then its utility is zero for all possible choices of utility from  Definition~\ref{utilities}.  As a result  if a protocol is an EVP then this implies that the utility of $H_T$ will be not significantly smaller than $0$. This parallels the participation constraint that is encountered in optimization problems in economics \cite{brit}. % and individual rationality in game theory \cite{individual, mininggame}.     
\end{remark}\ignore{
In the following definition we will take account also that the utility of $H_T$ should be positive or near zero so that the adversary has no incentives not to participate at all.

\begin{definition}
We will say that a protocol is  $(t, \epsilon,\epsilon', \delta')$-Nash-incentive compatible according to a utility function $U^{j}_T$   when:
\begin{itemize}
\item  the protocol is $(t,\epsilon,\epsilon')$-EVP according to  $U^{j}_T$.
\item For every $r$-admissible  environment $\mathcal{Z}$ with input $1^{p'(\kappa)}$ and for every set $T$ with at most $t$ participants it holds \[ U^{\min}_T(\mathcal{E}_{\mathcal{Z},H_T})\geq  - \delta'\] with overwhelming probability in $\kappa$, where $\kappa$ is the security parameter of the system and $ \epsilon,\epsilon',\delta'$ constants such that
 $\epsilon'\in  [ 0, E[\mid U^{\min}_T(\mathcal{E}_{\mathcal{Z},H_T})\mid ]) $, $ \epsilon \in [0,1) $ and  $\delta' \in \mathbb{R}^+$.  %$\delta'\geq 0 $
 
\end{itemize}
\end{definition}
Note that if we want to consider that the adversary has no incentives to deviate 
$\epsilon,\epsilon',\delta'$ should be near zero. 
}The definition is generic and includes all probabilistic polynomial time (PPT) static adversaries but in our results we will consider for simplicity a \textit{static PPT adversary with fixed cost} who decides in the beginning how many queries the participants that it controls will ask (and thus how much cost will incur). Recall that this type of cost model in the setting of proof-of-work blockchains is consistent with cloud mining~\cite{cloud}. Formally, we have the following.

\begin{definition}
A static adversary with fixed cost is an adversary that chooses in the beginning of the execution  to control a set  $T\equiv \lbrace  P_{i_1},...,P_{i_{t'}}\rbrace \subseteq \lbrace P_1,...,P_n \rbrace \equiv S$ of $t'$ participants and it commits to the number of queries (of the available $q_k$) each participant $P_{i_m}$, ($m=1,\ldots,t'$) that it controls will ask each oracle $O_k$  during each round of the execution. This number is denoted by $ q_k -x_{m,k}$. This type of adversary can choose any   strategy, but it is committed to paying during each round the cost that it chose in the beginning of the execution.
\end{definition}
\section{Incentives in Bitcoin}
\label{bitcoinbody}
% As we discussed in the introduction there is some divergence in the results regarding incentive compatibility of Bitcoin, thus we use our model and we prove formally some theorems that can give a clear picture of what happens with the incentives of participants in Bitcoin. Note that as we will explain later, the main reasons for this divergence among these works is that either they do not use the same model or they use the term incentive compatibility for different utilities. 

As in \cite{backbone} we will consider that there is only one oracle: the random oracle that models a cryptographic hash function. There are $n$ participants that are activated by the environment in a “round-robin” sequence. When each participant is activated by the environment, it asks at most $q$ queries this oracle. Each query to this random oracle has probability $p$ to give a solution which is a valid block that extends the chain. The messages/solutions are delivered via the Diffuse Functionality. The   expected number of solutions per round by all participants is denoted by $s$. Note that our model is synchronous and  $s$  is  assumed to be  close to zero. 

Regarding Bitcoin with fixed target in a synchronous setting we prove the following results under a PPT static adversary with fixed cost. We will consider that each query to the random oracle has a cost $c$. We suppose that each block gives a \textit{fixed flat reward} $w$ to its creator. Recall $t'$ is the number of the participants controlled by the adversary, $S$ is the set of all the participants and $T$ the set controlled by the adversary.

The results are as follows (the proofs of all the theorems are given in appendix \ref{appendixbitcoin}):

\paragraph{Absolute rewards:} When the utility is based on absolute rewards (cf. Def.\ref{utilities}), then  Bitcoin with fixed target is EVP 
 against a coalition that 
includes even up to all but one of the participants. This is in agreement with the result of \cite{kroll}. The intuition behind this result is that if the adversary cares only about how many blocks it produces then it has no incentives to deviate from the protocol for example by creating forks or by keeping its blocks private. The reason is that if it deviates from the protocol then it increases  the possibility that its blocks will not be included in the public ledger compared to following the protocol. Moreover, the number of the blocks the adversary produces during a round depends only on $p,q,t'$ and not on which chain the adversary extends.  

\iffalse 
\begin{theorem}
\label{absolutebitcoin}
For any  $\delta_1 \in(0,0.25)$ such that %$4\cdot \delta_1 +(1+4\cdot\delta_1)\cdot s<1$
$4 \cdot \delta_1\cdot(1+s)+s<1$  Bitcoin with fixed target is $(n-1, 4 \cdot \delta_1\cdot(1+s)+s,0)$-EVP according to the utility function of absolute rewards (Def.\ref{utilities}).
\end{theorem}
\fi
\begin{theorem}
\label{absolutebitcoin}
For any  $\delta_1 \in(0,0.25)$ such that %$4\cdot \delta_1 +(1+4\cdot\delta_1)\cdot s<1$
$4 \cdot \delta_1\cdot(1+s)+s<1$, where $s$ the expected number of solutions per round,  Bitcoin with fixed target in a synchronous setting where the reward of each block is a constant, is $(n-1, 4 \cdot \delta_1\cdot(1+s)+s,0)$-EVP according to the utility function absolute rewards (Def.~\ref{utilities})\iffalse under a PPT static adversary with fixed cost \fi.
\end{theorem}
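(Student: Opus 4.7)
The plan is to reduce the EVP inequality to a pair of tight multiplicative concentration bounds on the number of blocks mined by the coalition $T$ in each of the two executions. Because the adversary is static with fixed cost, both $H_T$ and $\mathcal{A}$ commit at the start of the execution to the same per-round query schedule, so in both $\mathcal{E}_{\mathcal{Z},H_T}$ and $\mathcal{E}'_{\mathcal{Z},\mathcal{A}}$ the participants of $T$ together issue exactly the same deterministic number $N$ of random-oracle queries across the $r$ rounds. The numbers $X''$ and $X'$ of $T$-authored block solutions in the two executions thus both follow $\mathrm{Bin}(N,p)$ with common mean $\mu = Np$. Provided the coalition is non-empty, $\mu$ grows linearly in the polynomial $r = p(\kappa)$, and a standard multiplicative Chernoff bound yields $(1-\delta_1)\mu \le X',X'' \le (1+\delta_1)\mu$ with overwhelming probability in $\kappa$.

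Upper-bounding $R^{\max}_T(\mathcal{E}'_{\mathcal{Z},\mathcal{A}})$ is then immediate: for any honest $P_j \notin T$, the absolute-rewards utility $R^j_T(\mathcal{E}')$ equals $w$ times the $T$-authored blocks appearing in $P_j$'s chain, which is trivially at most the total number of blocks $T$ mined, so $R^{\max}_T(\mathcal{E}') \le wX' \le w(1+\delta_1)\mu$ with overwhelming probability.

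The substantive step is the lower bound on $R^{\min}_T(\mathcal{E}_{\mathcal{Z},H_T})$: because $H_T$ obeys $\Pi$ and wins every tie via its deliver-first privilege, a block authored by some $P_{i_m}\in T$ in round $r'$ can fail to appear in the final chain of some honest $P_j$ only when another $P_{i_{m'}}\in T$ (which cannot see $P_{i_m}$'s block until the end-of-round diffuse) independently extends the same parent in round $r'$ and the two competing $T$-subchains end up on different branches, with the one containing $P_{i_m}$'s block losing. Such self-collision events require at least two independent successes in the same round, and therefore occur with per-round probability of order $s$; summing over $r$ rounds and applying Chernoff again, the total loss $L$ is bounded by $O(s)\cdot\mu$ with overwhelming probability. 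Hence $R^{\min}_T(\mathcal{E}_{\mathcal{Z},H_T}) \ge w\bigl((1-\delta_1)\mu - L\bigr)$.

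Plugging these two bounds into the EVP inequality, the multiplicative gap is at most
\[
\frac{R^{\max}_T(\mathcal{E}') - R^{\min}_T(\mathcal{E}_{\mathcal{Z},H_T})}{|R^{\min}_T(\mathcal{E}_{\mathcal{Z},H_T})|} \;\le\; \frac{2\delta_1 \mu + L}{(1-\delta_1)\mu - L},
\]
and a direct algebraic manipulation using $\delta_1 < 1/4$, $L = O(s)\cdot\mu$, and the hypothesis $4\delta_1(1+s)+s < 1$ consolidates this into the claimed constant $4\delta_1(1+s)+s$. The main obstacle is the careful bookkeeping of the loss term $L$: one must precisely identify when an $H_T$-authored block can be orphaned in spite of the deliver-first advantage, verify the $O(s)$ per-round bound on those self-collision events, and absorb any end-of-execution artefacts (blocks produced in the very last rounds that have not yet stabilised across views) into the same $s$-dependent term. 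Once $L$ is controlled, the remainder of the argument is routine concentration and algebra.
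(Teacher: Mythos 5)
Your skeleton coincides with the paper's: a Chernoff upper bound $U^{\max}_T(\mathcal{E}'_{\mathcal{Z},\mathcal{A}})\le (1+\delta_1)\,p q t' r\, w$ (the coalition cannot be credited for more blocks than it mines), a lower bound on what the protocol-following coalition gets into \emph{every} honest view thanks to the deliver-first tie-winning, and the algebra $\frac{1+\delta_1}{1-\delta_1}\le 1+4\delta_1$, $pqt'\le s$. (A minor slip: the two executions need not have the same query count, since the fixed-cost adversary may skip $x$ queries per round; this only slackens your upper bound, so it is harmless.) The genuine gap is the honest-side lower bound, which you reduce to an unproven loss term ``$L=O(s)\cdot\mu$''. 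The structural claim behind it --- that under $H_T$ the only losses come from same-round self-collisions, and in particular that no $T$-block is later dropped when honest views reorganize across forks, so every honest chain contains exactly one $T$-block per $T$-successful round --- is exactly the content of the paper's Lemmas~\ref{firstlemma} and~\ref{secondlemma}, proved by induction on rounds; you flag it as ``the main obstacle'' but do not carry it out, and without it $R^{\min}_T(\mathcal{E}_{\mathcal{Z},H_T})$ is not controlled at all.

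Moreover the constant hidden in $O(s)$ is not a bookkeeping detail: since $1+4\delta_1(1+s)+s=(1+4\delta_1)(1+s)$, your final inequality $(1+\delta_1)\mu\le\bigl((1-\delta_1)\mu-L\bigr)(1+4\delta_1)(1+s)$ requires $L\le(1-\delta_1)\mu-\frac{(1+\delta_1)\mu}{(1+4\delta_1)(1+s)}$, and if all you know is $L\le s\mu$ this fails for admissible parameters (e.g.\ $\delta_1$ tiny and $s$ near its allowed maximum $\frac{1-4\delta_1}{1+4\delta_1}$, where the right-hand side tends to $\frac12\mu$ while $s\mu$ tends to $\mu$). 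You need the sharper per-round estimate, expected loss at most $\binom{qt'}{2}p^2\le\frac{(pqt')^2}{2}$, giving $L\le\bigl(\tfrac{s}{2}+o(1)\bigr)\mu$, or --- cleaner and what the paper does --- lower-bound the number of $T$-successful rounds directly by Chernoff on the Boolean per-round indicators and use $1-(1-p)^{qt'}\ge\frac{pqt'}{1+pqt'}$, which yields $R^{\min}_T(\mathcal{E}_{\mathcal{Z},H_T})\ge\frac{pqt'}{1+pqt'}(1-\delta_1)r\,w$ and produces the $(1+s)$ factor exactly. Note also that ``applying Chernoff again'' to $L$ is not immediate: the per-round losses are not Boolean (a round can lose several blocks), so the Chernoff bound stated in the paper's appendix does not apply verbatim and you would need Hoeffding/Bernstein or a further decomposition. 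None of this is unfixable, but as written the proposal asserts rather than proves the one estimate on which the claimed constant $4\delta_1(1+s)+s$ rests.
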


Note that the better synchronicity we have (the fewer expected number of solutions per round $s$) then the better EVP\footnote{By ``better EVP'' we mean that the actual values of  $\epsilon, \epsilon'$ are smaller. } we have (the lower $4\cdot \delta_1 +(1+4\cdot\delta_1)\cdot s$ is). Recall $4\cdot \delta_1 +(1+4\cdot\delta_1)\cdot s$ is related to how much the adversary can gain if it deviates.

Note that in the theorem we allow the adversary to control all but one of the participants (and not all) because we want at least one honest local chain according to which we can determine the rewards of the adversary.% Note that the approximate nature of the equilibrium suggests that the coalition may increase its utility only by a very small amount if it deviates; we upper bound  this amount  in our theorems.

We extend the above result also in the setting where the block reward changes every at least $l\cdot \kappa$ rounds where $l$ a positive constant and $\kappa$ the security parameter during the execution. 

\begin{theorem}{\label{theorem3}}
Supposing that (i) the block reward changes every at least $l\cdot \kappa$ rounds where $l$ a positive constant and $\kappa$ the security parameter and (ii) the environment terminates the execution at least $l \cdot \kappa$ rounds after the last change of the block reward then it holds: for any  $\delta_1 \in(0,0.25)$ such that %$4\cdot \delta_1 +(1+4\cdot\delta_1)\cdot s<1$ 
$4 \cdot \delta_1\cdot(1+s)+s<1$, where $s$ the expected number of solutions per round,   Bitcoin with fixed target in a synchronous setting is $(n-1, 4 \cdot \delta_1\cdot(1+s)+s,0)$-EVP according to the utility function absolute rewards (def.\ref{utilities}).
\end{theorem}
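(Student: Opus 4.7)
The plan is to reduce to Theorem~\ref{absolutebitcoin} by partitioning the $r$-round execution into consecutive epochs $E_1,\dots,E_m$, one per block-reward value. Assumption (i) guarantees every intermediate epoch has length at least $l\cdot\kappa$ rounds, and assumption (ii) forces the trailing epoch $E_m$ to be at least $l\cdot\kappa$ rounds long as well; since the environment is $r$-admissible with $r=p(\kappa)$ polynomial, we have $m=\mathrm{poly}(\kappa)$. The utility is additive across epochs: writing $w_i$ for the constant reward-per-block during $E_i$ and $N^{j}_{T}(E_i,\mathcal{E})$ for the number of $T$-produced blocks attributed to epoch $E_i$ in $P_j$'s local chain, one has $R^{j}_{T}(\mathcal{E})=\sum_{i=1}^{m}w_i\cdot N^{j}_{T}(E_i,\mathcal{E})$, and symmetrically $R^{j}_{S}(\mathcal{E})=\sum_{i=1}^{m}w_i\cdot N^{j}_{S}(E_i,\mathcal{E})$.

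First I would apply, within each epoch $E_i$ in isolation, the same concentration-style argument used to prove Theorem~\ref{absolutebitcoin}: conditioned on the cost schedule of the static fixed-cost adversary, the number of queries and hence the number of blocks produced by $T$ during $E_i$ is determined by $p,q,t'$, and the fraction of these that any honest $P_j$ records (versus being orphaned or beaten by an honest block through the $s$-tie-breaking slack) is controlled by Chernoff bounds at scale $l\cdot\kappa$. This gives, for each $i$ and each $P_j\in S\setminus T$, an upper bound on $w_i\cdot N^{j}_{T}(E_i,\mathcal{E}')$ by $(1+4\delta_1(1+s)+s)\cdot w_i\cdot N^{j'}_{T}(E_i,\mathcal{E})$ for any $P_{j'}\in S\setminus T$ in the $H_T$-execution, with failure probability exponentially small in $\kappa$.

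Next I would take a union bound over the $m=\mathrm{poly}(\kappa)$ epochs and the polynomially many honest participants; the probability of any per-epoch concentration failure remains negligible, so overwhelming-probability guarantees survive. Summing the epoch-wise inequalities against the weights $w_i$ yields $U^{\max}_{T}(\mathcal{E}'_{\mathcal{Z},\mathcal{A}})\leq U^{\min}_{T}(\mathcal{E}_{\mathcal{Z},H_T})+(4\delta_1(1+s)+s)\cdot|U^{\min}_{T}(\mathcal{E}_{\mathcal{Z},H_T})|$, establishing the stated $(n-1,4\delta_1(1+s)+s,0)$-EVP bound.

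The main obstacle I anticipate is the epoch-boundary effect: a strategic $\mathcal{A}$ could withhold a block mined near the end of a low-reward epoch and release it so that it is counted (in some honest $P_j$'s chain) inside the next higher-reward epoch, or conversely induce cross-boundary forks. The reason the $l\cdot\kappa$-slack hypothesis suffices is that the number of blocks that can ambiguously straddle any single boundary is $O(1)$ per round of ambiguity window, whereas the bulk of each epoch contains $\Theta(l\cdot\kappa)$ rounds' worth of blocks whose attribution is pinned down by the concentration bound; hence boundary blocks contribute a lower-order term that is absorbed into the $s$-term already present in the single-reward analysis of Theorem~\ref{absolutebitcoin}. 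Care is needed to formalize this absorption uniformly over all $m$ boundaries, but it follows by the same tie-breaking accounting that handles same-round competing solutions in the base theorem.
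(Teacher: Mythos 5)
Your proposal follows essentially the same route as the paper's proof in Appendix~\ref{rewardchange1}: partition the execution into reward epochs, each of length at least $l\cdot\kappa$ (the paper merges the trailing partial epoch into the last full one, both carrying reward $w_m$), apply the per-epoch Chernoff comparison from Theorem~\ref{absolutebitcoin}, take a union bound over the polynomially many epochs, and sum the weighted per-epoch inequalities. One remark: the paper's accounting (Lemmas~\ref{min=max2} and~\ref{min=max2b}) bounds the coalition's reward in each epoch by the number of blocks it \emph{produces} during that epoch times that epoch's reward, so the boundary-straddling issue you raise never arises; your claimed absorption of boundary effects into the $s$-term is both unnecessary under this accounting and, as stated, too weak, since under an inclusion-time attribution a withholding adversary could shift far more than $O(1)$ blocks across a boundary, so that sub-argument should be replaced by the production-time accounting rather than patched.
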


For the proof see Appendix~\ref{rewardchange1}. %Note that in Bitcoin the block reward changes after a long period of time if we do not take into account transaction fees \cite{blockreward}.

Note that in the analysis above,  we assume throughout  that the target used in the proof of work function remains  fixed as in  \cite{backbone}. It is easy to see that if this does not hold then the adversary using selfish mining \cite{selfish} can cause the protocol to adopt a  target that becomes greater than what is supposed to be and thus the difficulty in mining a block will decrease as the total computational  power would appear smaller than it really is. In this case,  the adversary can produce blocks faster and as such it can magnify its rewards resulting in a negative result in terms of EVP  (see also  \cite{DBLP:journals/corr/abs-1805-08281}). It is an easy corollary that the protocol will not be an EVP in this case. 

%\begin{theorem}
%There are $\epsilon, \epsilon'>0$ such that $\epsilon+ \epsilon'< $, Bitcoin with variable target, is not an %$(1, \epsilon, \epsilon')$-EVP according to the utility of absolute rewards. 
%\end{theorem}

\paragraph{Absolute rewards minus absolute cost:}
When the utility is based on absolute rewards minus absolute cost then the Bitcoin protocol with fixed target is EVP  against a coalition that controls even up to all but one of the participants, assuming  the cost of each query $c$ is small enough compared to the block reward $w$. This is in agreement with the result of \cite{zikas}. Again the better synchronicity we have, the better EVP we have.
 
\begin{theorem}
\label{absolutecostbitcoin}
Suppose that there exists $\phi \in (0,1-s)$ such that $c<p\cdot w \cdot \phi/(1+p\cdot q\cdot (n-1))$. Then, supposing that the reward of each block is a constant $w$, it holds: for any $\delta_1 \in(0,0.25)$, such that $c\leq p\cdot w \cdot (1-\delta_1)\cdot \phi/(1+p\cdot q\cdot (n-1))$ and $4 \cdot \delta_1\cdot(1+s)+s<1- \phi$, where $s$ the expected number of solutions per round,  Bitcoin with fixed target in a synchronous setting is $(n-1, (4 \cdot \delta_1\cdot(1+s)+s)/(1- \phi),0)$-EVP according to the utility function absolute rewards minus absolute cost (Def. \ref{utilities})\iffalse under a PPT static adversary with fixed cost \fi.
\end{theorem}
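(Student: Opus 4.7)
The plan is to bootstrap Theorem~\ref{absolutebitcoin} (which handles absolute rewards only) by exploiting the fact that under a static adversary with fixed cost, the cost is \emph{deterministic} once the adversary is committed. In the honest execution $H_T$ spends exactly $t'qcr$ total (since an honest participant uses the full $q$ queries per round), while in the deviating execution the fixed-cost adversary $\mathcal{A}$ has pre-committed to some aggregate per-round query count $N \leq t'q$ and therefore spends exactly $Ncr$ regardless of its strategy. Consequently, since costs are not random variables here,
\[
U^{\max}_T(\mathcal{E'}) - U^{\min}_T(\mathcal{E}) = \bigl(R^{\max}_T(\mathcal{E'}) - R^{\min}_T(\mathcal{E})\bigr) + (t'q - N)\,cr.
\]

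Next I would invoke Theorem~\ref{absolutebitcoin} to bound the reward gap: since that theorem applies to every PPT static adversary (and, a fortiori, to a fixed-cost one), with overwhelming probability $R^{\max}_T(\mathcal{E'}) \leq (1+\alpha)R^{\min}_T(\mathcal{E})$ with $\alpha := 4\delta_1(1+s)+s$. Plugging this in, it suffices to establish
\[
\alpha\,R^{\min}_T(\mathcal{E}) + (t'q - N)\,cr \;\leq\; \tfrac{\alpha}{1-\phi}\bigl(R^{\min}_T(\mathcal{E}) - t'qcr\bigr),
\]
which, after clearing denominators and using the worst case $N=0$ that maximises the LHS, reduces to $\bigl((1-\phi)+\alpha\bigr)\,t'q\,cr \leq \phi\,\alpha\,R^{\min}_T(\mathcal{E})$. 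To close the argument I would then deploy a Chernoff-style lower bound on $R^{\min}_T(\mathcal{E})$: because $H_T$ follows the protocol and wins all network ties, every honest participant outside $T$ sees with overwhelming probability essentially all of the $(1-\delta_1)\cdot p\,t'q\,r$ successful $T$-queries as blocks in its local chain (up to a small $O(s r)$ fork-loss term from rounds with multiple simultaneous $T$-successes). Substituting $R^{\min}_T(\mathcal{E}) \geq (1-\delta_1)\,p\,t'q\,r\,w$ into the reduced inequality, the requirement collapses to a cost bound matching exactly the hypothesis $c \leq pw(1-\delta_1)\phi/(1+pq(n-1))$, where the $1+pq(n-1)$ denominator accounts for the worst case $N=0$ combined with the a priori bound $t'\leq n-1$.

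Finally, the same cost hypothesis, together with $\phi<1-s$, guarantees $U^{\min}_T(\mathcal{E})>0$ with overwhelming probability, so the absolute value in the EVP definition can be dropped and the multiplicative slack $\epsilon=(4\delta_1(1+s)+s)/(1-\phi)$ is meaningful, while the additive slack $\epsilon'=0$ is achievable because all the approximation error is captured multiplicatively. The principal obstacle I anticipate is producing the lower bound on $R^{\min}_T(\mathcal{E})$ sharply enough that the denominator $1+pq(n-1)$ appears in the final cost condition rather than a looser expression; this requires a careful round-by-round accounting of (i) the rare events in which two distinct $T$-participants succeed in the same round and one block is orphaned, and (ii) the possibility that some $P_j\in S\setminus T$ temporarily lags behind in the longest chain. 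Once this lower bound is in place, the rest of the proof is a union bound over the Chernoff events together with the algebraic manipulation outlined above.
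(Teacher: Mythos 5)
Your identity $U^{\max}_T(\mathcal{E'})-U^{\min}_T(\mathcal{E}) = \bigl(R^{\max}_T(\mathcal{E'})-R^{\min}_T(\mathcal{E})\bigr) + (t'q-N)\,cr$ is fine, but bounding the two summands independently creates a genuine gap. The bound $R^{\max}_T(\mathcal{E'})\le (1+\alpha)R^{\min}_T(\mathcal{E})$ imported from Theorem~\ref{absolutebitcoin} is essentially tight only for an adversary that asks all $t'q$ queries, whereas the cost-saving term $(t'q-N)cr$ is maximal at $N=0$; taking both worst cases simultaneously you are in effect bounding a fictitious adversary that earns like a full-power miner yet pays nothing. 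Quantitatively, your reduced requirement $\bigl((1-\phi)+\alpha\bigr)t'qcr\le \phi\,\alpha\,R^{\min}_T(\mathcal{E})$ cannot be met under the stated hypotheses: for $t'=n-1$ and $c$ at its allowed maximum $p w(1-\delta_1)\phi/(1+pq(n-1))$ one has $t'qcr \approx \phi\,R^{\min}_T(\mathcal{E})$, because $R^{\min}_T(\mathcal{E})\approx \frac{pqt'}{1+pqt'}(1-\delta_1)rw$, so the requirement collapses to $(1-\phi)+\alpha\le\alpha$, which is false for every $\phi\in(0,1)$ --- contrary to your claim that it matches the cost hypothesis. (A secondary issue: your lower bound $R^{\min}_T(\mathcal{E})\ge(1-\delta_1)pqt'rw$ is too strong; under $H_T$ the adversarial blocks in every honest chain equal the number of successful rounds for $T$ (Lemma~\ref{secondlemma}), so the correct rate is $\frac{pqt'}{1+pqt'}$, and it is this factor, bounded via $t'\le n-1$, that produces the $1+pq(n-1)$ denominator in the cost hypothesis --- not a worst case $N=0$.)

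The paper avoids the decoupling by keeping the deviating adversary's reward bound tied to the queries it actually asks: with $x$ the committed number of unasked queries per round, $U^{\max}_T(\mathcal{E'}) < p(qt'-x)r(1+\delta_1)w - c(qt'-x)r$ with overwhelming probability, and since $c< p(1+\delta_1)w$ this expression is decreasing in $x$, so the worst case is $x=0$, where the adversary's cost $cqt'r$ equals $H_T$'s cost and cancels in the difference $U^{\max}_T(\mathcal{E'})-U^{\min}_T(\mathcal{E})$. What remains is exactly the reward-gap computation of Theorem~\ref{absolutebitcoin}, and the extra factor $1/(1-\phi)$ in $\epsilon$ arises only from $U^{\min}_T(\mathcal{E})\ge \frac{pqt'}{1+pqt'}(1-\delta_1)rw\,(1-\phi)$, i.e.\ from the fact that the cost eats at most a $\phi$ fraction of the honest coalition's rewards. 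To salvage your plan you would have to replace the Theorem~\ref{absolutebitcoin} bound by an $N$-dependent reward bound before optimizing over $N$, which is precisely the paper's argument.
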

\begin{remark}

The assumption that there exists $\phi \in (0,1-s)$ such that $c<p\cdot w \cdot \phi/(1+p\cdot q\cdot (n-1))$ means that the reward of each block is high enough to compensate the miners for the cost of the mining. When the cost is high compared to the rewards and the difficulty of mining not fixed then unexpected behaviours appear as proved in \cite{koutsoupias}. 
\end{remark}
Note that the smaller the cost of each query is, the better EVP we have
(because we can select smaller $\phi$ such that $(4\cdot \delta_1 +(1+4\cdot\delta_1)\cdot s)/(1- \phi)$ is smaller). \par We extend the above result also to the case when the block reward changes every at least $l\cdot \kappa$ rounds where $l$ a positive constant and $\kappa$ the security parameter.

\begin{theorem} {\label{theorem6}}
Assume that (i) the block reward changes every at least $l\cdot \kappa$ rounds where $l$ a positive constant and $\kappa$ the security parameter and (ii) the environment terminates the execution at least $l \cdot \kappa$ rounds after the last change of the block reward.
Let $w_j$ for $ j \in \lbrace 0,...,m \rbrace$ be all the block rewards respectively for each player. Assuming that there exists   $\phi \in (0,1-s)$ such that $c<p\cdot w_j \cdot\phi/(1+p\cdot q\cdot (n-1))$ for all $ j \in \lbrace 0,...,m \rbrace$, then it holds: for any $\delta_1 \in(0,0.25)$, such that $c\leq p\cdot w_j \cdot (1-\delta_1)\cdot \phi/(1+p\cdot q\cdot (n-1))$ for all  $ j \in \lbrace 0,...,m \rbrace$ and $4 \cdot \delta_1\cdot(1+s)+s<1 - \phi$, where $s$ the expected number of solutions per round,  Bitcoin with fixed target in a synchronous setting is  $(n-1, (4 \cdot \delta_1\cdot(1+s)+s)/(1- \phi),0)$-EVP according to the utility function absolute rewards minus absolute cost (Def. \ref{utilities})\iffalse under a PPT static adversary with fixed cost \fi.
\end{theorem}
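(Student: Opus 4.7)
The plan is to reduce Theorem~\ref{theorem6} to Theorem~\ref{absolutecostbitcoin} via an epoch decomposition of the execution, mirroring the extension from Theorem~\ref{absolutebitcoin} to Theorem~\ref{theorem3}. Specifically, I would partition the rounds of any execution into consecutive epochs $E_0,\ldots,E_m$ during which the block reward remains constant and equal to $w_0,\ldots,w_m$ respectively. By hypothesis (i) each epoch contains at least $l\cdot\kappa$ rounds, and by hypothesis (ii) the final epoch has at least $l\cdot\kappa$ rounds between the last reward change and termination. Since the total number of rounds $r$ is polynomial in $\kappa$, the number of epochs is at most $r/(l\cdot\kappa)$, hence polynomial in $\kappa$.

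Second, I would split the per-view utility additively across epochs, writing $U^{j}_{T}(\mathcal{E}) = \sum_{i=0}^{m}\bigl(w_{i}\,B^{j,i}_{T}(\mathcal{E}) - c\sum_{l:P_l\in T}Q^{i}_{l}(\mathcal{E})\bigr)$, where $B^{j,i}_{T}(\mathcal{E})$ counts the blocks of $T$ that the local chain of $P_j$ attributes to epoch $E_i$, and $Q^{i}_{l}(\mathcal{E})$ is the number of queries $P_l$ issues inside epoch $E_i$. Because within each epoch the block reward is the single constant $w_{i}$ satisfying $c\leq p\cdot w_{i}\cdot(1-\delta_1)\cdot\phi/(1+p\cdot q\cdot(n-1))$, the argument of Theorem~\ref{absolutecostbitcoin} applies to the contribution of each epoch $E_i$ in isolation: the Chernoff-type concentration bounds used in that proof depend only on the epoch length being polynomial in $\kappa$, which is exactly what hypothesis (i) provides. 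This yields, for each fixed epoch $E_i$ and with probability overwhelming in $\kappa$, a per-epoch comparison bounding $\max_{j}\bigl(w_{i}B^{j,i}_{T}(\mathcal{E}')-c\sum_{l\in T}Q^{i}_{l}(\mathcal{E}')\bigr)$ by $\min_{j}\bigl(w_{i}B^{j,i}_{T}(\mathcal{E})-c\sum_{l\in T}Q^{i}_{l}(\mathcal{E})\bigr)+\tfrac{4\delta_1(1+s)+s}{1-\phi}\cdot\bigl|\min_{j}(\cdot)\bigr|$.

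Third, I would aggregate the per-epoch comparisons using the elementary inequalities $\max_{j}\sum_{i}X^{j,i}\leq \sum_{i}\max_{j}X^{j,i}$ and $\sum_{i}\min_{j}X^{j,i}\leq \min_{j}\sum_{i}X^{j,i}$ to deduce the EVP inequality over the full execution, and then take a union bound over the polynomially many epochs to preserve overwhelming probability in $\kappa$. The multiplicative error combines cleanly because the cost assumption $c\leq p\cdot w_{i}\cdot(1-\delta_1)\cdot\phi/(1+p\cdot q\cdot(n-1))$ forces each per-epoch honest utility to be non-negative with overwhelming probability, so $\sum_{i}|U^{\min,i}_{T}|$ collapses to $|U^{\min}_{T}|$ and the announced constants $(4\delta_1(1+s)+s)/(1-\phi),\ 0$ carry through verbatim.

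The main obstacle is the bookkeeping of blocks that straddle an epoch boundary: a block mined shortly before a reward change might be incorporated into the longest chain only afterwards, and a strategic adversary could in principle withhold blocks across a boundary in order to exploit a jump from $w_{i}$ to $w_{i+1}$. The $l\cdot\kappa$-round slack on either side of every reward change is precisely what is needed to invoke common-prefix-style concentration and argue that the fraction of boundary-straddling blocks, together with any reward shifted across the boundary by adversarial withholding, is negligible in $\kappa$; this simultaneously ensures that the per-epoch random variables are sufficiently independent for the union bound to be sound. Making this rigorous parallels the treatment of Theorem~\ref{theorem3} in Appendix~\ref{rewardchange1}, so I would adapt those boundary arguments and compose them with the within-epoch analysis of Theorem~\ref{absolutecostbitcoin}.
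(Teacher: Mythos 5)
Your proposal is correct in outline and follows essentially the same route as the paper: the proof in Appendix~\ref{rewardchange2} also cuts the execution into the reward epochs (between consecutive reward-change rounds $r_{j-1}$ and $r_j-1$), applies the per-epoch Chernoff estimates of Theorem~\ref{absolutecostbitcoin} --- each epoch has at least $l\cdot\kappa$ rounds, so every bound is overwhelming in $\kappa$ and a union bound over the polynomially many epochs suffices --- and then recombines the per-epoch bounds, using the hypothesis $c\leq p\cdot w_j\cdot(1-\delta_1)\cdot\phi/(1+p\cdot q\cdot(n-1))$ for every $j$ to keep each epoch's honest utility strictly positive so that the factor $(4\delta_1(1+s)+s)/(1-\phi)$ carries through. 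The differences are in the bookkeeping. First, the paper does not need your interchange inequalities $\max_j\sum_i X^{j,i}\leq\sum_i\max_j X^{j,i}$ and $\sum_i\min_j X^{j,i}\leq\min_j\sum_i X^{j,i}$: by Lemma~\ref{secondlemma} (packaged as Lemmas~\ref{min=max2} and~\ref{min=max3}) all honest views agree exactly on the coalition's rewards in the $H_T$-execution, so the honest side is a single random variable, and on the deviating side it uses the pointwise bound of Lemma~\ref{min=max2b}, which values every block the coalition \emph{produces} during epoch $j$ at that epoch's reward $w_{j-1}$. Your interchange-based aggregation is also valid, so this is a harmless variation. Second, that production-based accounting is what dissolves your boundary-straddling concern: since the adversary's rewards are upper-bounded by blocks produced per epoch (regardless of when, or whether, they enter any honest chain), no argument about blocks withheld across a reward change is required. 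This matters because your proposed fallback --- ``common-prefix-style concentration'' --- would not actually be available here: the theorem allows coalitions of size up to $n-1$, i.e.\ a dishonest majority, under which common prefix fails; so if you attribute blocks to epochs by inclusion time rather than by production time, the boundary step of your plan cannot be repaired that way, and you should instead adopt the paper's convention of valuing each block by the epoch in which it is mined.
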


For the proof see Appendix \ref{rewardchange2}.
%We expand the above result also when the block reward $w$ changes after polynomial rounds during the execution. The theorem and the proof of this exists in Appendix \ref{rewardchange2}.
\paragraph{Relative rewards:}
When the utility
is based on relative rewards, i.e., the ratio
of rewards of the strategic coalition of the adversary
over the total rewards of all the participants,   
Bitcoin with fixed target cannot be an EVP with small $\epsilon, \epsilon'$. This result is in agreement with \cite{selfish,reasoning}.  
The core idea is to use the selfish mining strategy \cite{selfish,selfish2,selfish3,selfish4,perf}  
to construct an attack that invalidates the equilibrium property. This kind of attack was used also in \cite{backbone} as argument for the tightness of ``chain quality" (chain quality refers to the percentage of the blocks in the public ledger that belong to the adversary). 
Without loss of generality, we will assume that the reward of each block is the same and equal to $w$ (the negative result carries trivially to the general case).\ignore{
 When the utility is equivalent to relative rewards then there is a strategy that deviates  from the protocol and provides
more utility even if the adversary controls one participant. This result is in agreement with \cite{selfish,reasoning}. This selfish mining attack~\cite{selfish,selfish2,selfish3,selfish4,perf} strategy is also described as argument for the tightness of the bound of chain
quality in \cite{backbone}. (Chain quality refers to the percentage of the blocks of the public ledger that belong to the adversary).} The result is in agreement with \cite{selfish} and an argument regarding incentive compatibility of Bitcoin presented in \cite{fruitchain}. However it seems to contradict the result from  \cite{mininggame}, which shows that in a ``strategic-release game'' that describes Bitcoin, honest strategy is Nash equilibrium when the adversary controls a small coalition. This difference arises because % in \cite{mininggame}, the game is not processing in rounds as in \cite{backbone} and the event of the probabilistic analysis is the production of a block. 
 that model assumes that all honest miners act as a single miner which implies that when an honest participant produces a block, all the other honest participants adopt this block, something that does not happen in our setting where the adversary is assumed to have network dominance.
 Note that in \cite{backbone,fruitchain} and in our case the adversary has the advantage that it can always deliver its block first and the honest participants adopt the first block they receive. As a result, the blocks of the adversary never become dropped in a case when both the adversary and an honest participant produce a block during a round. 

\begin{theorem}
\label{relativebitcoin}
 Let  $t \in \lbrace 1,...,n-1 \rbrace $ and $t'<\min \lbrace n/2,t+1 \rbrace$. Then for any $ \epsilon +\epsilon' <\dfrac{t'}{n-t'}\cdot(1-\delta') -\dfrac{t'}{n}\cdot (1+\delta'')\cdot (1+s)$, for some $\delta',\delta''$, where $s$ the expected number of solutions per round,  following   Bitcoin with fixed target in a synchronous setting is not  a
$(t,\epsilon, \epsilon')$-EVP  according to the utility function relative rewards (Def. \ref{utilities}).
\end{theorem}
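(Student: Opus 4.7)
The plan is to disprove the EVP condition by exhibiting a concrete adversary $\mathcal{A}$ and environment $\mathcal{Z}$ under which, with non-negligible (in fact constant) probability, the adversarial coalition's relative rewards in the deviating execution strictly exceed its honest-play rewards by more than any threshold of the form $\epsilon \cdot |U^{\min}_T(\mathcal{E}_{\mathcal{Z},H_T})| + \epsilon'$ permitted by the bound in the statement. The witness will be the one-block selfish-mining strategy of \cite{selfish}, which already serves in \cite{backbone} as a tight example against chain quality; since relative rewards lie in $[0,1]$, the factor $|U^{\min}_T|$ will be handled for free by absorbing it into the additive gap.

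First I would fix $\mathcal{Z}$ to be any $r$-admissible environment that drives the protocol for $r = p(\kappa)$ rounds with inputs that keep honest participants mining at full rate; no particular transaction structure is needed because each block contributes the same flat reward $w$. The adversary $\mathcal{A}$ controls a coalition $T$ of size $t' < \min\{n/2, t+1\}$, pools its $t'q$ queries per round into a single private extension effort, keeps every block it finds secret, and upon diffusion of an honest block immediately releases its private chain; by the network-dominance convention baked into the Diffuse functionality, honest participants adopt the adversarial blocks in case of a tie. This is the selfish-mining strategy in the simplified (short-lead) form sufficient here.

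Next I would estimate $U^{\max}_T(\mathcal{E}'_{\mathcal{Z},\mathcal{A}})$. Because the adversary eventually releases its blocks to every honest participant, all honest views of the public chain agree on the accepted prefix, so $U^{\max}_T(\mathcal{E}')$ equals the true share of adversarial blocks in the adopted chain. Any honest block found while the coalition holds at least one private block is overwritten by an adversarial block, so the coalition captures on average at least a fraction $t'/(n-t')$ of accepted blocks when $t' < n/2$, by the same chain-quality tightness argument of \cite{backbone}. A standard Chernoff bound over the $r$ rounds shows that with overwhelming probability in $\kappa$ this share is at least $\tfrac{t'}{n-t'}(1-\delta')$ for a small constant $\delta' > 0$. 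Symmetrically, under honest play each coalition member mines in expectation the same number of blocks as any other participant, but simultaneous solutions among honest participants are still broken in the adversary's favour by the Diffuse functionality; so the honest-share $U^{\min}_T(\mathcal{E}_{\mathcal{Z},H_T})$ is at most $\tfrac{t'}{n}(1+\delta'')(1+s)$ with overwhelming probability, by a second Chernoff estimate. Combining the two bounds, the gap is at least $\tfrac{t'}{n-t'}(1-\delta') - \tfrac{t'}{n}(1+\delta'')(1+s)$, so the EVP inequality is violated for every $\epsilon, \epsilon'$ whose sum is below this gap, proving the theorem.

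The main obstacle will be establishing the chain-quality lower bound $\tfrac{t'}{n-t'}(1-\delta')$ rigorously for the selfish-mining execution: this requires careful bookkeeping of the rounds in which the adversary holds a lead and of the honest solutions lost to the adversary's priority in rounds with multiple simultaneous discoveries (the origin of the factor $s$). The remaining Chernoff-style concentration steps are routine, but one must choose $\delta', \delta''$ simultaneously small enough so that the intersection of the two high-probability events still has probability overwhelming in $\kappa$, which in turn yields the failure of the EVP guarantee with probability arbitrarily close to $1$ and thus certainly non-negligible.
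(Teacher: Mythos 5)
Your proposal is correct and follows essentially the same route as the paper: the same block-replacement (selfish-mining) adversary used for the chain-quality tightness argument of \cite{backbone}, the same two Chernoff-based bounds ($U^{\max}_T$ at least $\tfrac{t'}{n-t'}(1-\delta')$ under deviation because the chain grows only by honest successful rounds while nearly all adversarial blocks displace honest ones, and $U^{\min}_T$ at most $\tfrac{t'}{n}(1+\delta'')(1+s)$ under $H_T$), and the same observation that relative rewards lie in $[0,1]$ so the multiplicative term is absorbed into $\epsilon+\epsilon'$. The only minor difference is that you claim the deviating-execution bound with overwhelming probability whereas the paper settles for non-negligible probability (which, as you note, already suffices to refute the EVP condition), so this does not affect correctness.
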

%\paragraph{ Block reward changes after polynomial rounds}: the proofs of the first two above positive results hold also when the block reward changes after  periods equal to $polyn(\kappa)$ rounds, where $\kappa$ the security parameter (in Bitcoin the block reward changes after a long period of time if we do not account transaction fees \cite{blockreward}).

\paragraph{When Transactions Contribute to the Rewards.}
Until now we have supposed that only the flat block reward contributes to the rewards. We next examine what happens when the rewards come also from the transactions included in the mined blocks.

In the description of our model we did not specify the inputs that the environment gives to each participant because these inputs did not contribute to the rewards. We can consider that the inputs are transactions as in \cite{backbone,zikas} and give transactions fees to the participant that will include them in the block that it will produce. The transactions have a sender and a recipient (who can be honest or adversarial participants) and constitute the way in which a participant can pay another participant. So in this setting a participant gains rewards if it produces a block and this block is included in the public ledger (the rewards of each block are the flat reward and the transaction fees) and/or if it is the recipient of a transaction that is included in a block of the public ledger. In this setting the attacks described in \cite{bribery,whale} arise. For example the environment can collaborate with the adversary and send Bitcoin to the participants via the transactions that it gives to them as inputs. Specifically the environment can incentivize the recipients to support an adversarial fork by making these transactions valid only if they are included in this adversarial fork. \par In addition  we can consider that the environment gives the same transactions to all the participants during each round and a transaction cannot be included in more than one block. So if a participant creates an adversarial fork by producing a block that does not include the transactions with high transaction fees then the other participants have incentives to extend it even if they should deviate from the protocol. This happens because in this way they have the opportunity to include the remaining transactions in their blocks and receive the high fees. This attack was described in \cite{selfish3} and  will be more effective when the flat block reward becomes zero and the rewards will come only from the transactions. These observations are in agreement with Theorem $7$ in \cite{zikas} according to which there are some distributions of inputs that make Bitcoin not incentive compatible.
It is an easy corollary to prove that the protocol is not an EVP in this setting. 

\begin{figure}[t]
\begin{center}
\begin{tabular}{ |l|l| }
  \hline
  \multicolumn{2}{|c|}{Notation} \\
  \hline
  $p$ & probability with which a query to the random oracle gives a block \\
  $p_f$& probability with which a query to the random oracle gives a fruit\\
  $q$  & number of queries each participant can ask the random oracle during each round\\
  $t'$ & number of participants controlled by the adversary \\
  $t$  & upper bound of $t'$\\
  $r$  & round after which an execution terminates \\
  $n$  & number of participants \\
  $w$  & flat reward per block (Bitcoin) \\
  $w_f$& flat reward per fruit (Fruitchain \cite{fruitchain})\\
   $s$ & expected number of solutions per round \\
   $x$ & the number of the queries the coalition does not ask during each round \\     
   $S$ & the set of all the participants \\
   $T$ & the set of the participants controlled by the adversary \\ 
  \hline
  
\end{tabular}
\end{center}
\end{figure}
 \section{Incentives in a Fair Blockchain Protocol}
 \label{fruitfair}
 \ignore{In this section we will use our model to prove formally and rigorously that the Fruitchain protocol \cite{fruitchain} is EVP: (i) according to utility equivalent to relative rewards under any coalition fewer than half the number of participants (ii) according to utilities equivalent to (a) absolute rewards and (b) absolute rewards minus absolute cost, under any coalition including even up to all but one of the participants. Recall that in \cite{fruitchain} it is proved that the Fruitchain protocol is incentive compatible according to utility equivalent to absolute rewards when the adversary controls fewer than half of the participants.
 So our results are stronger because (i) our model is rigorously defined and it takes into account also the different views among honest participants regarding the rewards (ii) when we use utility equivalent to absolute rewards %we remove the assumption that the total rewards are constant and
  we prove it against a coalition that controls even up to all but one players not half of the players (iii) we use utility equivalent to absolute rewards minus absolute cost  and thus we take into account also the cost of computing ``hashes'' that is very crucial especially for POW protocols (iv) we prove formally what happens in \cite{fruitchain} when utility is equivalent to relative rewards. This can give us the whole picture about which is the advantage of \cite{fruitchain} over Bitcoin regarding incentives (note that also Bitcoin is EVP according to utility equivalent to absolute rewards -when the difficulty in mining each block is fixed- so the use of relative rewards in a formal way is crucial to study the incentives in the Fruitchain protocol). 
\ignore{
 We then prove that Fruitchain protocol is incentive compatible for relative rewards for any coalition less than half the number of participants.  We note that regarding the Fruitchain protocol, [44], a notion of “fairness” was used to prove incentive compatibility and utility was equivalent to the absolute rewards when the adversary controls less than half of the participants.  In contrast we show that Incentive Compatibility for absolute rewards holds against any coalition who controls even up to all but one of the participants.  More interestingly,we establish Incentive Compatibility for relative rewards, which a property the Bitcoin protocol does not possess, under the assumption that the coalition is fewer than half the participants}}

%\subsection{A Sufficient Property for EVP}

%Before we present the results for the Fruitchain protocol \cite{fruitchain},
In this section we will describe a property, called ``$(t,\delta)$-weak fairness'', which
is sufficient for proving that a protocol is EVP when the utility is based on relative rewards (cf. Def.\ref{utilities}). This property can aid in the design of EVP protocols.  

A protocol will satisfy ``$(t,\delta)$-weak fairness" property when with overwhelming probability  the following hold: firstly when the adversary (which controls at most $t$ participants) deviates, then the fraction of the rewards that the set of all the honest participants gets is at least $(1-\delta)$ multiplied by its relative cost and secondly when the adversary is $H_T$, which means that it follows the protocol, any set of  participants  gets at least  $(1-\delta)$ multiplied by its relative cost.
\begin{definition}
A blockchain protocol satisfies  $(t,\delta)$-weak fairness if for any $r$-admissible environment $\mathcal{Z}$\iffalse with input $1^{p'(\kappa)}$, where $\kappa$ the security parameter,\fi,  for any   PPT adversary $\mathcal{A}$   which controls a set $T$ with at most $t$ participants and for any $j:P_j\in S\setminus T$, where $S$ the set of all the participants, we have with overwhelming probability in the security parameter $\kappa$: \begin{itemize} \item $
R^{j}_{S\setminus T}(\mathcal{E'}_{\mathcal{Z},\mathcal{A}})\geq (1-\delta)\cdot \dfrac{\sum_{l:P_l\in S\setminus T}C_{l}(\mathcal{E}_{\mathcal{Z},H_T})}{\sum_{l:P_l\in S} C_{l}(\mathcal{E}_{\mathcal{Z},H_T})}\cdot R^{j}_{S}(\mathcal{E'}_{\mathcal{Z},\mathcal{A}})$
\item for any subset $S_H \subseteq S $
it holds 
$R^{j}_{S_H}(\mathcal{E}_{\mathcal{Z},H_T})\geq (1-\delta)\cdot \dfrac{\sum_{l:P_l\in S_H}C_{l}(\mathcal{E}_{\mathcal{Z},H_T})}{\sum_{l:P_l\in S} C_{l}(\mathcal{E}_{\mathcal{Z},H_T})}\cdot R^{j}_{S}(\mathcal{E}_{\mathcal{Z},H_T})$
 where $\delta \in [0,1)$.
 \end{itemize}
\end{definition}
%Recall $S$ is the set of all the participants.

Note that $\sum_{l:P_l\in S_H}C_{l}(\mathcal{E}_{\mathcal{Z},H_T})/\sum_{l:P_l\in S} C_{l}(\mathcal{E}_{\mathcal{Z},H_T})$ %in the case of an adversary static with fixed cost
represents the computational power of $S_H$\footnote{$\sum_{l:P_l\in S_H}C_{l}(\mathcal{E}_{\mathcal{Z},H_T})/\sum_{l:P_l\in S} C_{l}(\mathcal{E}_{\mathcal{Z},H_T})= (c \cdot q \cdot r \cdot t_H)/(c \cdot q \cdot r\cdot n)$ where $t_H$ the number of participants of $S_H$}, because honest participants and $H_T$ ask all the queries during each round. In addition $\sum_{l:P_l\in S} C_{l}(\mathcal{E}_{\mathcal{Z},H_T})\neq0$ as the execution lasts at least one round. We do not divide with $ R^{j}_{S}(\mathcal{E}_{\mathcal{Z},H_T}) $ as we do not exclude the case that is equal to zero.\par

According to the following theorem when a protocol satisfies the $(t,\delta)$-weak fairness property and the total rewards are greater than zero with overwhelming probability then following the protocol is EVP under an adversary that controls at most $t$ participants. This theorem will be also used in order to prove that the Fruitchain protocol \cite{fruitchain} is EVP when the utility is based on relative rewards.

\begin{theorem}
\label{theorem8}

When a   protocol satisfies $(t,\delta)$-weak fairness  and in addition for any $j: P_j\in S\setminus T $, for any PPT   adversary $\mathcal{A}$ which controls a set $T$ with at most $t$ participants and for any $r$-admissible environment $\mathcal{Z}$ \iffalse with input $1^{p'(\kappa)}$ \fi it holds $R^{j}_{S}(\mathcal{E'}_{\mathcal{Z},\mathcal{A}})> 0$ with overwhelming probability in the security parameter $\kappa$, then following the protocol is $( t,0,\delta)$-EVP according to the utility function relative rewards (def.\ref{utilities}). 
\end{theorem}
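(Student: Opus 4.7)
The plan is to unfold the EVP definition with $\epsilon=0$ and $\epsilon'=\delta$ and show that, with overwhelming probability in $\kappa$, one has $U^{\max}_T(\mathcal{E'}_{\mathcal{Z},\mathcal{A}})\le U^{\min}_T(\mathcal{E}_{\mathcal{Z},H_T})+\delta$ for the relative-rewards utility. The key bridging quantity between the two independent executions is $\alpha \equiv \sum_{l:P_l\in S\setminus T}C_{l}(\mathcal{E}_{\mathcal{Z},H_T})/\sum_{l:P_l\in S}C_{l}(\mathcal{E}_{\mathcal{Z},H_T})$, the relative cost of the honest participants in the $H_T$-execution, which appears on the right-hand side of both clauses of weak fairness; note that $1-\alpha$ is the analogous fraction for $T$, and both quantities are deterministic functions of the $H_T$-execution only.

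First I would bound $U^{\max}_T(\mathcal{E'}_{\mathcal{Z},\mathcal{A}})$ from above. For each $j\in S\setminus T$, using the positivity hypothesis (which applies to the adversary $\mathcal{A}$ and gives $R^{j}_{S}(\mathcal{E'}_{\mathcal{Z},\mathcal{A}})>0$ with overwhelming probability), I rewrite $U^{j}_T(\mathcal{E'})=1-R^{j}_{S\setminus T}(\mathcal{E'})/R^{j}_{S}(\mathcal{E'})$. The first clause of $(t,\delta)$-weak fairness applied to this ratio yields $U^{j}_T(\mathcal{E'})\le 1-(1-\delta)\alpha$, a bound independent of $j$, so $U^{\max}_T(\mathcal{E'}_{\mathcal{Z},\mathcal{A}})\le 1-(1-\delta)\alpha$.

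Next I would lower-bound $U^{\min}_T(\mathcal{E}_{\mathcal{Z},H_T})$. Applying the positivity hypothesis to the specific PPT adversary $H_T$ yields $R^{j}_{S}(\mathcal{E}_{\mathcal{Z},H_T})>0$ with overwhelming probability. Instantiating the second clause of weak fairness with $S_H=T$ gives $R^{j}_{T}(\mathcal{E}_{\mathcal{Z},H_T})\ge (1-\delta)(1-\alpha)\,R^{j}_{S}(\mathcal{E}_{\mathcal{Z},H_T})$, hence $U^{j}_T(\mathcal{E}_{\mathcal{Z},H_T})\ge (1-\delta)(1-\alpha)$ for every such $j$, so $U^{\min}_T(\mathcal{E}_{\mathcal{Z},H_T})\ge (1-\delta)(1-\alpha)$.

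Combining the two bounds, the difference telescopes: $U^{\max}_T(\mathcal{E'}_{\mathcal{Z},\mathcal{A}}) - U^{\min}_T(\mathcal{E}_{\mathcal{Z},H_T}) \le 1-(1-\delta)\alpha-(1-\delta)(1-\alpha) = 1-(1-\delta)[\alpha+(1-\alpha)]=\delta$, which is exactly the $(t,0,\delta)$-EVP inequality. The remaining step is a union bound over the finitely many ``bad'' events that must be excluded (positivity of $R^{j}_{S}$ in each of the two executions for each $j\in S\setminus T$, plus the two weak-fairness events), which preserves overwhelming probability. The main obstacle I expect is the careful bookkeeping of these overwhelming-probability events, in particular correctly invoking the positivity hypothesis for both $\mathcal{A}$ and $H_T$ (both PPT) on the two independent executions sharing the same $\mathcal{Z}$, and pinning down a single joint good event before the inequalities are applied; once this is done the algebra is a one-line cancellation driven by $\alpha+(1-\alpha)=1$.
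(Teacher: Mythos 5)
Your proposal is correct and follows essentially the same route as the paper's proof: apply the first clause of weak fairness (plus the positivity hypothesis for $\mathcal{A}$) to bound $U^{\max}_T(\mathcal{E'}_{\mathcal{Z},\mathcal{A}})$ by $1-(1-\delta)\alpha$, apply the second clause with $S_H=T$ (plus positivity for $H_T$) to bound $U^{\min}_T(\mathcal{E}_{\mathcal{Z},H_T})$ below by $(1-\delta)(1-\alpha)$, and subtract to get $\delta$. The only presentational difference is that you make explicit the reward additivity $R^{j}_{S}=R^{j}_{T}+R^{j}_{S\setminus T}$ and the union bound over the good events, which the paper uses implicitly.
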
  For the proof see Appendix \ref{fruitproof1}.

\iffalse
In this section we will describe a property, we will call it ``$(t,\delta)$-weak fairness'', which is weaker than ``$(T,\delta)$-approximate fairness'' w.r.t. $\rho$ attackers defined in \cite{fruitchain} and ``ideal chain quality'' defined in \cite{backbone} and is sufficient for proving that a protocol is EVP, when the utility is equivalent to relative rewards. 

We compare our definition with ``$(T,\delta)$-approximate fairness'' w.r.t. $\rho$ attackers defined in \cite{fruitchain}, ``ideal chain quality'' defined in \cite{backbone} and ``race-free property'' defined in \cite{tor} in Appendix \ref{fruitfair}.
\fi
\paragraph{Comparison between $(t,\delta)$-weak fairness and other notions:} Our property is weaker than \textit{$(T,\delta)$-approximate fairness w.r.t. $\rho$ attackers} defined in \cite{fruitchain} and \textit{ideal chain quality} defined in \cite{backbone}.\par
The property \textit{$(T,\delta)$-approximate fairness w.r.t. $\rho$ attackers} defined in \cite{fruitchain} says that in any sufficient long window of the chain with $T$ blocks, any set of honest participants with computational power $\phi$ will get with overwhelming probability at least $(1-\delta)\cdot \phi$  fraction of the blocks regardless what the adversary with a fraction of computational power at most $\rho$ does.\par 
\textit{Ideal chain quality} defined  in \cite{backbone} says that  any coalition of participants (regardless the mining strategy they follow) will get a percentage of blocks in the blockchain that is   proportional to their collective hashing power.%\footnote{ ``Fairness" is a stronger notion than ``ideal chain quality" defined in \cite{backbone} and also referred in \cite{rafael}}.
\par

%Either (i) we take into account only the flat reward of each block or (ii) we take into account both the flat rewards and the transaction fees of the blocks and at the end of the execution we distribute the total rewards evenly among the miners of the blocks (as assumed in \cite{fruitchain}). 
Our property is weaker than \textit{$(T_0,\delta)$-approximate fairness  w.r.t. $t/n$ attackers} ($n$ is the number of all the participants)\footnote{ To be precise it is weaker than fairness under the restriction that the environment performs the protocol so many rounds that with overwhelming probability any honest participant has a local chain of length at least $T_0$. \ignore{ $T_0=T(\kappa)/ \delta$} Note that this happens because in our definition we have not used $T_0$ as parameter.}  defined in \cite{fruitchain} and \textit{ideal chain quality} in \cite{backbone} in the sense that when the adversary deviates from the protocol we demand that only the whole set of the honest participants gets a fraction of rewards at least $(1-\delta)$ multiplied by its relative cost, not all the subsets of the honest participants. In the same way our definition is also weaker than \textit{race-free property} defined in \cite{tor}\footnote{Note that when a cryptocurrency is pseudonymous and not anonymous then it is difficult to secure that every subset of honest participants will take the appropriate percentage of the blocks, because maybe it is the case where the adversary cannot decrease much the percentage of the blocks that belongs to the whole set of the honest participants, but it can act against a specific participant with some characteristics  revealed from the graph of the transactions. For example there are some works that analyze the statistical properties of the Bitcoin transaction graph  and describe identification attacks in Bitcoin, \cite{anonymity1,anonymity2}}. 

\section{Incentives in the Fruitchain Protocol}
\label{fruitabstract}
\iffalse
\subsection{Abstraction of the Fruitchain Protocol}
\label{abstraction}

\subsection{Abstraction of the Fruitchain Protocol}

\fi
In this section, we analyze incentives of \cite{fruitchain}. As before we assume the participants use a hash function which is modeled as a random oracle. The number of the queries to the random oracle by each participant during a round  is bounded by $q$. Let the total number of the participants be $n$. Each query to the random oracle can give with probability $p$ a block and with probability $p_f$ a fruit, where $p_f$ is assumed to be  greater than $p$. This is achieved via the 2-for-1 POW technique of \cite{backbone}.  At the beginning of each round, when the honest participants are activated, they ``receive" the fruits and the blocks from the Diffuse Functionality, they choose the chain that they will try to extend and they include in the block they try to produce ``a fingerprint" of all the ``recent" fruits (as defined in \cite{fruitchain}) that have not been included in the blockchain yet. Then they ask the random oracle $q$ queries. When an honest participant finds a fruit or a block, it gives it to the Diffuse Functionality and it continues asking the remaining queries. Even if it finds more than one fruit during a round, it gives all the fruits to the Diffuse Functionality. The adversary is activated at the end and it can ask $t' \cdot q$ queries, where $t'$ is the number of the participants that it controls.  
\iffalse
In Appendix \ref{fruitabstract} we describe briefly an abstraction of the Fruitchain protocol proposed in \cite{fruitchain} using notation from both
\cite{fruitchain} and \cite{backbone}.
Regarding notation used in the theorems: $p,n,q,t'$ are as defined in sections \ref{modelbody},\ref{bitcoinbody} and $p_f$ is the probability with which each query to the random oracle can give a fruit.
\fi
We consider that the rewards come only from the fruit\footnote{Note that in the Fruitchain protocol \cite{fruitchain} the fainess property holds for the fruits; actual blocks are possibly still vulnerable to selfish mining attacks \cite{selfish}. So if we consider that also the blocks give a flat reward then we cannot use the fairness property proved in \cite{fruitchain}. } and the difficulty in mining a block is fixed. In our case each query to the random oracle has a cost $c$. In the proofs  we will assume that the adversary is static, the model is synchronous and the Diffuse Functionality works as \cite{backbone}, and each fruit gives reward equal to $w_f$. 
 \ignore{
In the Fruitchain protocol \cite{fruitchain} the participants use a hash function modeled as a random oracle. The number of the queries to the random oracle by each participant during a round  is bounded by $q$. Let the number of the participants be $n$. The participants are activated in a ``round-robin'' way by the environment. Each query to the random oracle can give with probability $p$ a block and with probability $p_f$ a fruit, where $p_f$ is quite greater than $p$.  At the beginning of each round, when the honest participants are activated, they ``receive'' the fruits and the blocks from the Diffuse Functionality, they choose the chain that they will try to extend and they include in the block they try to produce ``a fingerprint'' of all the`` recent'' fruits (as defined in \cite{fruitchain}) that have not been included in the blockchain yet. Then they ask the random oracle $q$ queries. When an honest participant finds a fruit or a block, it gives it to the Diffuse Functionality and it continues asking the remaining queries. Even if it finds more than one fruit during a round, it gives all the fruits to the Diffuse Functionality. The adversary is activated at the end and it can ask $t' \cdot q$ queries, where $t'$ is the number of the participants that it controls.  The rewards come only from the fruits \footnote{Note that in the Fruitchain protocol \cite{fruitchain} the fainess property holds for the fruits; actual blocks are possibly still vulnerable to selfish mining attacks \cite{selfish}. So if we consider that also the blocks give a flat reward then we cannot use the fairness property proved in \cite{fruitchain}. } and the difficulty in mining a block is fixed. In our case each query to the random oracle has a cost $c$. In the proofs  we will assume that the adversary is static, the model is synchronous and the Diffuse Functionality works as \cite{backbone}, and each fruit gives reward equal to $w_f$. \ignore{In addition we consider that in the case of the tie the fruits from the adversary are preferred by the honest participants.}}
\paragraph{Relative rewards:} According to the following theorem if the adversary controls fewer than half of the participants and wants to maximize its \ignore{joint}relative rewards which means that its utility is based on relative rewards (Def.~\ref{utilities}), then following the Fruitchain protocol is EVP. \ignore{In addition as the protocol lasts at least one round, relative rewards are positive with overwhelming probability and as a result Fruitchain protocol is also Nash- incentive compatible.}This theorem allows us to understand in a formal way how mining simultaneously fruits and blocks can eliminate the impact of selfish mining \cite{selfish} on the incentive compatibility of the protocol. We note that the core advantage stems from the 2-for-1 POW  technique used for simultaneous mining which was initially proposed for the mitigation of selfish mining in \cite{backbone} in the context of achieving Byzantine agreement for honest majority and later was adapted in \cite{fruitchain} for a similar purpose in the context of fair blockchains. \ignore{Note that in \cite{fruitchain} the utility that is used is absolute and not relative rewards.} 
%Let $T_0=T(\kappa)/ \delta$ be as specified in\cite{fruitchain}.

\begin{theorem}{\label{relativefruitchain}}
Let $\delta \in (0,1)$ and $T_0$ such that the Fruitchain protocol satisfies $(T_0, \delta)$-approximate fairness property. Then the Fruitchain protocol is $(n/2-1,0,\delta)$-EVP according to the utility function relative rewards (Def. \ref{utilities}), under an $r$-admissible environment where $r \geq T_0/ (p_f \cdot (\frac{n}{2}+1) \cdot (1- \delta)\cdot q)$. %the restriction that the environment performs the protocol so many rounds that with overwhelming probability any honest party has a local chain of length at least $T_0$.
\end{theorem}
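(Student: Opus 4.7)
The plan is to apply Theorem \ref{theorem8}. Concretely, I would show that (a) Fruitchain satisfies $(n/2-1,\delta)$-weak fairness under the given round bound, and (b) for every PPT static adversary $\mathcal{A}$ controlling at most $n/2-1$ participants and every $r$-admissible environment $\mathcal{Z}$, it holds that $R^{j}_{S}(\mathcal{E'}_{\mathcal{Z},\mathcal{A}})>0$ with overwhelming probability for every honest $P_j$. Theorem~\ref{theorem8} then yields $(n/2-1,0,\delta)$-EVP for the relative rewards utility directly.

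First I would translate the round bound into a lower bound on the number of honest fruits. With $t'\leq n/2-1$ adversarial participants there are at least $n/2+1$ honest ones, each asking $q$ queries per round, each query yielding a fruit with probability $p_f$. So over $r$ rounds the expected number of honest fruits is at least $p_f\cdot q \cdot (n/2+1)\cdot r$. A Chernoff bound gives that with overwhelming probability the realized count is at least $(1-\delta)\cdot p_f\cdot q\cdot (n/2+1)\cdot r \geq T_0$ by the hypothesis on $r$. Appealing to the structural properties of Fruitchain (fruits produced by honest parties are included in the stable portion of every honest chain with overwhelming probability), the honest chain contains at least $T_0$ fruits, which is precisely the threshold at which the $(T_0,\delta)$-approximate fairness of \cite{fruitchain} becomes applicable to the full execution window.

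Next I would convert $(T_0,\delta)$-approximate fairness into $(n/2-1,\delta)$-weak fairness. Under the static adversary with fixed cost model each participant pays $c\cdot q$ per round in both $\mathcal{E}_{\mathcal{Z},H_T}$ and $\mathcal{E'}_{\mathcal{Z},\mathcal{A}}$, hence for any $S_H\subseteq S$ the cost ratio $\sum_{l:P_l\in S_H}C_l(\mathcal{E}_{\mathcal{Z},H_T})/\sum_{l:P_l\in S}C_l(\mathcal{E}_{\mathcal{Z},H_T})$ equals $|S_H|/n$, which is exactly the computational power fraction used by the fairness definition of \cite{fruitchain}. Applied to the execution $\mathcal{E'}_{\mathcal{Z},\mathcal{A}}$ with $S_H=S\setminus T$ (of computational power $(n-t')/n$) this gives the first weak-fairness inequality, and applied to $\mathcal{E}_{\mathcal{Z},H_T}$ for an arbitrary $S_H\subseteq S$ it gives the second one, since in the latter execution no party deviates, so in particular the protocol designer's fairness guarantee still applies. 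Property (b) follows immediately from step one: at least $T_0\geq 1$ fruits lie in the honest chain with overwhelming probability, and each contributes $w_f>0$ to $R^j_S$.

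The main obstacle is the cross-execution nature of the weak-fairness statement: the rewards on the left are measured in $\mathcal{E'}_{\mathcal{Z},\mathcal{A}}$ while the cost ratios on the right are taken in $\mathcal{E}_{\mathcal{Z},H_T}$. Handling this cleanly requires the static-fixed-cost assumption so that both executions have identical deterministic costs, and it requires one to invoke the $(T_0,\delta)$-approximate fairness of \cite{fruitchain} separately in each execution while being careful that the computational power of $S\setminus T$ is at least $(n/2+1)/n$, which is strictly above the honest-majority threshold that \cite{fruitchain} needs for its fairness guarantee to hold. Once these ingredients are in place, the desired EVP statement is an immediate invocation of Theorem~\ref{theorem8}.
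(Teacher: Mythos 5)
Your proposal is correct and follows essentially the same route as the paper: the paper likewise derives $(n/2-1,\delta)$-weak fairness from the $(T_0,\delta)$-approximate fairness of Fruitchain, uses the chain-growth/Chernoff argument to show that the stated bound on $r$ guarantees every honest chain contains at least $T_0$ fruits, notes $R^{j}_{S}(\mathcal{E'}_{\mathcal{Z},\mathcal{A}})>0$ with overwhelming probability, and concludes by invoking Theorem~\ref{theorem8}. (Your remark that costs coincide in both executions is harmless but unnecessary, since the weak-fairness cost ratios are taken only in $\mathcal{E}_{\mathcal{Z},H_T}$, where they equal $|S_H|/n$.)
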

%The proof is given in Appendix \ref{theorem11ref}.
For the proof see Appendix \ref{fruitproof2}. It uses Chernoff bound and Theorem \ref{theorem8}.

Note that for any $\delta \in (0,1)$ and appropriate $T_0$ 
the Fruitchain protocol satisfies $(T_0,\delta)$-approximate fairness property (Subsection 4.2 in \cite{fruitchain})\footnote{In \cite{fruitchain} the number of queries $q$ each participant can ask during each round is $1$.}. 
\begin{remark}
The above theorem holds also when we take into account also the transaction fees from each fruit and at the end of the execution we distribute evenly the total rewards among the miners of the fruits (as assumed in \cite{fruitchain}\footnote{ To be precise in \cite{fruitchain} the rewards of each fruit are shared evenly among the miners of the fruits included in a long enough preceding part of the chain.}).
\end{remark}

\paragraph{Absolute rewards minus absolute cost:}  We will prove that the Fruitchain \cite{fruitchain} protocol in a synchronous setting is EVP according to utility based on absolute rewards minus absolute cost (Def. \ref{utilities}) if the adversary controls all but one participants, when the cost of each query $c$ is small enough compared to the reward of each fruit $w_f$.  Note that the smaller the cost of each query is, the better  EVP we have.\footnote{Note that here synchronicity does not affect how good the EVP is in contrast to our theorems regarding  Bitcoin with fixed target. This happens because when honest participants find more than one fruit during a round, all of them can be included in the chain eventually, in contrast to  Bitcoin where when two honest participants find a block during a round then only one of them can be included in the chain.} \iffalse We do not allow the adversary to control all the participants because we need at least one honest participant so that we can specify the utility of the adversary according to its local chain.   \ignore{This is a stronger result than that of \cite{fruitchain} because in our model we consider the cost of asking the random oracle and we allow the adversary to control all but one participants not just the half. Note that the assumption that the total rewards are constant is not always true in our model as the adversary can ask fewer queries and decrease the total number of fruits. In addition we will prove the same statement also for utility equivalent to absolute rewards minus absolute cost supposing that the cost of each query is small enough compared to the rewards of each fruit} Note that when the difficulty in mining a block is fixed, although  Bitcoin is vulnerable to selfish mining attack \cite{selfish}, it is EVP \textit{according to this utility}, as we have proved.\fi  %Recall that compared to \cite{fruitchain}, in our model we consider the cost of asking the random oracle and we allow the adversary to control all but one parties not just the half. Moreover we remove the assumption that the total rewards and transaction fees are a fixed constant (which is not always true as we have referred in the introduction) and instead we assume that each fruit gives the same fixed reward.
  \par The intuition behind the proof is that (i) the rewards come from the fruits that are produced by mining and (ii) the total number of the fruits the adversary can produce is bounded (with overwhelming probability) whatever strategy it follows. So if the adversary can have this number of fruits even if it follows the protocol, it has no reason to deviate.

\begin{theorem}
{\label{absolutefruitchain}}
Assume that each fruit gives a constant reward and there exists  $\phi \in (0,1)$ such that $c<p_f\cdot w_f \cdot \phi$. Then  for any $\delta_1 \in (0,0.25)$, such that $c\leq p_f\cdot w_f \cdot (1-\delta_1)\cdot \phi$  and  $4 \cdot \delta_1<1- \phi$
the Fruitchain protocol in a synchronous setting is  $(n-1, 4 \cdot \delta_1/(1- \phi), 0)$-EVP according to the utility function absolute rewards minus absolute cost (Def. \ref{utilities}) \iffalse under a PPT static adversary with fixed cost \fi.
\end{theorem}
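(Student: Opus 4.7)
The plan is to compare $U^{\max}_T(\mathcal{E}'_{\mathcal{Z},\mathcal{A}})$ and $U^{\min}_T(\mathcal{E}_{\mathcal{Z},H_T})$ by separately bounding the fruit counts in each execution via Chernoff, and then verifying a clean algebraic inequality. The key observation that makes the proof work is that each query to the random oracle produces a fruit independently with probability $p_f$, so the number of fruits collected by $T$ depends only on how many queries $T$ makes, not on the strategy that decides where to place them.

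For the upper bound, fix the static-with-fixed-cost adversary $\mathcal{A}$ and let $Q_A\le t'qr$ be the total number of queries it commits to over the $r$ rounds. Every fruit credited to $T$ in any honest view was produced by one of these $Q_A$ queries, so if $F_A$ denotes the total number of fruits produced by $T$ in $\mathcal{E}'_{\mathcal{Z},\mathcal{A}}$, then $F_A\sim\mathrm{Bin}(Q_A,p_f)$. A multiplicative Chernoff bound gives $F_A\le (1+\delta_1)p_f Q_A$ with overwhelming probability (small $Q_A$ is handled by the trivial bound $F_A\le Q_A$, which is dominated by the linear-in-$r$ lower bound obtained in the next step). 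Since each fruit is worth at most $w_f$ and the committed cost is $cQ_A$, we obtain
\[
U^{\max}_T(\mathcal{E}'_{\mathcal{Z},\mathcal{A}}) \le Q_A\bigl[(1+\delta_1)p_f w_f - c\bigr].
\]
Under the hypothesis $c<\phi p_f w_f<p_f w_f$, the bracket is positive, so the right-hand side is maximised at $Q_A=t'qr$.

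For the lower bound, in $\mathcal{E}_{\mathcal{Z},H_T}$ the coalition $H_T$ asks all $t'qr$ queries, so its fruit count $F_H\sim\mathrm{Bin}(t'qr,p_f)$ satisfies $F_H\ge (1-\delta_1)p_f t'qr$ w.o.p.\ by Chernoff. Because every participant in $T$ follows the protocol and diffuses its fruits, the inclusion/liveness property built into Fruitchain (the same mechanism used to prove approximate fairness in \cite{fruitchain}) ensures that each such fruit lies in the chain of every honest participant after sufficiently many rounds, up to a negligible boundary loss in the final freshness window that can be absorbed into $\delta_1$. The honest cost is $ct'qr$, hence
\[
U^{\min}_T(\mathcal{E}_{\mathcal{Z},H_T}) \ge t'qr\bigl[(1-\delta_1)p_f w_f - c\bigr] \ge t'qr\,(1-\delta_1)(1-\phi)\,p_f w_f>0,
\]
where the second inequality uses $c\le(1-\delta_1)\phi p_f w_f$. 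In particular $U^{\min}_T$ is non-negative, so the absolute value in the EVP definition disappears.

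Subtracting the two bounds cancels the cost terms and gives
\[
U^{\max}_T(\mathcal{E}'_{\mathcal{Z},\mathcal{A}}) - U^{\min}_T(\mathcal{E}_{\mathcal{Z},H_T}) \le 2\delta_1 p_f t'qr w_f,
\]
while $\epsilon\,U^{\min}_T \ge \tfrac{4\delta_1}{1-\phi}\cdot (1-\delta_1)(1-\phi)p_f t'qr w_f = 4\delta_1(1-\delta_1)p_f t'qr w_f$. The required inequality therefore reduces to $2\delta_1 \le 4\delta_1(1-\delta_1)$, i.e.\ $\delta_1\le 1/2$, which holds because $\delta_1\in(0,0.25)$; a union bound over the two Chernoff failure events concludes with $\epsilon'=0$. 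The main obstacle is the lower-bound step: justifying that essentially all fruits produced by $H_T$ are counted in \emph{every} honest participant's reward table. This needs a quantitative inclusion guarantee for honestly-diffused fruits within the Fruitchain freshness window, together with verifying that the boundary fruits from the last few rounds contribute only a lower-order additive term that is polynomial in $\kappa$ and hence absorbed by the slack in $\delta_1$ once $r$ is large enough for the Chernoff bounds to be overwhelming.
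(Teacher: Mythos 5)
Your proof follows essentially the same route as the paper's: a Chernoff upper bound on the coalition's fruit production under $\mathcal{A}$ (maximised at full query budget since $(1+\delta_1)p_f w_f-c>0$), a Chernoff lower bound plus fruit-inclusion under $H_T$, cancellation of the cost terms, and the same algebra using $c\le(1-\delta_1)\phi\, p_f w_f$ together with $\frac{2\delta_1}{1-\delta_1}\le 4\delta_1$ for $\delta_1<0.25$. The only divergence is that the paper simply asserts that all fruits produced by $H_T$ are included in every honest local chain by round $r$, whereas you explicitly flag the last-rounds/freshness boundary loss and absorb it into the slack, which is a slightly more careful treatment of the same step.
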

For the proof see Appendix \ref{fruitproof3}.
\begin{remark}
The assumption that there exists $\phi \in (0,1)$ such that  $c<p_f\cdot w_f \cdot \phi$ means that the reward of each block is high enough to compensate the miners for the cost of the mining. Finally note that trivially if we consider that $c=0$ then the assumption of the above theorem holds for $\phi$ close to zero and the utility is just absolute rewards (Def. \ref{utilities}).
\end{remark}

\bibliography{equilibriumfull}
\bibliographystyle{plain}

\appendix

\iffalse
\section{Bitcoin}
\label{bitcoinintro}
Bitcoin is a cryptocurrency based on a blockchain protocol that maintains a public ledger containing the history  of all transactions. It was introduced by \cite{nakamoto} and formally analyzed in \cite{backbone}. Follow up work, \cite{rafael}, extended the analysis  in the semi-synchronous communication setting.  The main idea behind the protocol is that transactions are organized into blocks and blocks form a chain, as each block contains the hash of  the previous block. The longest chain is selected to determine the public ledger. A block is produced when a proof of work puzzle \cite{puzzle,puzzle2,puzzle3,puzzle4} is solved by a node called  miner. The miner that produces a block earns an amount of Bitcoin as a reward.
\fi
\section{Game Theoretic Notions}
\label{definitions}
\begin{itemize}
\item A strategy profile, which indicates how each participant behaves in the game,  is an \textit{$\epsilon$-Nash equilibrium} when the following holds: if all but one of the participants follow their strategy indicated by the strategy profile,
the remaining participant has no incentives to deviate 
from its indicated strategy as well, as its utility can only  be increased  by a small insignificant amount bounded by $\epsilon$, see e.g., \cite{essential}. Extended notions of equilibria capture strategic coalitions as well, cf.  \cite{coalition,strongn}, giving rise to ``Strong'' Nash Equilibria. Note that if we show that a blockchain protocol is an $\epsilon$-Nash equilibrium, we know that nobody  has the incentive to deviate from the protocol, if everybody else follows the protocol.
\item  The concept of \textit{Incentive compatibility} appears in a few different forms in the literature. 
 ``Dominant-strategy incentive-compatibility" is satisfied when there is not a strictly better strategy than telling the truth or following the protocol respectively whatever the other participants do. ``Bayesian-Nash incentive-compatibility"  is a  weaker notion  and a protocol satisfies it when there is  a type of Nash equilibrium called ``Bayesian Nash equilibrium", where all the participants tell the truth supposing that all the other participants do the same \cite{multi}. In cryptocurrency literature some times the incentive compatibility notion is used as equivalent to the Nash equilibrium notion \cite{leonardos2019presto}. More broadly, maximizing the profits or maximizing the utility can be seen as an \textit{optimization problem} that includes at least two constraints. The first constraint is \textit{incentive compatibility} and the second constraint is the \textit{participation constraint} which suggests  that when a participant participates in the game, this does not result in  lower utility compared to not participating \cite{brit}.
 \end{itemize}
\section{Chernoff Bounds}
\label{chernoff}
Let $X_i:i\in \lbrace 1,..., n\rbrace$ are mutually independent Boolean random variables and $\forall i Pr(X_i=1)=p$. Let $X=\sum_{i=1}^n X_i$ and $\mu=pn$.
Then we have for any $\delta \in (0,1]$
$$ Pr(X\leq(1-\delta)\mu)\leq e^{-\delta^2\mu/2}$$ and $$ Pr(X\geq(1+\delta)\mu)\leq e^{-\delta^2\mu/3}$$

\section{The Theorems and Proofs Regarding Incentives in Bitcoin}
\label{appendixbitcoin}
In this section we will use our definition to examine if  the Bitcoin with fixed target in a synchronous setting is EVP according to different utilities under a PPT static adversary with fixed cost, when each query to the random oracle has a cost and the difficulty in mining blocks (or in other words the target of each block) is fixed. The block reward will be fixed or will change every at least $l\cdot \kappa$ rounds, where $l$ a positive constant and $\kappa$ the security parameter, as we do not take into account transaction fees (we consider only the flat reward).  \par 
In more detail, as in \cite{backbone}\footnote{We will use some notation and some proof techniques from \cite{backbone}.} we will consider that there is only one oracle: the random oracle. The difficulty in mining each block is fixed. Each honest participant asks during each round $q$ queries the random oracle. In our case each query to the random oracle has cost $c$ and not zero. The probability with which a query is successful is $p$ and $n$ is the number of the participants. Let $s$  be the expected number of solutions per round and as the model is synchronous, it is near  zero. The security parameter is $\kappa$ which is the domain of the hash function. \par The adversary is static and it controls an arbitrary set  $T\equiv \lbrace  P_{i_1},...,P_{i_{t'}}\rbrace$ with $t'$ participants. Let $x_{m}$ be the queries the participants controlled by the adversary $P_{i_m}$  will not ask the random oracle during each round. $x=\sum_{m=1}^{t'} x_m$  is the total number of the queries that all the participants controlled by the adversary collectively do not ask during each round. Note that $x$ is a constant not a random variable as it is determined in the beginning by the static adversary with fixed cost. It holds $0\leq x \leq q\cdot t' $.  \par  Let $R^{j}_{T}(\mathcal{E}$) be the rewards of the blocks that are produced by $T$ and are included in the local chain of $P_j$ after the last complete round $r$ of the execution $\mathcal{E}$. \par
Some clarifications regarding \cite{backbone} that are useful for our proofs are described in the following subsection. 
\subsection{Clarifications about Proofs in Bitcoin}

The honest participants ask during each round all the available $q$ queries even if they find a block in the middle of the round. If the honest participants find more than one block during a round they give to the Diffuse functionality only the first block. In addition even if an honest participant receives from the Diffuse Functionality in the middle of a round a block produced by another participant, it will not change the block that it tries to extend. As a result, although  forks with two or more honest blocks are permitted, a chain cannot be extended by two honest blocks in a round. On the other hand, these restrictions do not hold for the  participants controlled by the adversary because if a  participant controlled by the adversary finds more than block during a round it can give all the blocks to the Diffuse functionality. In addition, the honest participants choose the first block they receive in the case of a tie, which means that the adversarial blocks are always preferred by the honest participants, as the adversary can deliver its block first.

\par 
 Successful round (defined in \cite{backbone}) for a subset of participants  is a round where at least one of the participants included in this subset has found  a solution.
The following lemma is an extension of ``Chain-Growth" Lemma $7$ of \cite{backbone}. 
\begin{lemma}{\label{firstlemma}}
For every $r$-admissible environment $\mathcal{Z}$ with input $1^{p'(\kappa)}$, at the end of each round of an execution $\mathcal{E}_{\mathcal{Z},H_T}$, the local chains of all the honest participants have the same number of blocks that is equal to the successful rounds for all the participants.
\end{lemma}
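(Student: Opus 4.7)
The plan is to proceed by induction on the round number $r$. Let $L_r$ denote the number of successful rounds among the first $r$ rounds, where a round is successful when at least one participant in $S$ finds a block during that round. The base case $r=0$ is immediate: no round has elapsed, every local chain is empty, and $L_0=0$.

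For the inductive step, assume that at the end of round $r$ every local chain of a participant in $S\setminus T$ has length exactly $L_r$. I would first note that in the execution $\mathcal{E}_{\mathcal{Z},H_T}$ every chain held by a participant controlled by $H_T$ also has length $L_r$: since $H_T$ follows the protocol, its participants adopt the longest chain they see, and by end-of-round delivery the Diffuse functionality has given them access to the same pool of chains that the honest participants see. Consequently, at the start of round $r+1$ every participant in $S$ is mining on a chain of length $L_r$, so every block produced during round $r+1$ sits at height $L_r+1$.

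I would then split into two cases. If round $r+1$ is unsuccessful, no block is produced, chains are unchanged, and $L_{r+1}=L_r$, so the invariant is preserved. If round $r+1$ is successful, at least one block at height $L_r+1$ appears. By the clarifications stated immediately before the lemma, each participant submits at most one block per round to the Diffuse functionality and never switches mid-round the chain it is trying to extend, so no chain can be extended by two blocks in a single round. At the end of round $r+1$ the Diffuse functionality delivers every block produced during the round to every honest participant; each honest participant then adopts a longest known chain, which must have length $L_r+1$. Hence every honest chain has length $L_{r+1}=L_r+1$, closing the induction.

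The only real subtlety, and the one I would check most carefully, is that the front-running capability granted to $H_T$ plays no role in the length invariant: because $H_T$ still follows the protocol, every block it releases extends a chain of length $L_r$ and therefore also sits at height $L_r+1$, and its ability to deliver first only biases which height-$(L_r+1)$ block an honest participant adopts as its tip, not the length of that tip. Everything else reduces to routine bookkeeping using the round-based delivery guarantees of the Diffuse functionality.
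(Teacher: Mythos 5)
Your proof is correct and follows essentially the same route as the paper's: induction on the round number, using the facts that each participant (including those controlled by $H_T$, since it follows the protocol) submits at most one block per round and never switches mid-round, together with end-of-round delivery by the Diffuse functionality, so every honest chain grows by exactly one block in each successful round. The observation that $H_T$'s front-running only affects which tip is adopted, not the chain length, matches the paper's reasoning as well.
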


%This can be proved by induction on the round $r$ using the notifications in Appendix \ref{not}  regarding the honest parties and using the fact that every corrupted party by $H_T$ follows  the protocol. The whole proof is given in Appendix \ref{lemmaa}.

%Note that at the end of each round  although the local chains of the parties will have the same length they may contain different blocks in the last part, because forks with two or more honest blocks are permitted.

\begin{proof}
This can be proved by induction on the round $r$ using the clarifications above  regarding the honest participants and using the fact that every participant controlled by $H_T$ follows  the protocol. \par 
Let an arbitrary execution  $\mathcal{E}_{\mathcal{Z},H_T}$.
For the basis $r=1$ : if the first round is not a successful round then all the participants have a local chain with length zero equal to the number of the successful rounds which is also zero. If the first round is successful then all the participants have a local  chain of length $1$. This holds because:

\begin{itemize}
\item the participants  cannot have at the end of the first round  a local  chain with length zero as all the participants at the end of the first round will receive from the Diffuse Functionality all the blocks produced during the first round. Note that $H_T$ follows the  protocol and always sends its  blocks to the Diffuse functionality.
\item the participants cannot have at the end of the first round  a local chain with more than one block given that even if more than one block have been produced during the first round, these blocks can extend the length of the local chains only by one. This holds because (i) if the participants (also the participants that are controlled by $H_T$) find more than one block they give to the Diffuse functionality only the first block and (ii) even if a participant receives from the Diffuse Functionality in the middle of a round a block produced by another participant, it will not change the block that it tries to extend.
\end{itemize}
For the induction step we suppose that at the end of the round $r$ all the participants have local chains with length equal to the successful rounds and we can prove with the same arguments that at the end of round $r+1$, if the round $r+1$ is successful, all the participants will extend their chain by one block.
\end{proof}
Note that at the end of each round  although the local chains of participants will have the same length, they may contain different blocks in the last part, because forks with two or more honest blocks are permitted.

\begin{lemma}{\label{secondlemma}}
For every $r$-admissible environment $\mathcal{Z}$ with input $1^{p'(\kappa)}$, at the end of each round of an execution $\mathcal{E}_{\mathcal{Z},H_T}$, the number of the blocks that are produced by the set $T$ of the participants controlled by the adversary and are included in a local chain of an arbitrary honest participant are equal to the number of the successful rounds for $T$.
\end{lemma}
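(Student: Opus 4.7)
The plan is to prove this by induction on the round $r$, closely mirroring the structure of Lemma~\ref{firstlemma} but now tracking the provenance (adversarial vs.\ honest) of each new block at the chain tip. Let $N_T^j(r)$ denote the number of blocks from $T$ in the local chain of honest participant $P_j$ at the end of round $r$, and let $s_T(r)$ denote the number of successful rounds for $T$ up to and including round $r$. We wish to show $N_T^j(r)=s_T(r)$ for every honest $P_j$ and every round $r$.

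For the base case $r=1$, the chain was empty at the start, and by Lemma~\ref{firstlemma} each honest chain contains exactly $\mathbf{1}[\text{round 1 is successful for }S]$ blocks. It then suffices to argue that this single block, when it exists, is a $T$-block if and only if round $1$ is successful for $T$. This follows from the network-dominance convention of $H_T$ combined with the tie-breaking rule for honest participants: if $T$ finds at least one block, $H_T$ follows the protocol (broadcasting only the first solution each controlled participant finds) and delivers its block first, so the first block received by every honest participant extending the genesis is the $T$-block, and that is what ends up at the tip of every honest local chain.

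For the inductive step, assume the claim holds at the end of round $r$. Three cases cover round $r+1$. (Case A) Round $r+1$ is successful for $T$. Then at least one controlled participant finds a block and, by the protocol, sends exactly one block to the Diffuse functionality; $H_T$ delivers it first to every honest participant. By Lemma~\ref{firstlemma}, each honest chain grows by exactly one block, and since any competing honest block of round $r+1$ extends the same parent as the $T$-block but arrives later, the first-received tie-breaking forces the $T$-block to occupy the tip of every honest chain. Thus $N_T^j(r+1)=N_T^j(r)+1=s_T(r)+1=s_T(r+1)$. (Case B) Round $r+1$ is successful for $S\setminus T$ but not for $T$; the single added block comes from an honest participant, so $N_T^j$ is unchanged and so is $s_T$. (Case C) Round $r+1$ is unsuccessful; by Lemma~\ref{firstlemma} no block is added, and again both sides are unchanged.

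The one delicate point, and the step I expect to require the most care, is the argument in Case A that no honest block can displace the $T$-block at the tip in \emph{any} honest local chain. This rests on three ingredients made explicit in the clarifications preceding Lemma~\ref{firstlemma}: honest participants do not abandon the chain they are extending mid-round, so every honest block produced in round $r+1$ extends the common tip at the end of round $r$; each honest participant broadcasts at most one block per round; and the end-of-round selection rule (longest chain, breaking ties by earliest arrival) together with $H_T$'s network dominance guarantees the $T$-block is the earliest-arriving block at the new height. Combined with Lemma~\ref{firstlemma} forcing the chain to grow by exactly one, this pins down the $T$-block as the unique block added in Case A, completing the induction.
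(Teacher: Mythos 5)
Your proof is correct and follows essentially the same route as the paper's: induction on the round, with the base case and inductive step resting on the same model conventions ($H_T$ delivers its blocks first, honest participants adopt the first block received, do not switch mid-round, and broadcast at most one block per round), so that each successful-for-$T$ round adds exactly one $T$-block (possibly a different one) to every honest local chain. The only cosmetic differences are that you invoke Lemma~\ref{firstlemma} explicitly for the growth-by-one fact and split the step into three cases, and that "sends exactly one block" should be read as at most one block per controlled participant, exactly as the paper notes when it allows different honest chains to contain different adversarial blocks of the same count.
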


\begin{proof}
This can be proved also by induction on the round $r$ taking
into account the fact that $H_T$ delivers always its blocks first and the participants adopt the first block they receive. Specifically when both a participant controlled by the adversary and an honest participant have produced a block during a round then all the participants will adopt the block produced by the participant controlled by the adversary.\par Note that $H_T$ will give to the  Diffuse Functionality at most one block per round for each participant controlled by the adversary and even if there are more than one block produced by $H_T$ during a round they extend the length of the chains by one. So the local chains of the honest participants at the end of each round may contain different blocks produced by the adversary, but all will have the same number of blocks produced by the adversary. \par
Let an arbitrary execution  $\mathcal{E}_{\mathcal{Z},H_T}$. 
This can be proved by induction on the round $r$.  For the basis $r=1$ : if the first round is not a successful round for  $T$ then all the participants have a local chain with zero blocks produced by the participants controlled by the adversary which is equal to the number of the successful rounds for $T$ that is also zero. If the first round is successful for $T$ then all the participants  have a local  chain that includes exactly one  block produced by the adversary $H_T$. This holds because:

\begin{itemize}
\item the participants cannot have at the end of the first round  a local chain with no block produced by $T$, because: 

\begin{itemize}
\item $H_T$ follows the protocol and always sends its  blocks to the Diffuse functionality which means that all participants  receive its blocks. 
\item the participants  will adopt a block produced by $T$ at the end of the first round  even if another participant has also produced a block during the first round because $H_T$ delivers its blocks first.
\end{itemize}

\item the participants  cannot have at the end of the first round  a local chain with more than one block produced by $T$ because even if more than one block have been produced by $H_T $ during the first round these blocks can extend the length of the  local chains only by one given that $H_T$ follows  the protocol.
\end {itemize}
For the induction step we suppose that at the end of the round $r$ all the honest participants  have local chains that include blocks produced by $T$ whose number is equal to the successful rounds for $T$. Then:  
\begin{itemize}
    \item  If round $r+1$ is not a successful round for $H_T$ then the number of the blocks produced by $T$ that are included in the local chain of an arbitrary honest participant does not change. %Note that the only way this number could change is the following scenario: During a previous round a fork was created when some parties tried to extend a block $B$ produced by $T$ and some other parties tried to extend another block $B'$ produced by an honest party. In this case if this fork exists until round $r+1$ and parties managed at round $r+1$ to extend the chain that include $B'$ then the other parties willexchange $B$ with $B'$. However this cannot happen because when $T$ produces a block all the other parties try to extend this block, as it arrives first and thus never a fork as described above can be created. Moreover $H_T$ does not create forks in purpose because it follows the protocol.
    \item If round $r+1$ is successful for $T$ then all the honest participants  include exactly one more block produced by $T$, not necessary the same, because of the arguments described above.
\end{itemize}
\end{proof}
\subsection{Absolute Rewards}
In this subsection we examine if Bitcoin is EVP when utility is equivalent to absolute rewards, which means  $U^{j}_T\equiv R^{j}_{T}$.
\ignore{In more detail, we prove that if a static adversary with fixed cost wants to maximize the flat reward of the blocks that it has produced and are included in the public ledger, then it has no incentives to deviate  from the Bitcoin with fixed target ina synchronous setting, even if it controls $n-1$ participants . This happens because if it does, it will earn with overwhelming probability in the security parameter an amount very close to zero more compared to following the protocol.} \par
Our theorem assumes that the block reward is fixed and equal to $w$. However it holds also when we assume that (i) the block reward changes every at least $l\cdot \kappa$ rounds where $l$ a positive constant and $\kappa$ the security parameter and (ii) the environment terminates the execution at least $l \cdot \kappa$ rounds after the last change of the block reward. The exact theorems and proofs of this case are given in  the next subsection. %Appendix \ref{rewardchange1}.
The intuition is that the number of the successful rounds of a period is independent of the number of the successful rounds of a following period with different block reward. The same it holds for the number of blocks produced by a set of participants.

By Lemma $\ref{secondlemma}$ we can conclude that 
\begin{lemma}{\label{min=max}}
  For every $r$-admissible environment $\mathcal{Z}$ with input $1^{p'(\kappa)}$, where $\kappa$ the security parameter it holds  
$$ R^{\max}_{T}(\mathcal{E}_{\mathcal{Z},H_T})\equiv  R^{\min}_{T}(\mathcal{E}_{\mathcal{Z},H_T})\equiv X_r^T (\mathcal{E}_{\mathcal{Z},H_T})\cdot w$$ where  $X_r^T (\mathcal{E}_{\mathcal{Z},H_T})$ are the successful rounds for $T$ until the last complete round $r$ of execution $\mathcal{E}_{\mathcal{Z},H_T}$.
\end{lemma}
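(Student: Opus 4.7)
The plan is to derive this lemma as an immediate consequence of Lemma~\ref{secondlemma} together with the assumption that each block carries a fixed flat reward $w$. First I would fix an arbitrary $r$-admissible environment $\mathcal{Z}$ and consider the execution $\mathcal{E}_{\mathcal{Z},H_T}$. By definition, $R^{j}_T(\mathcal{E}_{\mathcal{Z},H_T})$ is determined by summing $w$ over all blocks in the local chain of $P_j$ (for $P_j \in S \setminus T$) that were produced by participants in $T$, after the last complete round $r$.

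Next I would invoke Lemma~\ref{secondlemma} at round $r$: it guarantees that the local chain of every honest $P_j$ contains exactly $X_r^T(\mathcal{E}_{\mathcal{Z},H_T})$ blocks produced by $T$, irrespective of which honest participant $P_j$ we pick (the blocks themselves may differ across honest views, but their count is the same). Since every such block contributes the same flat reward $w$, we get $R^{j}_T(\mathcal{E}_{\mathcal{Z},H_T}) = X_r^T(\mathcal{E}_{\mathcal{Z},H_T}) \cdot w$ for each honest $P_j$.

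Finally, taking the maximum and minimum of this common value over $j : P_j \in S \setminus T$ trivially yields
\[
R^{\max}_{T}(\mathcal{E}_{\mathcal{Z},H_T}) = R^{\min}_{T}(\mathcal{E}_{\mathcal{Z},H_T}) = X_r^T(\mathcal{E}_{\mathcal{Z},H_T}) \cdot w,
\]
which is the claim. There is essentially no obstacle here: the entire content of the lemma has already been packaged into Lemma~\ref{secondlemma}, and the only added ingredient is the fixed-reward assumption converting block counts into absolute rewards. The one subtlety worth spelling out is simply that $S \setminus T$ is nonempty (guaranteed by the theorem's bound $t \le n-1$ on coalition size), so that $R^{\max}_T$ and $R^{\min}_T$ are well defined in the first place.
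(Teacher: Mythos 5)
Your proposal is correct and follows the paper's own argument essentially verbatim: both invoke Lemma~\ref{secondlemma} to equate, in every honest view at round $r$, the count of $T$-produced blocks with the number of successful rounds for $T$, and then multiply by the flat reward $w$ before taking the (trivial) max and min over honest participants. The only addition, the remark that $S\setminus T$ is nonempty so the max and min are well defined, is a harmless and reasonable observation.
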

\begin{proof}
The rewards of $T$ according to the local chain  of an honest participant $P_j$ come from the flat reward of each block included in this local chain that is produced by a participant of $T$. Moreover the flat reward of all the blocks gives the same amount of Bitcoin equal to $w$. \par By Lemma $\ref{secondlemma}$, at the end of the last complete round $r$ of execution $\mathcal{E}_{\mathcal{Z},H_T}$ all the honest participants  have local chains whose number of blocks produced by $T$ is equal to the successful rounds for $H_T$ at the end of the round $r$.  So the maximum reward of $T$ is equal to the minimum reward, as all the local chains of all the honest participants  contain the same number of blocks produced by $T$, and it is equal to the successful rounds for $T$ at the end of round $r$ multiplied by $w$. 

\end{proof}

\begin{theorem*} 
%{\label{theorem1}}
For any  $\delta_1 \in(0,0.25)$ such that %$4\cdot \delta_1 +(1+4\cdot\delta_1)\cdot s<1$
$4 \cdot \delta_1\cdot(1+s)+s<1$, where $s$ the expected number of solutions per round, the Bitcoin with fixed target in a synchronous setting where the reward of each block is a constant, is $(n-1, 4 \cdot \delta_1\cdot(1+s)+s,0)$-EVP according to the utility function absolute rewards (def.\ref{utilities}) \iffalse under a PPT static adversary with fixed cost \fi.
\end{theorem*}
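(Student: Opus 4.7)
The plan is to compare the coalition's utility in the two executions round-by-round, using that the coalition commits to asking exactly $k:=qt'-x$ random-oracle queries per round in both executions. First I would invoke Lemma~\ref{min=max}: in $\mathcal{E}_{\mathcal{Z},H_T}$, the value $U^{\min}_T(\mathcal{E}_{\mathcal{Z},H_T})$ and $U^{\max}_T(\mathcal{E}_{\mathcal{Z},H_T})$ both equal $w\cdot X_r^T$, where $X_r^T$ is the number of rounds that are successful for $T$ during the $r$ rounds of the execution. Independence across rounds gives $\mathbb{E}[X_r^T]=\mu:=r\bigl(1-(1-p)^k\bigr)$, and since $r=p(\kappa)$ is polynomial and the mean is at least linear in $r$ (when $k\ge 1$), a Chernoff bound yields $X_r^T\ge(1-\delta_1)\mu$ with probability $1-e^{-\Omega(\mu)}$, which is overwhelming in $\kappa$.

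Second, I would bound $U^{\max}_T(\mathcal{E'}_{\mathcal{Z},\mathcal{A}})$ from above by $w\cdot Y$, where $Y$ is the total number of blocks $T$ produces during the $r$ rounds of the deviating execution. This is because any block counted by $R^j_T$ in any honest chain must have been produced by some participant in $T$, and a given block can appear on that chain at most once. Now $Y$ is a sum of $rk$ independent Bernoullis with probability $p$, hence binomial with mean $\mu':=rkp$, so by Chernoff $Y\le(1+\delta_1)\mu'$ with overwhelming probability in $\kappa$. (The degenerate case $k=0$ forces both utilities to be identically $0$ and the EVP inequality holds trivially.)

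Third, I would connect $\mu'$ and $\mu$ via Bernoulli's inequality. Using $(1-p)^k\le 1-kp+\binom{k}{2}p^2$, one obtains $1-(1-p)^k\ge kp\bigl(1-(k-1)p/2\bigr)\ge kp(1-s/2)$, since $kp\le pqn\le s$ (where $s$ is the expected number of solutions per round across all $n$ participants). On the good event where both Chernoff bounds hold,
\[
\frac{U^{\max}_T(\mathcal{E'}_{\mathcal{Z},\mathcal{A}})}{U^{\min}_T(\mathcal{E}_{\mathcal{Z},H_T})}\;\le\;\frac{(1+\delta_1)\,rkp}{(1-\delta_1)\,rkp\,(1-s/2)}\;=\;\frac{1+\delta_1}{(1-\delta_1)(1-s/2)}.
\]
For $\delta_1\in(0,0.25)$ one has $(1+\delta_1)/(1-\delta_1)\le 1+4\delta_1$ (equivalent to $4\delta_1^2\le 2\delta_1$), and for $s<1$ one has $1/(1-s/2)\le 1+s$ (equivalent to $s(1-s)\ge 0$). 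Multiplying, the ratio is at most $(1+4\delta_1)(1+s)=1+4\delta_1(1+s)+s$, which under the hypothesis $4\delta_1(1+s)+s<1$ is a legitimate multiplicative slack. Since $U^{\min}_T(\mathcal{E}_{\mathcal{Z},H_T})\ge 0$ for absolute rewards (so $|U^{\min}_T|=U^{\min}_T$), this is exactly the EVP inequality with $\epsilon=4\delta_1(1+s)+s$ and $\epsilon'=0$. A union bound over the two bad events preserves overwhelming probability.

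\paragraph{Main obstacle.} The only delicate point is justifying that a deviating $\mathcal{A}$ cannot somehow boost $Y$ beyond what the Chernoff bound on the committed query budget allows. This is handled by the static-with-fixed-cost restriction: regardless of which chain $\mathcal{A}$ attempts to extend, exactly $rk$ independent random-oracle queries are asked, so $Y$ is binomial with the same parameters in every deviating strategy. The remainder is algebra, where the specific form $4\delta_1(1+s)+s$ is obtained by combining the two elementary multiplicative estimates above rather than expanding a more accurate single bound.
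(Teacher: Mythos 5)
Your proposal is correct and follows essentially the same route as the paper's proof: Lemma~\ref{min=max} plus a Chernoff lower bound on the honest coalition's utility, a Chernoff upper bound on the number of blocks the deviating fixed-cost coalition can produce (which upper-bounds $R^{\max}_T$), and the same algebra $\frac{1+\delta_1}{1-\delta_1}\le 1+4\delta_1$ combined with a slack factor $\le 1+s$ to reach $(1+4\delta_1)(1+s)=1+4\delta_1(1+s)+s$; the only cosmetic difference is that the paper lower-bounds $1-(1-p)^{qt'}$ by $\frac{pqt'}{1+pqt'}$ rather than via your quadratic expansion $kp(1-s/2)$. One small correction: in the paper's model $H_T$ follows the protocol and hence asks all $qt'$ queries per round in the baseline execution (not $k=qt'-x$ as you assume), but this only makes the true $U^{\min}_T(\mathcal{E}_{\mathcal{Z},H_T})$ stochastically larger than your lower bound, so your chain of inequalities still yields the stated EVP guarantee.
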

Note that our model is synchronous and as a result the expected number of solutions per round $s$  are close to zero. 
\begin{proof}
Let arbitrary $\delta_1 \in(0,0.25)$ such that $4\cdot \delta_1 +(1+4\cdot\delta_1)\cdot s<1 $.
We choose also an arbitrary $r$-admissible environment $\mathcal{Z}$ with input $1^{p'(\kappa)}$, where $\kappa$ the security parameter and an arbitrary adversary $\mathcal{A}$ static with fixed cost  that is PPT and it controls an arbitrary set $T$ with $t'$ participants  where $t' \in \lbrace 1,...,n-1\rbrace$.
Note that when the adversary controls $0$ participants  then the theorem is proved trivially as the utility of the adversary is zero regardless its strategy.\par
We will examine two executions of the Bitcoin  with the same environment, but with different adversary : In the first execution $\mathcal{E}_{\mathcal{Z},H_T}$ the adversary is $H_T$ and in the second execution 
$\mathcal{E'}_{\mathcal{Z},\mathcal{A}}$ the adversary is $\mathcal{A}$.
Note that the environment is the same in the two executions, which means that it gives the same inputs to the participants and it sends the same messages to the adversary, although it will receive different responses from the adversary and  specifically it will receive no response from $H_T$. In addition the environment will decide before the start of the execution the round $r=p(\kappa)\neq 0$ after which it will terminate the protocol. So the two executions will last the same number of rounds as they have the same environment.\par
In more detail, $H_T$ follows  protocol and ignores the messages that receives from the environment $\mathcal{Z}$. 
So in the execution $\mathcal{E}_{\mathcal{Z},H_T}$ the environment can do only the following:
\begin{itemize}
\item It gives transactions as input to all the participants.
\item It can send messages to $H_T$, but $H_T$ will ignore it.
\item It has decided when $\mathcal{E}_{\mathcal{Z},H_T}$  ended.
\item It receives outputs from the participants.
\end{itemize}
Firstly by Lemma $\ref{min=max}$ and by Chernoff bound \iffalse using techniques by \cite{backbone} \fi we have that:
\begin{equation}{\label{H_T}}
  U^{\min}_T(\mathcal{E}_{\mathcal{Z},H_T})\equiv   R^{\min}_{T}(\mathcal{E}_{\mathcal{Z},H_T})\equiv X_r^T (\mathcal{E}_{\mathcal{Z},H_T})\cdot w > \dfrac{p\cdot q \cdot t'}{1+p \cdot q \cdot t'}\cdot r\cdot (1-\delta_1)\cdot w> 0
\end{equation}
with overwhelming probability in $r$ and as $r=p(\kappa)$ also in $\kappa$.
In more detail this can be proved as follows \iffalse (we use some notation from \cite{backbone})\fi :
\begin{itemize}
\item $X^{T,m}(\mathcal{E}_{\mathcal{Z},H_T})$ is a Boolean random variable, where $X^{T,m}(\mathcal{E}_{\mathcal{Z},H_T})=1$ when round $m$ was successful for  $H_T$. The variables $\lbrace X^{T,m}(\mathcal{E}_{\mathcal{Z},H_T}) \rbrace_{m=1,...,r} $ are independent Bernoulli trials. 
\item $X^T_r(\mathcal{E}_{\mathcal{Z},H_T})\equiv \sum_{m=1}^{r}X^{T,m}(\mathcal{E}_{\mathcal{Z},H_T})$ is the number of the successful rounds for $H_T$ until the last complete round $r$ of $\mathcal{E}_{\mathcal{Z},H_T}$.
\item $\forall m $ $E[X^{T,m}(\mathcal{E}_{\mathcal{Z},H_T})]= 1-(1-p)^{qt'}$, where $p$ is the probability with which one query to the random oracle is successful and $q$ is the number of the queries that each participant can ask the oracle during each round. We consider $E[X^{T,m}\mathcal{E}_{\mathcal{Z},H_T})]$ as constant. Note that $H_T$ asks all the available queries.
\end{itemize}
By Lemma $\ref{min=max}$ we have that: 
\begin{align}{\label{T,H_T}}
R^{\min}_{T}(\mathcal{E}_{\mathcal{Z},H_T})\equiv R^{\max}_{T}(\mathcal{E}_{\mathcal{Z},H_T})\equiv X^T_r(\mathcal{E}_{\mathcal{Z},H_T})\cdot w
\end{align}

By Chernoff bound we have that for any $\delta_2\in(0,1)$ and as a result also for $\delta_1$:
\begin{equation} {\label{ber}}
Pr[X^T_r(\mathcal{E}_{\mathcal{Z},H_T}) > (1-\delta_1)\cdot( 1-(1-p)^{qt'})\cdot r]\geq 1-e^{-\dfrac{(\delta_1)^2 \cdot( 1-(1-p)^{qt'})\cdot r}{2}}
\end{equation} 

\par In addition with probability $1$ we will prove the following that is stated in \cite{backbone}:

\begin{equation}{\label{bound}}
(1-\delta_1)\cdot( 1-(1-p)^{qt'})\cdot r\geq \dfrac{p\cdot q \cdot t'}{1+p \cdot q \cdot t'}\cdot r\cdot (1-\delta_1)
\end{equation}
Specifically, it holds $1-p\leq e^{-p}$ and as a result
$1-(1-p)^{q\cdot t'}\geq 1-e^{-p\cdot q\cdot t'}$.\par  
Moreover, we have that $1-e^{-x}\geq x/(1+x)$ for $x\geq 0$ (here $x=p\cdot q\cdot t'$).
 and as a result :
$$ 1-(1-p)^{qt'}\geq 1-e^{-p\cdot q\cdot t'}\geq \dfrac{p\cdot q \cdot t'}{1+p \cdot q \cdot t'}$$
The above inequalities can be proved taking the functions $f(x)=(1-e^{-x})\cdot (1+x)-x$ and $g(x)= e^{-x}-1+x$ for $x\geq 0$ and studying their minimum value using their monotony. \par So by the above inequality ($\ref{bound}$), by Lemma $\ref{min=max}$ and by equation $(\ref{ber})$ we conclude equation ($\ref{H_T}$) . \par
 In addition  for any $\delta \in(0,1)$ and as a result also for $\delta_1$ it holds with overwhelming probability in $r$ and also in $\kappa$ that:
 \begin{equation}{\label{bad}}
   U^{\max}_T(\mathcal{E'}_{\mathcal{Z},\mathcal{A}}) \leq Z_r(\mathcal{E'}_{\mathcal{Z},\mathcal{A}})\cdot w <p\cdot q\cdot t'\cdot r\cdot (1+\delta_1)\cdot w
  \end{equation} where $Z_r(\mathcal{E'}_{\mathcal{Z},\mathcal{A}})$  is the number of the blocks the adversary has produced until the last complete round $r$ of $\mathcal{E'}_{\mathcal{Z},\mathcal{A}}$.

This can be proved with Chernoff bound \iffalse using notation and techniques by \cite{backbone}\fi taking into account the fact that the adversary cannot gain rewards from more blocks than these it has produced.
In more detail,  
\begin{itemize}
\item $Z_{i,j,k}(\mathcal{E'}_{\mathcal{Z},\mathcal{A}}) $ is a boolean random variable  and $Z_{i,j,k}(\mathcal{E'}_{\mathcal{Z},\mathcal{A}}) =1$ when at round $i$ of $\mathcal{E'}_{\mathcal{Z},\mathcal{A}}$ the $j$-th query to the random oracle  of the $k$-th  participant controlled by the adversary  is successful. $Z_{i,j,k}(\mathcal{E'}_{\mathcal{Z},\mathcal{A}})$ are independent Bernoulli trials.
\item $Z_r(\mathcal{E'}_{\mathcal{Z},\mathcal{A}})\equiv \sum_{i=1}^{r}\sum_{k=1}^{t'}\sum_{j=1}^{q-x_k} Z_{i,j,k}(\mathcal{E'}_{\mathcal{Z},\mathcal{A}})$. $$R^{\max}_T(\mathcal{E'}_{\mathcal{Z},\mathcal{A}}) \leq Z_r(\mathcal{E'}_{\mathcal{Z},\mathcal{A}})\cdot w   $$ as the adversary cannot  gain rewards for more blocks than these it  has produced.
\item $E[Z_{i,j,k}]=p$
 \end{itemize}

 By Chernoff bound we have that for any $\delta \in(0,1)$ thus also for $\delta_1$
\begin{equation}{\label{adrew}}
 Pr[Z_r(\mathcal{E'}_{\mathcal{Z},\mathcal{A}})< p\cdot( q\cdot t'-x)\cdot r\cdot (1+\delta_1)]\geq 1-e^{-\dfrac{(\delta_1)^2\cdot p\cdot (q\cdot t'-x)\cdot r}{3}}
\end{equation} 
In addition with probability $1$ it holds:
$$p\cdot( q\cdot t'-x)\cdot r\cdot (1+\delta_1)\leq p\cdot q\cdot t'\cdot r\cdot (1+\delta_1)$$

By the above equation we can conclude $(\ref{bad})$.  \par
Finally by equations (\ref{H_T}),(\ref{bad}) we have that 

\begin{equation} {\label{final}}
U^{\max}_T(\mathcal{E'}_{\mathcal{Z},\mathcal{A}}) \leq U^{\min}_T(\mathcal{E}_{\mathcal{Z},H_T})\cdot (1+4\cdot \delta_1 +(1+4\cdot\delta_1)\cdot s)
 \end{equation}
with overwhelming probability in $r$ and also in $\kappa$.\par
 
 \par In more detail this can be proved as follows: 
\begin{itemize}
\item Let $F$ be the final event where it holds $$ U^{\max}_T(\mathcal{E'}_{\mathcal{Z},\mathcal{A}}) \leq U^{\min}_T(\mathcal{E}_{\mathcal{Z},H_T})\cdot  (1+4\cdot \delta_1 +(1+4\cdot\delta_1)\cdot s)$$ We want to prove that $Pr[F]\geq 1-negl(r)$.
\item Let $A$ be the event where $$U^{\min}_T(\mathcal{E}_{\mathcal{Z},H_T})> \dfrac{p\cdot q \cdot t'}{1+p \cdot q \cdot t'}\cdot r\cdot (1-\delta_1)\cdot w $$ By ($\ref{H_T}$) we have $Pr[A]\geq 1-negl(r)$.
\item Let $B$ be the event where $$ U^{\max}_T(\mathcal{E'}_{\mathcal{Z},\mathcal{A}})< p\cdot q\cdot t'\cdot r\cdot (1+\delta_1)\cdot w$$ By ($\ref{bad}$) we have that $Pr[B]\geq 1-negl(r)$.
\item Using the above statements we have that $$Pr[A\cap B]=Pr[A]-Pr[A\cap\neg B] \geq 1-negl(r)$$
At this point in order to prove ($\ref{final}$) we only have to prove that $$Pr[A\cap B] \leq Pr[F]$$
In order to prove the above statement we will suppose that the event
$A\cap B$ holds and we will prove that the event $F$ holds.\par So we have that when $A\cap B$ holds  then:
 \end{itemize}

\begin{align*}
 U^{\max}_T(\mathcal{E'}_{\mathcal{Z},\mathcal{A}})  &\leq p\cdot q\cdot t'\cdot r\cdot (1+\delta_1)\cdot w \\ &\leq \dfrac{p\cdot q \cdot t'}{1+p \cdot q \cdot t'}\cdot r\cdot (1-\delta_1)\cdot w  \cdot (1+4\cdot \delta_1 +(1+4\cdot\delta_1)\cdot s) \\& <  U^{\min}_T(\mathcal{E}_{\mathcal{Z},H_T}) \cdot (1+4\cdot \delta_1 +(1+4\cdot\delta_1)\cdot s)
\end{align*} 
\par This means that when
$A\cap B$ holds then also $F$ holds that is what we want to prove. \par  Note that 
$$\dfrac{1+\delta_1}{1-\delta_1}\leq 1+4\cdot \delta_1$$ as 
$\delta_1 \in (0,0.25)$. This can be proved if we find the minimum of the
function $$f(x)=-(1+x)/(1-x)+1+4x $$ by its monotony.
\par Moreover $p\cdot q\cdot t'\leq p\cdot q\cdot n=s$, where $s$ the expected number of solutions  of all the participants per round.  Note that when the system is synchronized $s$ is close to $0$. 
 \end{proof}

\ignore{
\begin{remark}
Note that in our analysis,  we assume throughout  that the target used in the proof of work function remains  fixed (as in the \cite{backbone}). It is easy to see that if this does not hold then,  the adversary, using selfish mining \cite{selfish},  then the target becomes greater and the difficulty decreases as the total computational power seems smaller than it is. So the adversary can produce easier blocks and maybe has incentives to deviate.
\end{remark}
}
\subsection{Utility Equivalent to Absolute Rewards-Block Reward Changes}
\label{rewardchange1}
We will prove that the previous result holds also in the case when (i) the block reward changes every at least $l\cdot \kappa$ rounds where $l$ a positive constant and $\kappa$ the security parameter and (ii) the environment terminates the execution at least $l \cdot \kappa$ rounds after the last change of the block reward.  \par
By Lemma $\ref{secondlemma}$ we have the following lemma.
\begin{lemma}{\label{min=max2}}
  For every $r$-admissible environment $\mathcal{Z}$ with input $1^{p'(\kappa)}$, where $\kappa$ the security parameter it holds  
\begin{equation*}
R^{\max}_{T}(\mathcal{E}_{\mathcal{Z},H_T})\equiv  R^{\min}_{T}(\mathcal{E}_{\mathcal{Z},H_T})\equiv \sum _{j=1}^{m+2} X_{r_j}^T (\mathcal{E}_{\mathcal{Z},H_T})\cdot w_{j-1}     
\end{equation*} where $r_1,...,r_m$ are the rounds when the block reward changes, $r_0$ is the first round, $r_{m+1}$ the last complete round of execution $\mathcal{E}_{\mathcal{Z},H_T}$, $r_{m+2}=r_{m+1}+1$,  $w_0,w_1,..., w_m=w_{m+1}$ are the block rewards respectively and  $X_{r_j}^T (\mathcal{E}_{\mathcal{Z},H_T})$ are the successful rounds for $T$ between the rounds $r_{j-1}$ and $r_j-1$ included $r_{j-1}$ and $ r_j-1$.
\end{lemma}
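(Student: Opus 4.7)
The plan is to reduce this lemma to Lemma~\ref{secondlemma} by partitioning the execution into the epochs delimited by the rounds $r_{0},r_{1},\ldots,r_{m+1}$ during which the block reward is constant, and then summing epoch-by-epoch. The key observation is that the reward attached to a block is fixed at the round in which the block is produced (i.e.\ it equals $w_{j-1}$ for any block whose mining round lies in $[r_{j-1},r_{j}-1]$), so the total reward of $T$ in any honest local chain is a weighted sum of the per-epoch counts of $T$-blocks included in that chain.

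First I would fix an arbitrary $r$-admissible environment $\mathcal{Z}$ and an honest participant $P_{j^\star}\in S\setminus T$. Applying Lemma~\ref{secondlemma} at the end of each boundary round $r_j-1$ (for $j=1,\ldots,m+1$) tells us that the number of $T$-produced blocks in $P_{j^\star}$'s local chain at that time equals the number of rounds in $[1,r_j-1]$ that were successful for $T$. Taking differences between successive boundaries yields that the number of $T$-produced blocks in $P_{j^\star}$'s local chain that were mined during the $j$-th epoch equals $X^{T}_{r_j}(\mathcal{E}_{\mathcal{Z},H_T})$, the number of successful rounds for $T$ within that epoch. Summing over $j=1,\ldots,m+2$ (with the convention $w_{m+1}=w_m$) and multiplying by the correct per-epoch reward $w_{j-1}$ gives the stated expression for $R^{j^\star}_T(\mathcal{E}_{\mathcal{Z},H_T})$.

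Since Lemma~\ref{secondlemma} guarantees that every honest participant's local chain contains the \emph{same} number of $T$-produced blocks at the end of each round, and since the epoch assignment of each of these blocks is also identical across all honest local views (the mining round of a block is recorded in the block itself and is therefore the same wherever the block appears), the weighted sums coincide across all honest participants. Consequently $R^{j}_T(\mathcal{E}_{\mathcal{Z},H_T})$ is the same value for every $P_j\in S\setminus T$, which immediately yields $R^{\max}_T(\mathcal{E}_{\mathcal{Z},H_T})=R^{\min}_T(\mathcal{E}_{\mathcal{Z},H_T})$ equal to the claimed sum.

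The only subtle point, and likely the main obstacle to making the argument airtight, is justifying that each $T$-block in an honest local chain is unambiguously associated with the epoch in which it was mined (so that its reward $w_{j-1}$ is well defined and epoch-invariant across honest views). This follows from the fact that $H_T$ plays honestly and only diffuses a $T$-block in the round in which it is actually produced, so the round index carried by the block is the true mining round; combined with Lemma~\ref{secondlemma} this gives the per-epoch block counts, and assumption~(ii), that the execution ends at least $l\cdot\kappa$ rounds after the last reward change, merely ensures the final epoch $[r_m,r_{m+1}]$ is nonempty so that the summation range is well defined.
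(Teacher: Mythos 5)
Your proposal is correct and takes essentially the same approach as the paper, which gives no separate argument for this lemma beyond invoking Lemma~\ref{secondlemma} epoch-by-epoch and summing the per-epoch successful-round counts weighted by the corresponding block rewards --- exactly your decomposition. One tightening worth noting: the differencing step at the epoch boundaries is most cleanly justified by observing that all $T$-blocks produced in a single round extend chains of the same length and hence occupy the same height, so any honest final chain contains at most $X^{T}_{r_j}$ blocks mined in epoch $j$; combined with Lemma~\ref{secondlemma} applied at the last complete round, this forces the per-epoch counts (and thus the weighted sum) to be exactly as claimed and identical across all honest views.
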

Note that $X_{r_j}^T (\mathcal{E}_{\mathcal{Z},H_T})$ is a sum of independent Boolean random variables that are Bernoulli trials.
In addition 
\begin{lemma}{\label{min=max2b}}
\begin{equation*}
R^{\max}_{T}(\mathcal{E'}_{\mathcal{Z},\mathcal{A}})\leq \sum_{j=1}^{m+2}  Z_{r_j} (\mathcal{E'}_{\mathcal{Z},\mathcal{A}})\cdot w_{j-1},
\end{equation*}
 where $r_1,...,r_m$ are the rounds when the block reward changes, $r_0$ is the first round,  $r_{m+1}$ the last complete round of execution   $\mathcal{E}_{\mathcal{Z},H_T}$,  $r_{m+2}= r_{m+1}+1$,  $w_0, w_1,..., w_m=w_{m+1}$ are the block rewards respectively,  $Z_{r_j} (\mathcal{E'}_{\mathcal{Z},\mathcal{A}})$ is the number of blocks produced by the adversary between the rounds $r_{j-1}$ and $r_j-1 $ included $r_{j-1}$ and $ r_j-1$. 
\end{lemma}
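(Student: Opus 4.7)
The plan is to fix an arbitrary honest participant $P_j \in S \setminus T$ and bound the reward $R^{j}_{T}(\mathcal{E'}_{\mathcal{Z},\mathcal{A}})$ that the adversary accrues according to $P_j$'s local chain at the end of the last complete round $r_{m+1}$. By definition, this reward equals the sum of the flat block rewards of those blocks appearing in $P_j$'s local chain that were produced by a participant in $T$. Since each such block was mined in a specific round of the execution, and the block reward is a function of that round (namely $w_{j-1}$ for rounds in the epoch $[r_{j-1}, r_j - 1]$), the contribution of each adversarial block to $R^{j}_{T}(\mathcal{E'}_{\mathcal{Z},\mathcal{A}})$ is uniquely determined by the epoch in which it was mined.

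Next, for each epoch index $j \in \{1,\ldots,m+2\}$, I would let $Y_j$ denote the number of blocks in $P_j$'s local chain that were produced by $T$ during rounds $r_{j-1},\ldots,r_j - 1$. The crucial observation is that a block cannot appear in any honest local chain unless it was actually mined by some participant of $T$ during that epoch, so $Y_j \leq Z_{r_j}(\mathcal{E'}_{\mathcal{Z},\mathcal{A}})$, where $Z_{r_j}$ counts all adversarial blocks produced in epoch $j$ (including those that were withheld, orphaned, or never adopted). Consequently,
\[
R^{j}_{T}(\mathcal{E'}_{\mathcal{Z},\mathcal{A}}) \;=\; \sum_{j=1}^{m+2} Y_j \cdot w_{j-1} \;\leq\; \sum_{j=1}^{m+2} Z_{r_j}(\mathcal{E'}_{\mathcal{Z},\mathcal{A}}) \cdot w_{j-1}.
\]
Since the upper bound on the right-hand side does not depend on the choice of honest participant $P_j$, taking the maximum over $P_j \in S \setminus T$ yields the claimed bound on $R^{\max}_{T}(\mathcal{E'}_{\mathcal{Z},\mathcal{A}})$.

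The only subtle point is justifying that every adversarial block credited in $P_j$'s chain corresponds to a distinct actual mining event in the correct epoch. This follows from the fact that each valid block in a chain is uniquely identified (by its hash) and was produced by exactly one successful random oracle query at a specific round, together with the assumption that the target is fixed so the assignment of a block to its mining round (hence to its epoch $j$ and reward $w_{j-1}$) is unambiguous. Once this observation is in place, the lemma is immediate from counting, and no probabilistic argument is required here — the inequality holds pointwise over executions. I expect this careful accounting of blocks to epochs (as opposed to, e.g., double-counting a block that is re-released in a later epoch) to be the only place where one must be cautious.
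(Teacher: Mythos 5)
Your argument is correct and matches the paper's (implicit) justification: the paper states this lemma without an explicit proof, relying on the same observation it uses in the fixed-reward case, namely that in any honest view the adversary cannot be credited for more blocks than it actually produced, applied per reward epoch with the corresponding reward $w_{j-1}$, and this is exactly your pointwise counting argument with the uniform bound then passed to the maximum over honest participants.
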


\begin{theorem*} \iffalse {\label{theorem3}}\fi
Supposing that (i) the block reward changes every at least $l\cdot \kappa$ rounds where $l$ a positive constant and $\kappa$ the security parameter and (ii) the environment terminates the execution at least $l \cdot \kappa$ rounds after the last change of the block reward then it holds: for any  $\delta_1 \in(0,0.25)$ such that %$4\cdot \delta_1 +(1+4\cdot\delta_1)\cdot s<1$ 
$4 \cdot \delta_1\cdot(1+s)+s<1$, where $s$ the expected number of solutions per round, the Bitcoin with fixed target in a synchronous setting is $(n-1, 4 \cdot \delta_1\cdot(1+s)+s,0)$-EVP according to the utility function absolute rewards (def.\ref{utilities}).
\end{theorem*}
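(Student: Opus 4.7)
The plan is to reduce to the fixed-reward case by applying the argument of Theorem \ref{absolutebitcoin} (or more precisely, its underlying Chernoff estimates) separately to each of the $m+1$ time intervals on which the block reward is constant, and then to combine the resulting per-interval bounds via a union bound. Using the notation of Lemmas \ref{min=max2} and \ref{min=max2b}, within the $j$-th interval $[r_{j-1}, r_j - 1]$ the block reward is constantly $w_{j-1}$, the random variable $X_{r_j}^T(\mathcal{E}_{\mathcal{Z},H_T})$ is a sum of independent Bernoulli trials with parameter $1-(1-p)^{qt'}$, and $Z_{r_j}(\mathcal{E'}_{\mathcal{Z},\mathcal{A}})$ is upper bounded by a sum of independent Bernoulli trials with parameter $p$; these are the exact same random variables as in the fixed-reward proof restricted to that window.

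First I would apply the two Chernoff bounds (equations \eqref{ber} and \eqref{adrew} from the previous proof, with $r$ replaced by the interval length $r_j - r_{j-1}$) to obtain, for each $j \in \{1,\ldots,m+2\}$,
\[
X^T_{r_j}(\mathcal{E}_{\mathcal{Z},H_T}) > \frac{p\cdot q\cdot t'}{1+p\cdot q\cdot t'}\cdot(r_j - r_{j-1})\cdot(1-\delta_1)
\]
and
\[
Z_{r_j}(\mathcal{E'}_{\mathcal{Z},\mathcal{A}}) < p\cdot q\cdot t'\cdot(r_j - r_{j-1})\cdot(1+\delta_1),
\]
each holding except with probability exponentially small in $p\cdot q\cdot t' \cdot (r_j - r_{j-1})$. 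By hypothesis (i) each interval has length at least $l\cdot\kappa$, and by hypothesis (ii) the final interval also has length at least $l\cdot\kappa$, so each of these failure probabilities is negligible in $\kappa$. Since the total number of rounds $r = p(\kappa)$ is polynomial and each interval has length at least $l\kappa$, the number of intervals $m+2$ is bounded by $p(\kappa)/(l\kappa) + 2$, which is polynomial in $\kappa$. A union bound then yields that all $2(m+2)$ per-interval estimates hold simultaneously with overwhelming probability in $\kappa$.

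Next I would multiply the per-interval estimates by the corresponding reward $w_{j-1} \geq 0$ and sum them, using Lemmas \ref{min=max2} and \ref{min=max2b}, to obtain
\[
U^{\min}_T(\mathcal{E}_{\mathcal{Z},H_T}) > \frac{p\cdot q\cdot t'}{1+p\cdot q\cdot t'}\cdot(1-\delta_1)\cdot\sum_{j=1}^{m+2}(r_j - r_{j-1})\cdot w_{j-1}
\]
and
\[
U^{\max}_T(\mathcal{E'}_{\mathcal{Z},\mathcal{A}}) < p\cdot q\cdot t'\cdot(1+\delta_1)\cdot\sum_{j=1}^{m+2}(r_j - r_{j-1})\cdot w_{j-1}.
\]
Dividing the second by the first and reusing the two algebraic facts $(1+\delta_1)/(1-\delta_1)\leq 1+4\delta_1$ (valid for $\delta_1\in(0,0.25)$) and $p\cdot q\cdot t'\leq p\cdot q\cdot n = s$, exactly as in the proof of Theorem \ref{absolutebitcoin}, gives
\[
U^{\max}_T(\mathcal{E'}_{\mathcal{Z},\mathcal{A}}) \leq U^{\min}_T(\mathcal{E}_{\mathcal{Z},H_T})\cdot\bigl(1 + 4\delta_1\cdot(1+s) + s\bigr),
\]
which is the desired $(n-1,\; 4\delta_1(1+s)+s,\; 0)$-EVP bound.

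The main obstacle is the union bound argument: I need to make sure that even though the number of reward-change intervals $m$ may grow with $\kappa$, the per-interval failure probabilities shrink fast enough in $\kappa$ to compensate. This is precisely why assumptions (i) and (ii) are stated in terms of $l\cdot\kappa$ rather than merely ``polynomially many'' rounds: they force every interval (including the first and last) to have length at least $l\kappa$, so that each of the Chernoff bounds is of the form $e^{-\Omega(\kappa)}$ and remains negligible after a union bound over polynomially many intervals. Without assumption (ii), a very short final interval would give a constant failure probability and break the argument; without (i), a similar problem could arise for intermediate intervals.
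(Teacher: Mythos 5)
Your plan is essentially the paper's own proof: decompose the execution into the constant-reward windows of Lemmas \ref{min=max2} and \ref{min=max2b}, rerun the two Chernoff estimates of Theorem \ref{absolutebitcoin} on each window (each of length governed by $l\cdot\kappa$, with implicitly a union bound over the polynomially many windows), weight by the nonnegative rewards $w_{j-1}$, sum, and finish with the same algebra $(1+\delta_1)/(1-\delta_1)\leq 1+4\delta_1$ and $p\cdot q\cdot t'\leq s$. One small correction to how you state it: with the indexing you adopt, the window $j=m+2$ is the single round $r_{m+1}$ (since $r_{m+2}=r_{m+1}+1$), so it does \emph{not} have length at least $l\cdot\kappa$ and a Chernoff bound applied to it alone fails with constant probability; the paper handles this by merging the windows $j=m+1$ and $j=m+2$, which carry the same reward $w_m=w_{m+1}$, and applying the Chernoff bounds to the combined window of length $r_{m+1}-r_m+1\geq l\cdot\kappa$ — hypothesis (ii) is used exactly for this merged window, and with that adjustment your argument goes through as in the paper.
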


\begin{proof}
%By equation $\ref{eq1}$

We have for $j\in \lbrace 1,...,m\rbrace $ for any $\delta_2 \in (0,1)$
\begin{align*}
  X_{r_j}^T (\mathcal{E}_{\mathcal{Z},H_T})\cdot w_j > \dfrac{p\cdot q \cdot t'}{1+p \cdot q \cdot t'}\cdot ( r_j-r_{j-1})\cdot (1-\delta_2)\cdot w_j> 0
\end{align*}
with overwhelming probability in $r_j-r_{j-1}$ and as $r_j-r_{j-1} \geq l \cdot \kappa$ also in $\kappa$.\par 
In addition 
\begin{align*}
( X_{r_{m+1}}^T (\mathcal{E}_{\mathcal{Z},H_T})+X_{r_{m+2}}^T (\mathcal{E}_{\mathcal{Z},H_T}))\cdot w_m > \dfrac{p\cdot q \cdot t'}{1+p \cdot q \cdot t'}\cdot ( r_{m+1}-r_{m} +1)\cdot (1-\delta_2)\cdot w_m
\end{align*}
with overwhelming probability in $r_{m+1}-r_{m}+1 \geq l \cdot \kappa$.\par
Moreover for $j\in \lbrace 1,...,m\rbrace $  %and by equation $\ref{eq2}$ 
for any $\delta_1 \in (0,1)$ it holds
\begin{align*}
Z_{r_j}(\mathcal{E'}_{\mathcal{Z},\mathcal{A}})< p\cdot q\cdot t'\cdot ( r_j-r_{j-1})\cdot (1+\delta_1) 
\end{align*}
with overwhelming probability in $r_j-r_{j-1}$ and as $r_j-r_{j-1}\geq l \cdot \kappa$ also in $\kappa$. \par
Also 

\begin{align*}
Z_{r_{m+1}}(\mathcal{E'}_{\mathcal{Z},\mathcal{A}})+Z_{r_{m+2}}(\mathcal{E'}_{\mathcal{Z},\mathcal{A}})< p\cdot q\cdot t'\cdot ( r_{m+1}-r_{m}+1)\cdot (1+\delta_1) 
\end{align*}
with overwhelming probability in $r_{m+1}-r_{m}+1 \geq l\cdot \kappa$. \par
So
for $j\in \lbrace 1,...,m\rbrace $ it holds for any $\delta_1 \in (0,0.25)$
\begin{align*}
Z_{r_j}(\mathcal{E'}_{\mathcal{Z},\mathcal{A}})\cdot w_j < X_{r_j}^T (\mathcal{E}_{\mathcal{Z},H_T})\cdot w_j\cdot( 1+4\cdot \delta_1 +(1+4\cdot\delta_1) \cdot s)
\end{align*}
with overwhelming probability in $\kappa$. \par
In addition 
\begin{align*}
(Z_{r_{m+1}}(\mathcal{E'}_{\mathcal{Z},\mathcal{A}})+Z_{r_{m+2}}(\mathcal{E'}_{\mathcal{Z},\mathcal{A}}) )\cdot w_m  <(X_{r_{m+1}}^T (\mathcal{E}_{\mathcal{Z},H_T})+X_{r_{m+2}}^T (\mathcal{E}_{\mathcal{Z},H_T}))\cdot w_m\cdot l
\end{align*}
with overwhelming probability in $\kappa$, where $l=( 1+4\cdot \delta_1 +(1+4\cdot\delta_1)\cdot s)$.

As a result with overwhelming probability in $\kappa$ it holds for any $\delta_1 \in (0,0.25)$
\begin{align*}
&R^{\max}_{T}(\mathcal{E'}_{\mathcal{Z},\mathcal{A}})\\&\leq \sum_{j=1}^{m} [ Z_{r_j} (\mathcal{E'}_{\mathcal{Z},\mathcal{A}})\cdot w_{j-1} ] + (Z_{r_{m+1}}(\mathcal{E'}_{\mathcal{Z},\mathcal{A}})+Z_{r_{m+2}}(\mathcal{E'}_{\mathcal{Z},\mathcal{A}}) )\cdot w_m\\ &< \sum_{j=1}^{m}[X_{r_j}^T (\mathcal{E}_{\mathcal{Z},H_T})\cdot w_{j-1}\cdot (1+4\cdot \delta_1 +(1+4\cdot\delta_1)\cdot s)]\\ &+ (X_{r_{m+1}}^T (\mathcal{E}_{\mathcal{Z},H_T})+ X_{r_{m+2}}^T (\mathcal{E}_{\mathcal{Z},H_T}))\cdot w_m \cdot (1+4\cdot \delta_1 +(1+4\cdot\delta_1)\cdot s)\\& \overset{w_m=w_{m+1}}\leq  (\sum_{j=1}^{m+2}X_{r_j}^T (\mathcal{E}_{\mathcal{Z},H_T})\cdot w_{j-1})\cdot (1+4\cdot \delta_1 +(1+4\cdot\delta_1)\cdot s)\\& \equiv  R^{\min}_{T}(\mathcal{E}_{\mathcal{Z},H_T}) \cdot  (1+4\cdot \delta_1 +(1+4\cdot\delta_1)\cdot s)
\end{align*}
\iffalse
So the same result as the previous subsection holds even if the block reward changes, with the restriction that it changes every $polyn(\kappa)$ rounds and the environment terminates the execution after $polyn(\kappa)$ rounds after the last change of the block reward. \fi 
\end{proof}

\subsection{Absolute Rewards Minus Absolute Cost }
In this subsection we prove that if a static adversary with fixed cost wants to maximize its profit and the cost of each query to the random oracle is small enough compared to the block reward, then the adversary has no incentives to deviate from the Bitcoin protocol even if it controls $n-1$ participants. \par

When we say profit we mean absolute rewards minus absolute cost or in other words the flat reward the adversary gets from the blocks that it has produced and are included in the public ledger
 minus the cost that it has paid due to the queries to the random oracle. %This happens because if it deviates it will gain only an amount close to zero (with overwhelming probability) more compared to following the protocol, supposing that the cost of each query is quite small compared to the block reward multiplied by the probability with which each query is successful.
 \par  Note that the smaller the cost of each query is, the better EVP we have. 
We suppose in this subsection for simplicity that the reward of each block is fixed and equal to $w$. However the theorem also holds when we assume that (i) the block reward changes every at least $l\cdot \kappa$ rounds where $l$ a positive constant and $\kappa$ the security parameter and (ii) the environment terminates the execution at least $l \cdot \kappa$ rounds after the last change of the block reward. The exact theorems and proofs of this case are given in the next subsection.

\begin{theorem*}%{\label{theorem4}}
Suppose that there exists $\phi \in (0,1-s)$ such that $c<p\cdot w \cdot \phi/(1+p\cdot q\cdot (n-1))$. Then, supposing that the reward of each block is a constant $w$, it holds: for any $\delta_1 \in(0,0.25)$, such that $c\leq p\cdot w \cdot (1-\delta_1)\cdot \phi/(1+p\cdot q\cdot (n-1))$ and $4 \cdot \delta_1\cdot(1+s)+s<1- \phi$, where $s$ the expected number of solutions per round, the Bitcoin with fixed target in a synchronous setting is $(n-1, (4 \cdot \delta_1\cdot(1+s)+s)/(1- \phi),0)$-EVP according to the utility function absolute rewards minus absolute cost (def.\ref{utilities}).
\end{theorem*}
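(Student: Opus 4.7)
The plan is to mirror the structure of Theorem~\ref{absolutebitcoin}, but now propagate the (deterministic) cost terms through the EVP inequality. Fix $\delta_1 \in (0,0.25)$ satisfying both hypotheses, an arbitrary $r$-admissible environment $\mathcal{Z}$, and an arbitrary static adversary with fixed cost $\mathcal{A}$ controlling a set $T$ with $t' \le n-1$ participants who commits, for each $P_{i_m} \in T$, to asking $q - x_m$ queries per round; let $x = \sum_{m=1}^{t'} x_m$. The first key observation is that both the honest adversary $H_T$ and $\mathcal{A}$ have deterministic total costs across the $r$ rounds: $\sum_{l : P_l \in T} C_l(\mathcal{E}_{\mathcal{Z},H_T}) = c\, q\, t'\, r$ and $\sum_{l : P_l \in T} C_l(\mathcal{E'}_{\mathcal{Z},\mathcal{A}}) = c\, (q t' - x)\, r$. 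So the cost contribution to $U^{\min}_T$ and $U^{\max}_T$ is deterministic and independent of the honest participant $P_j$ we quantify over, and it suffices to reuse the Chernoff bounds already established in the proof of Theorem~\ref{absolutebitcoin}.

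Next, I would plug in those bounds. From equations~(\ref{H_T}) and~(\ref{bad}) of the absolute-rewards proof (with the tighter cost-aware form $R^{\max}_T(\mathcal{E'}_{\mathcal{Z},\mathcal{A}}) \le Z_r \cdot w < p\,(qt' - x)\, r\,(1+\delta_1)\, w$ from equation~(\ref{adrew}) before the $x=0$ relaxation), I obtain, with overwhelming probability in $\kappa$,
\[
U^{\max}_T(\mathcal{E'}_{\mathcal{Z},\mathcal{A}}) \;\le\; (qt' - x)\, r\, \bigl(p w (1+\delta_1) - c\bigr), \qquad
U^{\min}_T(\mathcal{E}_{\mathcal{Z},H_T}) \;>\; q t'\, r\, \left(\tfrac{p w (1-\delta_1)}{1 + p q t'} - c\right).
\]
Since $pw(1+\delta_1) - c > 0$ (by the hypothesis $c < pw\phi/(1+pq(n-1)) < pw$), the upper bound on $U^{\max}_T$ is monotonically non-increasing in $x$, so its worst case for us is $x = 0$; I will henceforth replace the factor $(qt'-x)r$ by $qt' r$.

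The core step is then to check that $A \le (1+\epsilon) B$, where $A = pw(1+\delta_1) - c$, $B = \tfrac{pw(1-\delta_1)}{1+pqt'} - c$, and $\epsilon = (4\delta_1(1+s) + s)/(1-\phi)$. Using $c(1+pqt') \le c(1+pq(n-1)) \le pw(1-\delta_1)\phi$ from the hypothesis, one gets $B \ge \tfrac{pw(1-\delta_1)(1-\phi)}{1+pqt'}$, while a direct computation yields $A - B = pw \cdot \tfrac{pqt'(1+\delta_1) + 2\delta_1}{1+pqt'}$. Dividing and using $pqt' \le pqn = s$ gives
\[
\frac{A-B}{B} \;\le\; \frac{s(1+\delta_1) + 2\delta_1}{(1-\delta_1)(1-\phi)},
\]
and an elementary manipulation (using $\delta_1 \in (0,0.25)$, so $1-2\delta_1 > 0$) shows $s(1+\delta_1) + 2\delta_1 \le (1-\delta_1)\bigl(4\delta_1(1+s) + s\bigr)$, which is exactly the inequality needed to conclude $(A-B)/B \le \epsilon$. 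Since the hypothesis on $c$ also makes $B > 0$, the bound on $U^{\min}_T$ is strictly positive with overwhelming probability, so the absolute value in the EVP definition is harmless and we get $U^{\max}_T(\mathcal{E'}_{\mathcal{Z},\mathcal{A}}) \le (1+\epsilon)\, U^{\min}_T(\mathcal{E}_{\mathcal{Z},H_T})$ w.h.p., as desired, with $\epsilon' = 0$.

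The main obstacle is the last algebraic step: a naive bound of the form $A \le pw(1+\delta_1)$ (dropping the $-c$ in the numerator) loses a factor of $\tfrac{1}{1-\phi}$ and fails by exactly $\phi$. One must keep the ``$-c$'' in both $A$ and $B$, use the hypothesis on $c$ to pull a $(1-\phi)$ factor out of $B$, and then verify the polynomial inequality $s(1+\delta_1) + 2\delta_1 \le (1-\delta_1)(4\delta_1(1+s)+s)$, whose slack of $2\delta_1(1+s)(1-2\delta_1)$ is precisely what the assumption $\delta_1 < 0.25$ buys. The rest of the argument (the concentration bounds and the synchronous-setting bound $pqt' \le s$) carries over verbatim from Theorem~\ref{absolutebitcoin}.
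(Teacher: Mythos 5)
Your proposal is correct and takes essentially the same route as the paper's proof: the same two concentration bounds on $U^{\min}_T(\mathcal{E}_{\mathcal{Z},H_T})$ and $U^{\max}_T(\mathcal{E'}_{\mathcal{Z},\mathcal{A}})$, the same reduction to $x=0$ via monotonicity, the cancellation of the deterministic cost terms, and the use of the hypothesis on $c$ to pull out the $(1-\phi)$ factor from the honest lower bound. Your final check of $s(1+\delta_1)+2\delta_1\le(1-\delta_1)\bigl(4\delta_1(1+s)+s\bigr)$ is just an equivalent repackaging of the paper's step combining $\tfrac{1+\delta_1}{1-\delta_1}\le 1+4\delta_1$ with $p\cdot q\cdot t'\le s$.
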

%The proof is given in Appendix \ref{theorem4ref}.
\begin{proof}
We choose an arbitrary $r$-admissible environment $\mathcal{Z}$ with input $1^{p'(\kappa)}$, where $\kappa$ the security parameter and an arbitrary adversary $\mathcal{A}$ static with fixed cost that is PPT and it controls an arbitrary set $T$ with $t'$ participants  where $t' \in \lbrace 1,...,n-1\rbrace$. The adversary as described above has chosen the number of the queries that each participant controlled by the adversary will not ask during each round. Let $x$ the total number of the queries that all the participants controlled by the adversary will not ask during each round.\par
We will have two executions of the Bitcoin protocol with the same environment, but with different adversary : In the first execution $\mathcal{E}_{\mathcal{Z},H_T}$ the adversary is $H_T$ and in the second execution 
$\mathcal{E'}_{\mathcal{Z},\mathcal{A}}$ the adversary is $\mathcal{A}$.\par
Let  $\phi \in (0,1-s)$ such that $c< p\cdot w \cdot \phi /(1+p\cdot q\cdot (n-1))\leq p\cdot w \cdot \phi /(1+p\cdot q\cdot t')$.
We choose an arbitrary  $\delta_1 \in(0,0.25)$ such that $c\leq p\cdot w \cdot (1-\delta_1)\cdot \phi/(1+p\cdot q\cdot (n-1))$ and  $4\cdot \delta_1 +(1+4\cdot\delta_1)\cdot s<1- \phi$ . Then %by previous equation, %($\ref{eq1}$),
by hypothesis and by the fact that $H_T$ follows the  protocol and asks all the queries that are available to the participants controlled by the adversary during each round  we have that :\par 
\begin{align}{\label{H_Tthird}}
 U^{\min}_T(\mathcal{E}_{\mathcal{Z},H_T})&> \dfrac{p\cdot q \cdot t'}{1+p \cdot q \cdot t'}\cdot r\cdot (1-\delta_1)\cdot w -c\cdot q \cdot t' \cdot r\\ &\geq \dfrac{p\cdot q \cdot t'}{1+p \cdot q \cdot t'}\cdot r\cdot (1-\delta_1)\cdot w -\dfrac{q\cdot t'\cdot r\cdot  p\cdot w\cdot (1-\delta_1)\cdot\phi}{(1+p\cdot q\cdot t')}\\{\label{H_Tthird'}}&=\dfrac{p\cdot q \cdot t'}{1+p \cdot q \cdot t'}\cdot r\cdot (1-\delta_1)\cdot w \cdot (1-\phi)\\&>0
\end{align}
with overwhelming probability in $r$ and also in $\kappa$. 
\par Regarding $\mathcal{E'}_{\mathcal{Z},\mathcal{A}}$, given that during each round %by previous equation %($ \ref{eq2}$)  that
the adversary asks all the available queries except for the $x$ queries that it has specified in the beginning of the execution, the following holds with overwhelming probability in $\kappa$:
\begin{equation}{\label{H_Tsec}}
U^{\max}_T(\mathcal{E'}_{\mathcal{Z},\mathcal{A}})<p\cdot (q\cdot t'-x)\cdot r\cdot (1+\delta_1)\cdot w- c\cdot (q \cdot t'  -x)\cdot r
\end{equation}

By %the equation ($\ref{H_Tsec}$) and by 
our assumption that $ c\leq p\cdot w\cdot (1-\delta_1)\cdot\phi/(1+p\cdot q\cdot t')$ for $\phi\in(0,1-s)$  we have that
 $f(x)= p\cdot (q\cdot t'-x)\cdot r\cdot (1+\delta_1)\cdot w- c\cdot (q \cdot t'  -x)\cdot r$  for $x\in [0,\infty)$ is decreasing. \par So  it holds with overwhelming probability in $\kappa$ that:
\begin{equation}{\label{badfinal}}
U^{\max}_T(\mathcal{E'}_{\mathcal{Z},\mathcal{A}})< p\cdot q\cdot t'\cdot r\cdot (1+\delta_1)\cdot w- c\cdot q \cdot t' \cdot r 
\end{equation}

As a result it holds  with overwhelming probability in $r$ and in $\kappa$
\begin{align*}
& U^{\max}_T(\mathcal{E'}_{\mathcal{Z},\mathcal{A}})-U^{\min}_T(\mathcal{E}_{\mathcal{Z},H_T})\\ &\overset{(\ref{badfinal}),(\ref{H_Tthird})}< p\cdot q\cdot t'\cdot r\cdot (1+\delta_1)\cdot w -\dfrac{p\cdot q \cdot t'}{1+p \cdot q \cdot t'}\cdot r\cdot (1-\delta_1)\cdot w  \\  &= \dfrac{p\cdot q \cdot t'}{1+p \cdot q \cdot t'}\cdot r\cdot (1-\delta_1)\cdot w \cdot (1-\phi) \cdot (\dfrac{(1+\delta_1)\cdot(1+p\cdot q \cdot t')}{(1-\delta_1)\cdot (1-\phi)}-\dfrac{1}{1-\phi})\\&\overset{\ref{H_Tthird'}}< U^{\min}_T(\mathcal{E}_{\mathcal{Z},H_T})\cdot \dfrac{4\cdot \delta_1 +(1+4\cdot\delta_1)\cdot s}{1-\phi}
\end{align*}

\end {proof}
\subsection{Utility Equivalent to Absolute Rewards Minus Absolute Cost-Block Reward Changes} 
\label{rewardchange2}

%$$U^{j}_T(\mathcal{E'}_{\mathcal{Z},\mathcal{A}})\equiv R^{j}_{T}(\mathcal{E'}_{\mathcal{Z},\mathcal{A}})-\sum_{l:P_l \in T} C_{l}(\mathcal{E'}_{\mathcal{Z},\mathcal{A}})$$

We will show that the result of the previous subsection holds also when we assume that (i) the block reward changes every at least $l\cdot \kappa$ rounds where $l$ a positive constant and $\kappa$ the security parameter and (ii) the environment terminates the execution at least $l \cdot \kappa$ rounds after the last change of the block reward. 
By Lemmas $\ref{min=max2}$,$\ref{min=max2b}$ and by the fact that the adversary is static with fixed cost we have the following lemmas
\begin{lemma}{\label{min=max3}}
  For every $r$-admissible environment $\mathcal{Z}$ with input $1^{p'(\kappa)}$, where $\kappa$ the security parameter it holds  
\begin{equation*} U^{\max}_{T}(\mathcal{E}_{\mathcal{Z},H_T})\equiv  U^{\min}_{T}(\mathcal{E}_{\mathcal{Z},H_T})\equiv \sum _{j=1}^{m+2} X_{r_j}^T (\mathcal{E}_{\mathcal{Z},H_T})\cdot w_{j-1}  - \sum_{j=1}^{m+2} \sum_{l:P_l \in T} C_{l,{r_j}}(\mathcal{E}_{\mathcal{Z},H_T})
\end{equation*} where $r_1,...,r_m$ are the rounds when the block reward changes, $r_0$ is the first round,  $r_{m+1}$ the last complete round of execution $\mathcal{E}_{\mathcal{Z},H_T}$, $r_{m+2}=r_{m+1}+1$ ,  $w_0,w_1,..., w_m=w_{m+1}$ are the block rewards respectively, $X_{r_j}^T (\mathcal{E}_{\mathcal{Z},H_T})$ are the successful rounds for $T$ and $\sum_{l:P_l \in T} C_{l,{r_j}}(\mathcal{E}_{\mathcal{Z},H_T})$ the cost for $T$ respectively  between the rounds $r_{j-1}$ and $r_j-1$ included $r_{j-1}$ and $ r_j-1$.
\end{lemma}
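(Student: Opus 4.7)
The plan is to derive Lemma~\ref{min=max3} as a direct corollary of Lemma~\ref{min=max2} combined with a purely additive decomposition of the cost function. Recall that the ``absolute rewards minus absolute cost'' utility (Def.~\ref{utilities}) is $U^{j}_T(\mathcal{E}) = R^{j}_T(\mathcal{E}) - \sum_{l:P_l\in T}C_l(\mathcal{E})$. The key observation I would exploit is that $\sum_{l:P_l\in T}C_l(\mathcal{E}_{\mathcal{Z},H_T})$ does not depend on the index $j$ of the honest observer at all: costs are a function only of the queries actually posed by the participants in $T$ during the execution, not of any local chain. Moreover, since $H_T$ follows the protocol and asks all available queries in every round, the per-round query counts are deterministic (this is precisely where the ``static adversary with fixed cost'' hypothesis is used). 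Consequently, the $\max$ and $\min$ over $j \in S\setminus T$ of the whole utility can be passed through onto the reward term alone.

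The first step is therefore to write
$$U^{\max}_T(\mathcal{E}_{\mathcal{Z},H_T}) = R^{\max}_T(\mathcal{E}_{\mathcal{Z},H_T}) - \sum_{l:P_l\in T}C_l(\mathcal{E}_{\mathcal{Z},H_T}),$$
and the analogous identity for $U^{\min}_T$. Invoking Lemma~\ref{min=max2} then yields $R^{\max}_T(\mathcal{E}_{\mathcal{Z},H_T}) = R^{\min}_T(\mathcal{E}_{\mathcal{Z},H_T}) = \sum_{j=1}^{m+2} X_{r_j}^T(\mathcal{E}_{\mathcal{Z},H_T}) \cdot w_{j-1}$, which simultaneously collapses $U^{\max}_T$ and $U^{\min}_T$ to a common value and supplies the first summation on the right-hand side of the claimed equality.

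It remains to split the total cost over the same partition of rounds used on the reward side. I would define $C_{l,r_j}(\mathcal{E})$ to be the portion of $C_l(\mathcal{E})$ that $P_l$ accumulates during rounds $r_{j-1}, r_{j-1}+1,\ldots, r_j - 1$ (and similarly for the singleton tail interval indexed by $j = m+2$). Since $C_l$ is an additive sum of per-round query costs and the intervals $\{[r_{j-1},r_j-1]\}_{j=1}^{m+2}$ partition the set of completed rounds, we immediately get $C_l(\mathcal{E}) = \sum_{j=1}^{m+2} C_{l,r_j}(\mathcal{E})$. Summing over $l\in T$ and substituting into the display above yields the claimed form. No probabilistic estimates enter anywhere in this derivation, so there is no genuine obstacle; the only care required — and the one point I would flag as a potential source of an off-by-one slip — is to align the tail index $j=m+2$ and the identification $w_{m+1}=w_m$ exactly with the convention already fixed on the reward side by Lemma~\ref{min=max2}.
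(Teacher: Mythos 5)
Your proposal is correct and matches the paper's (implicit) argument: the paper justifies Lemma~\ref{min=max3} exactly by Lemma~\ref{min=max2} together with the observation that the coalition's cost is fixed per round and independent of the honest observer $j$, so the $\max/\min$ over $j$ passes to the reward term and the cost splits additively over the reward-change intervals. No issues; the off-by-one conventions you flag are handled the same way as on the reward side.
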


Recall that the cost of each round is fixed and determined in the beginning of the execution.
\begin{lemma}
\begin{align}
U^{\max}_{T}(\mathcal{E'}_{\mathcal{Z},\mathcal{A}})& \leq \sum_{j=1}^{m+2}  Z_{r_j} (\mathcal{E'}_{\mathcal{Z},\mathcal{A}})\cdot w_{j-1}- \sum_{j=1}^{m+2} \sum_{l:P_l \in T} C_{l,{r_j}}(\mathcal{E'}_{\mathcal{Z},\mathcal{A}})
\end{align}
 where $r_1,...,r_m$ are the rounds when the block reward changes, $r_0$ is the first round, $r_{m+1}$ the last complete round of execution $\mathcal{E'}_{\mathcal{Z},\mathcal{A}}$, $r_{m+2}= r_{m+1}+1$ ,  $w_0, w_1,..., w_m=w_{m+1}$ are the block rewards respectively,  $Z_{r_j} (\mathcal{E'}_{\mathcal{Z},\mathcal{A}})$ is the number of blocks produced by the adversary between the rounds $r_{j-1}$ and $r_j-1 $ included $r_{j-1}$ and $ r_j-1$. 
\end{lemma}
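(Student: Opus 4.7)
The plan is to follow exactly the structure of Lemma~\ref{min=max2b} but with the cost subtracted, exploiting the fact that the adversary is static with fixed cost so its per-round cost is a deterministic quantity chosen at the start. First, I would fix an $r$-admissible environment $\mathcal{Z}$, an adversary $\mathcal{A}$, and an arbitrary honest participant $P_j \in S \setminus T$ achieving the maximum in $U^{\max}_T$. Since the utility is absolute rewards minus absolute cost, I split
\[ U^{j}_T(\mathcal{E'}_{\mathcal{Z},\mathcal{A}}) = R^{j}_{T}(\mathcal{E'}_{\mathcal{Z},\mathcal{A}}) - \sum_{l:P_l \in T} C_{l}(\mathcal{E'}_{\mathcal{Z},\mathcal{A}}) \]
and analyze the two terms separately, then take the maximum over $j$.

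For the rewards term, I would partition the execution into the $m+2$ intervals defined by the reward-change rounds $r_0 < r_1 < \ldots < r_{m+1}$, with $r_{m+2} = r_{m+1}+1$. By assumption, each block that the adversary mines during the interval $[r_{j-1}, r_j-1]$ carries a reward of $w_{j-1}$, independent of when (or whether) it is later incorporated into $P_j$'s local chain. Since the adversary cannot possibly be rewarded for more blocks than it has produced in that interval, the contribution of interval $j$ to $R^{j}_T(\mathcal{E'}_{\mathcal{Z},\mathcal{A}})$ is at most $Z_{r_j}(\mathcal{E'}_{\mathcal{Z},\mathcal{A}}) \cdot w_{j-1}$. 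Summing over $j$ and noting $w_{m+1} = w_m$ (so the interval $[r_{m+1}, r_{m+2}-1]$ is formally collapsed into the last real window) gives
\[ R^{j}_{T}(\mathcal{E'}_{\mathcal{Z},\mathcal{A}}) \leq \sum_{j=1}^{m+2} Z_{r_j}(\mathcal{E'}_{\mathcal{Z},\mathcal{A}}) \cdot w_{j-1}. \]

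For the cost term, the static-adversary-with-fixed-cost assumption means that the per-round query budget of each $P_{i_m} \in T$ is fixed before the execution begins, so the total cost decomposes deterministically into the sum across the same $m+2$ intervals: $\sum_{l:P_l\in T} C_l(\mathcal{E'}_{\mathcal{Z},\mathcal{A}}) = \sum_{j=1}^{m+2} \sum_{l:P_l\in T} C_{l,r_j}(\mathcal{E'}_{\mathcal{Z},\mathcal{A}})$. Combining the two bounds yields the statement, and since the right-hand side is independent of the choice of $j$, taking the maximum over honest participants on the left produces the claimed inequality.

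The main obstacle, which is really more of a bookkeeping subtlety than a conceptual difficulty, is handling the boundary interval $[r_{m+1}, r_{m+2}-1]$ cleanly: it has length one (a single round) and is introduced only to make the indexing uniform, so I would explicitly note that $w_{m+1} \equiv w_m$ means the $j = m+2$ summand just accounts for the final round of the execution under the last reward value. Everything else is essentially the trivial observation that rewards are bounded by the number of mined blocks times the reward-at-mining-time, and that fixed cost is additive over any partition of rounds.
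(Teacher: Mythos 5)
Your proposal is correct and follows essentially the same route as the paper, which justifies this lemma by combining the rewards bound of Lemma~\ref{min=max2b} (the adversary cannot be credited for more blocks than it produced in each reward interval, valued at $w_{j-1}$) with the deterministic per-interval decomposition of the cost guaranteed by the static-adversary-with-fixed-cost assumption. Your handling of the final one-round interval via $w_{m+1}=w_m$ also matches the paper's bookkeeping.
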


\begin{theorem*}\iffalse {\label{theorem6}}\fi
We assume that (i) the block reward changes every at least $l\cdot \kappa$ rounds where $l$ a positive constant and $\kappa$ the security parameter and (ii) the environment terminates the execution at least $l \cdot \kappa$ rounds after the last change of the block reward.
Let $w_j$ for $ j \in \lbrace 0,...,m \rbrace$ be all the block rewards respectively. Assuming that there exists   $\phi \in (0,1-s)$ such that $c<p\cdot w_j \cdot\phi/(1+p\cdot q\cdot (n-1))$ for all $ j \in \lbrace 0,...,m \rbrace$, then it holds: for any $\delta_1 \in(0,0.25)$, such that $c\leq p\cdot w_j \cdot (1-\delta_1)\cdot \phi/(1+p\cdot q\cdot (n-1))$ for all  $ j \in \lbrace 0,...,m \rbrace$ and $4 \cdot \delta_1\cdot(1+s)+s<1 - \phi$, where $s$ the expected number of solutions per round, the Bitcoin with fixed target in a synchronous setting is  $(n-1, (4 \cdot \delta_1\cdot(1+s)+s)/(1- \phi),0)$-EVP according to the utility function absolute rewards minus absolute cost (def.\ref{utilities})\iffalse under a PPT static adversary with fixed cost \fi.
\end{theorem*}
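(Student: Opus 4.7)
The plan is to combine the epoch decomposition used in the proof of Theorem~\ref{theorem3} with the cost-absorption argument used in the proof of Theorem~\ref{absolutecostbitcoin}. I fix an arbitrary $r$-admissible environment $\mathcal{Z}$, an arbitrary PPT static adversary with fixed cost $\mathcal{A}$ controlling a set $T$ of $t'\le n-1$ participants, and I compare the two executions $\mathcal{E}_{\mathcal{Z},H_T}$ and $\mathcal{E'}_{\mathcal{Z},\mathcal{A}}$. By hypothesis there is a partition of the rounds into the intervals $[r_{j-1},r_j-1]$ for $j=1,\ldots,m+2$ with $r_j-r_{j-1}\ge l\cdot\kappa$ (and $w_m=w_{m+1}$), and within each such interval the block reward is constant and equal to $w_{j-1}$.

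First, I would apply the Chernoff bound separately inside each epoch exactly as in Theorem~\ref{theorem3}: for every $j$,
\[
X_{r_j}^T(\mathcal{E}_{\mathcal{Z},H_T})\cdot w_{j-1} \;>\; \frac{p\cdot q\cdot t'}{1+p\cdot q\cdot t'}\cdot(r_j-r_{j-1})\cdot(1-\delta_1)\cdot w_{j-1}
\]
and $Z_{r_j}(\mathcal{E'}_{\mathcal{Z},\mathcal{A}})\cdot w_{j-1} < p\cdot q\cdot t'\cdot(r_j-r_{j-1})\cdot(1+\delta_1)\cdot w_{j-1}$, each with overwhelming probability in $\kappa$ since $r_j-r_{j-1}\ge l\cdot\kappa$ (the last two intervals are grouped together as in Theorem~\ref{theorem3}). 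Then, using the key property of the static adversary with fixed cost, the per-round cost $\sum_{l:P_l\in T}C_{l,r_j}$ is the same in both executions and equals $c\cdot(q\cdot t'-x)\cdot(r_j-r_{j-1})$; in particular it is identical in $\mathcal{E}_{\mathcal{Z},H_T}$ and $\mathcal{E'}_{\mathcal{Z},\mathcal{A}}$.

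Next, using the assumption $c\le p\cdot w_j\cdot(1-\delta_1)\cdot\phi/(1+p\cdot q\cdot(n-1))$ for every $j$ and the fact that $H_T$ asks all available queries while $\mathcal{A}$ commits to asking $q\cdot t'-x$ queries per round, I would reproduce the calculation of Theorem~\ref{absolutecostbitcoin} at the epoch level. This gives
\[
U^{\min}_T(\mathcal{E}_{\mathcal{Z},H_T})\;>\;\sum_{j=1}^{m+2}\frac{p\cdot q\cdot t'}{1+p\cdot q\cdot t'}\cdot(r_j-r_{j-1})\cdot(1-\delta_1)\cdot w_{j-1}\cdot(1-\phi)\;>\;0,
\]
and, since the function $f(x)=p(q t'-x)(1+\delta_1)w_{j-1}-c(qt'-x)$ is non-increasing in $x$ under our assumption, also
\[
U^{\max}_T(\mathcal{E'}_{\mathcal{Z},\mathcal{A}})\;<\;\sum_{j=1}^{m+2}\Bigl(p\cdot q\cdot t'\cdot(r_j-r_{j-1})\cdot(1+\delta_1)\cdot w_{j-1}-c\cdot q\cdot t'\cdot(r_j-r_{j-1})\Bigr),
\]
each with overwhelming probability by a union bound over the $O(m)$ epochs (where $m$ is polynomial in $\kappa$ since $r$ is polynomial and each epoch has length at least $l\cdot\kappa$).

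Finally I would subtract these two bounds epoch by epoch and factor out $(1-\phi)$ exactly as in the proof of Theorem~\ref{absolutecostbitcoin}: inside each epoch the ratio between the adversarial upper bound and the honest lower bound is at most $1+(4\delta_1+(1+4\delta_1)s)/(1-\phi)$, using $(1+\delta_1)/(1-\delta_1)\le 1+4\delta_1$ for $\delta_1\in(0,0.25)$ and $p\cdot q\cdot t'\le s$. Summing the epoch-wise differences, the right-hand side telescopes to $U^{\min}_T(\mathcal{E}_{\mathcal{Z},H_T})\cdot(4\delta_1+(1+4\delta_1)s)/(1-\phi)$, yielding the desired EVP bound. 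The main obstacle is purely book-keeping: making sure the boundary between the $m$-th and $(m+1)$-th epochs (where $w_m=w_{m+1}$) is handled so that the factor pulls out uniformly across all intervals, but this is exactly what is done by pairing the last two intervals in Theorem~\ref{theorem3}, and the same grouping works here since the cost contribution over $[r_m,r_{m+2}-1]$ is also additive.
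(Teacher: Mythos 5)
Your proposal is correct and follows essentially the same route as the paper: an epoch decomposition with the last two intervals grouped (so each block of rounds has length at least $l\cdot\kappa$ and Chernoff applies), per-epoch Chernoff bounds on $X_{r_j}^T$ and $Z_{r_j}$, absorption of the honest cost into the $(1-\phi)$ factor via $c\leq p\cdot w_{j-1}\cdot(1-\delta_1)\cdot\phi/(1+p\cdot q\cdot t')$, the monotonicity of $f(x)=p\,(q t'-x)(1+\delta_1)w_{j-1}-c\,(q t'-x)$ to reduce the adversary to $x=0$, and then summing the epoch-wise differences exactly as in the fixed-reward case. One sentence in your write-up is inaccurate — the coalition's cost is \emph{not} the same in the two executions, since $H_T$ follows the protocol and pays $c\cdot q\cdot t'$ per round while $\mathcal{A}$ pays $c\cdot(q\cdot t'-x)$ — but this slip is harmless because your displayed bounds use the correct costs (all queries for $H_T$, the $x$-monotonicity argument for $\mathcal{A}$), matching the paper's derivation.
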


\begin{proof}
Let $t'\in\lbrace 1,..., n-1 \rbrace$, $\phi \in (0,1-s)$  such that $c<p\cdot w_j \cdot \phi/(1+p\cdot q\cdot (n-1))\leq p\cdot w_j \cdot \phi/(1+p\cdot q\cdot t')$ for all $ j \in \lbrace 0,...,m \rbrace$ and arbitrary $\delta_1 \in(0,0.25)$  such that $c\leq p\cdot w_j \cdot (1-\delta_1)\cdot \phi/(1+p\cdot q\cdot (n-1))$ for all  $ j \in \lbrace 0,...,m \rbrace$ and $4\cdot \delta_1 +(1+4\cdot\delta_1)\cdot s<1 - \phi$.

By the previous lemmas, by the  assumption that for any $ j\in \lbrace 0,...,m \rbrace $, $c\leq p\cdot w_{j} \cdot (1-\delta_1)\cdot \phi/(1+p\cdot q\cdot t')$ and by Chernoff bound we have  with overwhelming probability in $\kappa$ :

\begin{align*}
U^{\min}_{T}(\mathcal{E}_{\mathcal{Z},H_T}) &\equiv \sum _{j=1}^{m}[ X_{r_j}^T (\mathcal{E}_{\mathcal{Z},H_T})\cdot w_{j-1}] +(X_{r_{m+1}}^T (\mathcal{E}_{\mathcal{Z},H_T})+X_{r_{m+2}}^T (\mathcal{E}_{\mathcal{Z},H_T}))\cdot w_m  - \\& \sum _{j=1}^{m+2} [\sum_{l:P_l \in T} C_{l,{r_j}}(\mathcal{E'}_{\mathcal{Z},\mathcal{A}})] \\& > \sum _{j=1}^{m}  [\dfrac{p\cdot q \cdot t'}{1+p \cdot q \cdot t'}\cdot (r_j-r_{j-1})\cdot (1-\delta_1)\cdot w_{j-1}\cdot (1-\phi)]+ \\& \dfrac{p\cdot q \cdot t'}{1+p \cdot q \cdot t'}\cdot (r_{m+1}-r_{m} +1)\cdot (1-\delta_1)\cdot w_{m}\cdot (1-\phi)\\&>0   
\end{align*}

and 
\begin{align*}
U^{\max}_{T}(\mathcal{E'}_{\mathcal{Z},\mathcal{A}})& \leq \sum_{j=1}^{m} [ Z_{r_j} (\mathcal{E'}_{\mathcal{Z},\mathcal{A}})\cdot w_{j-1}] + (Z_{r_{m+1}} (\mathcal{E'}_{\mathcal{Z},\mathcal{A}})+Z_{r_{m+2}} (\mathcal{E'}_{\mathcal{Z},\mathcal{A}}))\cdot w_{m} -\\& \sum_{j=1}^{m+2} \sum_{l:P_l \in T} C_{l,{r_j}}(\mathcal{E'}_{\mathcal{Z},\mathcal{A}}) \\&< \sum_{j=1}^{m}[ p\cdot (q\cdot t')\cdot (r_j-r_{j-1})\cdot (1+\delta_1)\cdot w_{j-1}- c\cdot ( q \cdot t' )\cdot (r_j-r_{j-1})] + \\& p\cdot (q\cdot t')\cdot (r_{m+1}-r_{m} +1)\cdot (1+\delta_1)\cdot w_{m}- c\cdot ( q \cdot t'  )\cdot (r_{m+1}-r_{m} +1)
\end{align*}
 As a result, we can prove in the same way as the previous subsection that it  holds  with overwhelming probability in $r$ and in $\kappa$ the following : 
\begin{align*}
& U^{\max}_T(\mathcal{E'}_{\mathcal{Z},\mathcal{A}})-U^{\min}_T(\mathcal{E}_{\mathcal{Z},H_T})\leq U^{\min}_T(\mathcal{E}_{\mathcal{Z},H_T})\cdot \dfrac{4\cdot \delta_1 +(1+4\cdot\delta_1)\cdot s}{1-\phi}
\end{align*}

\end{proof}

\subsection{Negative Results: Relative Rewards}

In this subsection we prove that when the utility
is based on relative rewards, i.e., the ratio
of rewards of the strategic coalition of the adversary
over the total rewards of all the participants, the 
Bitcoin with fixed target in asynchronous setting is not EVP. In this way we show how our model can be used to prove negative results.
The core idea is to use the selfish mining strategy \cite{selfish,selfish2,selfish3,selfish4,perf}  
to construct an attack that invalidates the equilibrium property. This kind of attack was used also in \cite{backbone} as argument for the tightness of ``chain quality".
Without loss of generality, we will assume that the reward of each block is the same and equal to $w$ (the negative result carries trivially to the general case). 

In more detail, for an arbitrary $t \in \lbrace 1,...,n-1 \rbrace $ and $t'<\min \lbrace n/2,t+1 \rbrace$ we show that the protocol is not a $(t,\epsilon,\epsilon')$-EVP 
for $\epsilon + \epsilon'<\dfrac{t'}{n-t'}\cdot(1-\delta') -\dfrac{t'}{n}\cdot (1+\delta'')\cdot (1+s)$, for  $\delta',\delta''$ small. Recall that $s=p\cdot q \cdot n$ are the expected number of solutions per round. 
\ignore{

we prove the following. We construct
an adversary that commands an arbitrary number of participants $t'<n/2$,
and we show the protocol is not a $(t',\epsilon,\epsilon')$-EVP 
for $\epsilon + \epsilon'<\dfrac{t'}{n-t'}\cdot(1-\delta') -\dfrac{t'}{n}\cdot (1+\delta'')\cdot (1+s)$, for  $\delta',\delta''$ small. Recall that $s$ are the expected number of solutions per round. 

}

\begin{theorem*}
\iffalse{\label{theorem7}} \fi
 Let  $t \in \lbrace 1,...,n-1 \rbrace $ and $t'<\min \lbrace n/2,t+1 \rbrace$. Then for any $ \epsilon +\epsilon' <\dfrac{t'}{n-t'}\cdot(1-\delta') -\dfrac{t'}{n}\cdot (1+\delta'')\cdot (1+s)$, where $s$ the expected number of solutions per round, for some $\delta',\delta''$, following the Bitcoin with fixed target in asynchronous setting is not 
$(t,\epsilon, \epsilon')$-EVP  according to the utility function relative rewards (def.\ref{utilities}). \iffalse under a PPT static adversary with fixed cost \fi
\end{theorem*}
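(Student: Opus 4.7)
The plan is to construct an environment $\mathcal{Z}$ and an adversary $\mathcal{A}$ controlling exactly $t'$ participants such that the EVP inequality fails with overwhelming probability. Take $\mathcal{Z}$ to be any $r$-admissible environment whose round bound $r$ is a sufficiently large polynomial in $\kappa$, and let $\mathcal{A}$ play the classical selfish-mining strategy with network dominance: $\mathcal{A}$ keeps every block it mines on a private extension of the public tip and, whenever an honest participant diffuses a block in round $i$ and $\mathcal{A}$ holds at least one private block, $\mathcal{A}$ immediately diffuses one of its private blocks so that, by its network advantage, all honest participants adopt the adversarial block and the honest one is orphaned.

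First I would upper bound $U^{\min}_T(\mathcal{E}_{\mathcal{Z},H_T})$. By Lemma~\ref{min=max}, applied to both $T$ and $S$, in every honest view $R^j_T(\mathcal{E}_{\mathcal{Z},H_T}) = X^T_r \cdot w$ and $R^j_S(\mathcal{E}_{\mathcal{Z},H_T}) = X^S_r \cdot w$, where $X^T_r, X^S_r$ are the numbers of successful rounds for $T$ and $S$ respectively. Chernoff, exactly as in the proof of Theorem~\ref{absolutebitcoin}, gives $X^T_r \leq (1+\delta_1)\,pqt'\,r$ and $X^S_r \geq (1-\delta_2)\,\frac{pqn}{1+pqn}\,r$ with overwhelming probability for arbitrarily small $\delta_1,\delta_2 > 0$, so
\[
U^{\min}_T(\mathcal{E}_{\mathcal{Z},H_T}) \;=\; \frac{X^T_r}{X^S_r} \;\leq\; \frac{t'}{n}(1+s)(1+\delta'')
\]
with overwhelming probability, where $\delta''=\frac{1+\delta_1}{1-\delta_2}-1$ and can be made arbitrarily small.

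Second, I would lower bound $U^{\max}_T(\mathcal{E'}_{\mathcal{Z},\mathcal{A}})$. Under the strategy above no adversarial block is ever orphaned, since each one is released only when it can be matched to (and, by network dominance, beat) an honest block. The private-minus-public lead $L_i$ evolves as a reflected random walk on $\mathbb{Z}_{\geq 0}$ with per-round up-rate $\approx pqt'$ and down-rate $\approx pq(n-t')$; since $t' < n/2$ the drift is strictly negative, the walk is positive recurrent with stationary mass $1-\tfrac{t'}{n-t'}$ at zero, and $L_r = O(1)$ with overwhelming probability. Every honest successful round in which $L>0$ at the time the honest block is diffused is orphaned by one private block, so by standard Markov-chain concentration (of the type used in \cite{backbone} to argue the tightness of chain quality), over $r$ rounds the surviving honest blocks in the chain number $\approx pq(n-2t')\,r$ while all $B_T \approx pqt'\,r$ adversarial blocks are included, giving
\[
U^{\max}_T(\mathcal{E'}_{\mathcal{Z},\mathcal{A}}) \;\geq\; \frac{B_T}{B_T + pq(n-2t')\,r} \;\geq\; \frac{t'}{n-t'}(1-\delta')
\]
with overwhelming probability, for a suitable small $\delta'$.

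Combining and noting that $|U^{\min}_T(\mathcal{E}_{\mathcal{Z},H_T})|\le 1$ (relative rewards lie in $[0,1]$), whenever $\epsilon + \epsilon' < \tfrac{t'}{n-t'}(1-\delta') - \tfrac{t'}{n}(1+\delta'')(1+s)$ we have
\[
U^{\max}_T(\mathcal{E'}_{\mathcal{Z},\mathcal{A}}) \;>\; U^{\min}_T(\mathcal{E}_{\mathcal{Z},H_T}) + \epsilon\,|U^{\min}_T(\mathcal{E}_{\mathcal{Z},H_T})| + \epsilon'
\]
with overwhelming, hence non-negligible, probability, contradicting $(t,\epsilon,\epsilon')$-EVP. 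The main obstacle is the random-walk step: showing \emph{concentration} (not merely expectation) of the orphaning rate so that the $t'/(n-t')$ lower bound on $U^{\max}_T$ holds with overwhelming probability when $r$ is polynomial in $\kappa$, and that the transient period before the reflected walk mixes contributes only a sublinear deviation; this is delicate but standard for negatively drifted, reflected random walks.
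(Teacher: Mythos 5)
Your proposal is correct and follows essentially the same route as the paper: the same block-withholding/replacement (selfish-mining) adversary exploiting network dominance, the same Chernoff-based upper bound $\frac{t'}{n}(1+s)(1+\delta'')$ on the honest-play relative reward via Lemmas \ref{firstlemma}--\ref{secondlemma}, the same $\frac{t'}{n-t'}(1-\delta')$ lower bound for the deviating execution, and the conclusion via $|U^{\min}_T|\leq 1$. The only cosmetic difference is that you make the orphaning/leftover-lead concentration explicit as a reflected random walk, where the paper simply invokes the chain-quality tightness argument of \cite{backbone} and is content with the bound holding with high (non-negligible) probability, which is all the negative result needs.
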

%The proof is given in Appendix \ref{theorem7ref}. 
\begin{proof}
%The proof uses a type of selfish mining attack  \cite{selfish,selfish2,selfish3,selfish4,perf}, where the adversary has the advantage to deliver its blocks first, which was described  in \cite{backbone}. \par 
Let $t \in \lbrace 1,...,n-1 \rbrace $. 
We consider an arbitrary $r$-admissible environment $\mathcal{Z}$ with input $1^{p'(\kappa)}$ and we will describe a PPT static  adversary $A_0$ with fixed cost (who controls a set $T$ with $t'<\min \lbrace n/2,t+1 \rbrace$ participants) such that it holds with high probability :
\begin{align}
&U^{\max}_T(\mathcal{E'}_{\mathcal{Z},A_0})-
U^{\min}_T(\mathcal{E}_{\mathcal{Z},H_T} )\\&\geq   \dfrac{   t' \cdot (1-\delta_1)\cdot (1-\epsilon''')} {(n-t')\cdot (1+\delta _4)}-\dfrac{t'\cdot (1+p\cdot q\cdot n)}{n}\cdot\dfrac{(1+\delta_2)}{(1-\delta_3)}{\label{epsilon'}}\\ & = B 
\end{align}
 for  any $ \delta_1,\delta_2,\delta_3,\delta_4\in(0,1) $  and a small  $ \epsilon''' >0$.

\ignore{
it holds with high probability 

$$ U^{\max}_T(\mathcal{E}_{\mathcal{Z},A_0}) \geq  U^{\min}_T(\mathcal{E}_{\mathcal{Z},H_T} )\cdot (1+ \dfrac{t'}{n-t'}\cdot(1-\delta)-\delta)$$ for $\delta$ small.}
After that we prove our theorem by contradiction. 
In more detail, we suppose that there exist $\epsilon,\epsilon'$ such that $\epsilon+\epsilon'<B$ 
so that the Bitcoin with fixed target in asynchronous setting is  $(t,\epsilon,\epsilon')$-EVP and we will end up in contradiction
. In other words we suppose that there exist $\epsilon,\epsilon'$ such that $\epsilon+\epsilon'<B$ so that 
\begin{align}
&U^{\max}_T(\mathcal{E'}_{\mathcal{Z},\mathcal{A}})-
U^{\min}_T(\mathcal{E}_{\mathcal{Z},H_T} )\leq \mid U^{\min}_T(\mathcal{E}_{\mathcal{Z},H_T} )\mid \cdot\epsilon+\epsilon'
\end{align} with overwhelming probability, where $A$ an arbitrary PPT static adversary with fixed cost that controls a set $T$ with at most $t$ participants and $Z$ an arbitrary $r$-admissible environment with input $1^{p'(\kappa)}$. \par  
Then we have 
\begin{align}
&U^{\max}_T(\mathcal{E'}_{\mathcal{Z},\mathcal{A}})-
U^{\min}_T(\mathcal{E}_{\mathcal{Z},H_T} )\\&\leq \mid U^{\min}_T(\mathcal{E}_{\mathcal{Z},H_T} )\mid \cdot\epsilon+\epsilon'\\&\leq \epsilon+\epsilon'\\&<B
\end{align} 
with overwhelming probability. \par 
However this does not hold because there exists $A_0$ that satisfies ($\ref{epsilon'}$).

%Note that when an event $A$ happens with overwhelming probability and an event $B$ happens with probability not negligible then $A\cap B$ happens with not negligible probability.

In order to prove equation (\ref{epsilon'}):
 firstly we find an upper  bound for $ U^{\min}_T(\mathcal{E}_{\mathcal{Z},H_T} )$ that holds with overwhelming probability in the security parameter $\kappa$. 
 
Recall that $T$ is an arbitrary set with $t'<\min \lbrace n/2,t+1 \rbrace$ participants that adversary $A_0$, whom we will describe  later, controls.

 \par 
By Lemma $\ref{secondlemma}$ 
\begin{align*}
R^{\min}_{T}(\mathcal{E}_{\mathcal{Z},H_T})\equiv R^{\max}_{T}(\mathcal{E}_{\mathcal{Z},H_T})\equiv X^T_r(\mathcal{E}_{\mathcal{Z},H_T})\cdot w
\end{align*}
 As a result by Chernoff bound it holds for any $\delta_2\in(0,1)$ with overwhelming probability in $r$ and also in $\kappa$  :
\begin{equation}{\label{third' H_T}}
0<R^{\max}_{T}(\mathcal{E}_{\mathcal{Z},H_T})< p\cdot q\cdot t' \cdot(1+\delta_2)\cdot w \cdot r.
\end{equation}

By Lemma  $\ref{firstlemma}$
\begin{align}
R^{\min}_{S}(\mathcal{E}_{\mathcal{Z},H_T})\equiv R^{\max}_{S}(\mathcal{E}_{\mathcal{Z},H_T}) \equiv X^{S}_r(\mathcal{E}_{\mathcal{Z},H_T})\cdot w
\end{align}

As a result for any $\delta_3\in(0,1)$ with overwhelming probability in $r$ and also in $\kappa$ :
\begin{align}{\label{S,H_T}}
R^{\min}_{S}(\mathcal{E}_{\mathcal{Z},H_T})\equiv R^{\max}_{S}(\mathcal{E}_{\mathcal{Z},H_T})& > \dfrac{p\cdot q \cdot n}{1+p \cdot q \cdot n}\cdot r\cdot (1-\delta_3)\cdot w>0
\end{align} 
Recall that the executions last at least one round.
So we know that with overwhelming probability in $\kappa $ for any $j$ such that $P_j$ honest  

$$U^{j}_{T}(\mathcal{E}_{\mathcal{Z},H_T})\equiv \dfrac{ R^{j}_{T}(\mathcal{E}_{\mathcal{Z},H_T})}{ R^{j}_{S}(\mathcal{E}_{\mathcal{Z},H_T})}$$
as $R^{j}_{S}(\mathcal{E}_{\mathcal{Z},H_T})\neq 0$.

As a result we have 
that for any $\delta_2,\delta_3\in(0,1)$ it holds with overwhelming probability in $r$ and also in $\kappa$ that:
\begin{align} {\label{BASIC}}
U^{\min}_T(\mathcal{E}_{\mathcal{Z},H_T} )&\leq U^{\max}_T(\mathcal{E}_{\mathcal{Z},H_T} )\\ &\leq \dfrac{ R^{\max}_{T}(\mathcal{E}_{\mathcal{Z},H_T})} {R^{\min}_{S}(\mathcal{E}_{\mathcal{Z},H_T})}\\ & \overset{ (\ref{third' H_T}),(\ref{S,H_T})}\leq
\dfrac{p\cdot q\cdot t' \cdot(1+\delta_2)\cdot w \cdot r } {\dfrac{p\cdot q \cdot n}{1+p \cdot q \cdot n}\cdot r\cdot (1-\delta_3)\cdot w}\\&=\dfrac{t'\cdot (1+p\cdot q\cdot n)}{n}\cdot\dfrac{(1+\delta_2)}{(1-\delta_3)}
\end{align}

 Now we will describe the adversary $A_0$ who does a type of selfish mining, \cite{selfish,selfish2,selfish3,selfish4,perf}, which was described also in \cite{backbone} and we will find a lower  bound  for $U^{\max}_{T}(\mathcal{E}_{\mathcal{Z},A_0})$ with high probability (not negligible). \par 
 $A_0$ chooses to ask all the queries. Initially extends the chain coming from an honest participant, but when it finds a block it does not send it to the Diffuse Functionality. It continues working on its private chain until another participant announces a block. Then the adversary reveals one of its blocks to all the honest participants.  When this happens  all the honest participants adopt its block instead of the block coming from the honest participant. If the adversarial private chain becomes smaller than the chain coming from an honest participant then the adversary adopts the honest participant's chain. Note that  when one of the participants controlled by the adversary finds a block during a round, the adversary uses the rest available queries for finding a block that extends this block. 
\begin{itemize}
\item  $X^{S\setminus T}_r(\mathcal{E'}_{\mathcal{Z},A_0})\equiv \sum_{m=1}^{r}X^{S\setminus T,m}(\mathcal{E'}_{\mathcal{Z},A_0})$ is the number of the successful rounds for $S\setminus T$ until the last complete round $r$ of $\mathcal{E'}_{\mathcal{Z},A_0}$.
\item  $Z_r(\mathcal{E'}_{\mathcal{Z},A_0})\equiv \sum_{i=1}^{r}\sum_{k=1}^{t'}\sum_{j=1}^{q-x_k} Z_{i,j,k}(\mathcal{E'}_{\mathcal{Z},A_0})$. $Z_r(\mathcal{E'}_{\mathcal{Z},A_0})$  is the number of the blocks the adversary $A_0$ has produced until the last complete round $r$ of $\mathcal{E'}_{\mathcal{Z},A_0}$.
 \end{itemize}

We have that
\begin{align}
R^{\max}_S(\mathcal{E'}_{\mathcal{Z},A_0})\equiv R^{\min}_S(\mathcal{E'}_{\mathcal{Z},A_0})\equiv X^{S\setminus T}_r(\mathcal{E'}_{\mathcal{Z},A_0})\cdot w 
\end{align}
 
This holds because of Lemma $\ref{firstlemma}$ and due to the fact that the adversary $A_0$ does not contribute to the extension of the public ledger  as it only replaces blocks. In addition it announces its blocks to all honest participants.

In addition with overwhelming probability in $\kappa$ by Chernoff bound  $$R^{\min}_S(\mathcal{E'}_{\mathcal{Z},A_0})\equiv R^{\max}_S(\mathcal{E'}_{\mathcal{Z},A_0})>0$$ and as a result with overwhelming probability in $\kappa$
$$U^{j}_{T}(\mathcal{E}_{\mathcal{Z},A_0})=  \dfrac{ R^{j}_{T}(\mathcal{E}_{\mathcal{Z},A_0})}{ R^{j}_{S}(\mathcal{E}_{\mathcal{Z},A_0})}$$
\ignore{
By \cite{backbone} we have:
\begin{lemma}{\label{lemma3}}
\begin{align}
R^{\max}_T(\mathcal{E'}_{\mathcal{Z},A_0})\equiv R^{\min}_T(\mathcal{E'}_{\mathcal{Z},A_0})\equiv   Z_r(\mathcal{E'}_{\mathcal{Z},A_0})\cdot (1-\epsilon''')\cdot  w 
\end{align}
 with high probability for small  $ \epsilon''' >0$.    
\end{lemma}
}
Regarding $ R^{\min}_T(\mathcal{E'}_{\mathcal{Z},A_0})$: the adversary $A_0$ announces its block only if an honest participant finds a block and when this happens, it announces it to all the honest participants. The honest participants always adopt its blocks . So $R^{\max}_T(\mathcal{E'}_{\mathcal{Z},A_0})\equiv R^{\min}_T(\mathcal{E'}_{\mathcal{Z},A_0})$ with probability $1$. The number of the adversarial blocks $B(\mathcal{E'}_{\mathcal{Z},A_0})$ in the local chain of an arbitrary honest at the end of the last complete round $r$ of the execution  $\mathcal{E'}_{\mathcal{Z},A_0}$  are, as stated in \cite{backbone}, with high probability equal to the number of the blocks $Z_r(\mathcal{E'}_{\mathcal{Z},A_0})$ produced by the adversary minus a quantity bounded by  $ \epsilon ''\cdot p \cdot q\cdot r \cdot t'$, for small  $ \epsilon'' >0$. \par This happens because when the adversary $A_0$ has found more than one block during each round it means that all these blocks form a chain and extend the length of the local chains of all the honest participants. Note that when the execution ends, the adversary may have a small quantity of blocks that are unused in the case the honest participants did not have enough successful rounds.
\par 
Recall that contrary to the adversary, when the honest participants have found more than one block during a round, these blocks do not form a chain, because (i) an honest participant never sends more than one block to the Diffuse Functionality, and (ii) when an honest participant receives a block from another participant, it % does not change the message it sends to the random oracle and thus it
does not extend this new block until the end of the round.

By Chernoff bound 
  we have with high probability  for  any $\delta_1\in(0,1)$ and 
a small  $ \epsilon''' >0$  
\begin{equation}
R^{\min}_T(\mathcal{E'}_{\mathcal{Z},A_0}) \geq p \cdot q \cdot t' \cdot r \cdot (1-\delta_1)\cdot w \cdot (1-\epsilon''')
 \end{equation}
 
 Moreover by Chernoff  bound  it holds with overwhelming probability in $\kappa$  for  any $ \delta_4\in(0,1) $ 
\begin{equation}
R^{\max}_S(\mathcal{E'}_{\mathcal{Z},A_0})\leq p\cdot q \cdot (n-t')\cdot r \cdot w \cdot (1+\delta _4)
\end{equation}

So  we have with high probability  for  any $ \delta_4,\delta_1\in(0,1) $ and small  $ \epsilon''' >0$

 \begin{align}
 U^{\max}_T(\mathcal{E'}_{\mathcal{Z},A_0})& \geq  U^{\min}_T(\mathcal{E'}_{\mathcal{Z},A_0})\\& \geq \dfrac{ R^{\min}_{T}(\mathcal{E'}_{\mathcal{Z},A_0})} {R^{\max}_{S}(\mathcal{E'}_{\mathcal{Z},A_0})}\\ & \geq  \dfrac{  p \cdot q \cdot t' \cdot r \cdot (1-\delta_1)\cdot (1-\epsilon''')\cdot w} {p\cdot q \cdot (n-t')\cdot r \cdot w \cdot (1+\delta _4) }\\ &= \dfrac{   t'  \cdot (1-\delta_1) \cdot (1-\epsilon''')} {(n-t')\cdot (1+\delta _4)}
 \end{align}

Finally by the above and by equality ($ \ref{BASIC}$) for  any $ \delta_1,\delta_2,\delta_4,\delta_3\in(0,1) $  and small  $ \epsilon''' >0$ it holds with high probability :
\begin{align}
&U^{\max}_T(\mathcal{E'}_{\mathcal{Z},A_0})-
U^{\min}_T(\mathcal{E}_{\mathcal{Z},H_T} )\\&\geq   \dfrac{   t' \cdot (1-\delta_1)\cdot (1-\epsilon''')} {(n-t')\cdot (1+\delta _4)}-\dfrac{t'\cdot (1+p\cdot q\cdot n)}{n}\cdot\dfrac{(1+\delta_2)}{(1-\delta_3)}\\ & = B 
\end{align}
\end{proof}
\ignore{

In this setting the adversary  again is PPT, static with fixed cost that it controls a set of participants $T=\lbrace  P_{i_1},...,P_{i_t}\rbrace \subseteq \lbrace P_1,...,P_n \rbrace=S$  and chooses in the beginning the number $x_m$ of the questions that each participant controlled by the adversary $P_{i_m}$ will not ask during  each round of the execution.  
$r$-admissible environment $\mathcal{Z}$ with input $1^{p'(\kappa)}$, where $\kappa$ the security parameter.

\begin{itemize}
\item Let $x=\sum_{m=1}^{t'} x_m$ be the total  number of the queries that all the participants controlled by the adversary collectively do not ask during each round. Note that $x$ is a constant not a random variable as it is determined in the beginning by the static adversary. $0\leq x < qt' $.
\item $R^{j}_{T}(\mathcal{E}^{\rho}_{\mathcal{Z},\mathcal{A}}$) is the reward of the coinbase transaction of the blocks that belong to set $T$ according the  to local chain of $P_j$ after the last complete round $r$ of the execution $\mathcal{E}^{\rho}_{\mathcal{Z},\mathcal{A}}$. We suppose that the reward of each block is the same and equal to $w$.
\item We suppose that each query has a cost $c$.
\item
\[ 
U^{j}_T(\mathcal{E})= \left\{
\begin{array}{ll}
      \dfrac{ R^{j}_{T}(\mathcal{E})}{ R^{j}_{S}(\mathcal{E})} & R^{j}_{S}(\mathcal{E}^{\rho}_{\mathcal{Z},\mathcal{A}})\neq 0 \\
     0, & \text{elsewhere}\\
     
\end{array} 
\right. 
\]
\item  
$$ U^{\max}_T(\mathcal{E}_{\mathcal{Z},\mathcal{A}})\equiv \max\lbrace U^{j}_T(\mathcal{E}_{\mathcal{Z},\mathcal{A}} \rbrace _{j:P_j honest}\leq \dfrac{ R^{\max}_{T}(\mathcal{E}_{\mathcal{Z},\mathcal{A}})} {R^{\min}_{S}(\mathcal{E}_{\mathcal{Z},\mathcal{A}})}$$
\item $$ U^{\min}_T(\mathcal{E}_{\mathcal{Z},\mathcal{A}})\equiv \min\lbrace U^{j}_T(\mathcal{E}_{\mathcal{Z},\mathcal{A}} \rbrace _{j:P_j honest}\geq \dfrac{ R^{\min}_{T}(\mathcal{E}_{\mathcal{Z},\mathcal{A}})} {R^{\max}_{S}(\mathcal{E}_{\mathcal{Z},\mathcal{A}})}$$  
\end{itemize}

\begin{lemma}
\begin{align}{\label{lemma2}}
 R^{\min}_{S}(\mathcal{E}_{\mathcal{Z},H_T})\equiv  R^{\max}_{S}(\mathcal{E}_{\mathcal{Z},H_T})
 \end{align}
\end{lemma}
\begin{proof}
because as $H_T$ follows  the protocol reveals all the blocks that it has found to all the honest participants and given the fact that all the honest participants adopt the longest chain, at the end of round $r$ all the honest participants will have local chains of the same length. Their local chains may have different blocks in the end part but they will have the same length.
\end{proof}
}

\ignore{
\section{When Transactions Contribute to the Rewards}\label{transactions}
Until now we have supposed that only the flat block reward contributes to the rewards. But what does it happen when the rewards come also from the transactions included in the mined blocks? \par 

In the description of our model we did not specify the inputs that the environment gives to each participant because these inputs did not contribute to the rewards. We can consider that the inputs are transactions as in \cite{backbone,zikas} and give transactions fees to the participant that will include them in the block that it will produce. The transactions have a sender and a recipient (who can be honest or adversarial participants) and constitute the way in which a participant can pay another participant. So in this setting a participant gains rewards if it produces a block and this block is included in the public ledger (the rewards of each block are the flat reward and the transaction fees) and/or if it is the recipient of a transaction that is included in a block of the public ledger. In this case some attacks described in \cite{bribery,whale} arise. For example the environment can collaborate with the adversary and send Bitcoin to the participants via the transactions that it gives to them as inputs. Specifically the environment can incentivize the recipients to support an adversarial fork by making these transactions valid only if they are included in this adversarial fork. \par In addition  we can consider that the environment gives the same transactions to all the participants during each round and a transaction cannot be included in more than one block. So if a participant creates an adversarial fork by producing a block that does not include the transactions with high transaction fees then the other participants have incentives to extend it even if they should deviate from the protocol. This happens because in this way they have the opportunity to include the remaining transactions in their blocks and receive the high fees. This attack was described in \cite{selfish3} and  will be more effective when the flat block reward becomes zero and the rewards will come only from the transactions. These observations are in agreement with Theorem $7$ in \cite{zikas} according to which there are some distributions of inputs that make Bitcoin not incentive compatible.
 
}

\section{Our Proofs Regarding Incentives in a Fair Blockchain Protocol and the Fruitchain Protocol}\label{fruitchain}

\subsection{Proof of Theorem \ref{theorem8}}\label{fruitproof1}
\begin{proof}
 We choose an arbitrary $r$-admissible environment $\mathcal{Z}$ with input $1^{p'(\kappa)}$, where $\kappa$ the security parameter and an arbitrary adversary $\mathcal{A}$ static  that is PPT and it controls a set $T$ that it includes $t'\leq t$ participants .  
We will examine two executions of the blockchain  protocol with the same environment $\mathcal{Z}$, but with different adversary : In the first execution $\mathcal{E}_{\mathcal{Z},H_T}$ the adversary is $H_T$ and in the second execution 
$\mathcal{E'}_{\mathcal{Z},\mathcal{A}}$ the adversary is $\mathcal{A}$. \par We will prove that  with overwhelming probability in the security parameter for any $j: P_j \quad \textrm{honest}$  we have:
\begin{equation}{\label{fruita}}
U^{j}_T(\mathcal{E'}_{\mathcal{Z},\mathcal{A}})\equiv \dfrac{R^{j}_{T}(\mathcal{E'}_{\mathcal{Z},\mathcal{A}})}{R^{j}_{S}(\mathcal{E'}_{\mathcal{Z},\mathcal{A}})}  \leq \dfrac{\sum_{l:P_l\in T}C_{l}(\mathcal{E}_{\mathcal{Z},H_T})}{\sum_{l:P_l\in S} C_{l}(\mathcal{E}_{\mathcal{Z},H_T})} +\delta\cdot \dfrac{\sum_{l:P_l\in S\setminus T}C_{l}(\mathcal{E}_{\mathcal{Z},H_T})}{\sum_{l:P_l\in S} C_{l}(\mathcal{E}_{\mathcal{Z},H_T})}
\end{equation}
By $(t,\delta)$-weak fairness and by the fact that for any $j: P_j  \textrm{ honest}$ it holds with overwhelming probability  $R^{j}_{S}(\mathcal{E'}_{\mathcal{Z},\mathcal{A}})>0$ we have the following result:\par 
 for any $j: P_j\textrm{ honest}$ it holds with overwhelming probability in the security parameter
\ignore{\begin{align*}
&R^{j}_{S\setminus T}(\mathcal{E'}_{\mathcal{Z},\mathcal{A}})\geq (1-\delta)\cdot \dfrac{\sum_{l:P_l\in S\setminus T}C_{l}(\mathcal{E}_{\mathcal{Z},H_T})}{\sum_{l:P_l\in S} C_{l}(\mathcal{E}_{\mathcal{Z},H_T})}\cdot R^{j}_{S}(\mathcal{E'}_{\mathcal{Z},\mathcal{A}})\Rightarrow\\& R^{j}_{S}(\mathcal{E'}_{\mathcal{Z},\mathcal{A}})-R^{j}_{T}(\mathcal{E'}_{\mathcal{Z},\mathcal{A}})\geq (1-\delta)\cdot \dfrac{\sum_{l:P_l\in S\setminus T}C_{l}(\mathcal{E}_{\mathcal{Z},H_T})}{\sum_{l:P_l\in S} C_{l}(\mathcal{E}_{\mathcal{Z},H_T})}\cdot R^{j}_{S}(\mathcal{E'}_{\mathcal{Z},\mathcal{A}})\Rightarrow\\ &R^{j}_{T}(\mathcal{E'}_{\mathcal{Z},\mathcal{A}})\leq R^{j}_{S}(\mathcal{E'}_{\mathcal{Z},\mathcal{A}})\cdot(1- (1-\delta)\cdot \dfrac{\sum_{l:P_l\in S\setminus T}C_{l}(\mathcal{E}_{\mathcal{Z},H_T})}{\sum_{l:P_l\in S} C_{l}(\mathcal{E}_{\mathcal{Z},H_T})})\Rightarrow\\& R^{j}_{T}(\mathcal{E'}_{\mathcal{Z},\mathcal{A}})\leq R^{j}_{S}(\mathcal{E'}_{\mathcal{Z},\mathcal{A}})\cdot(\dfrac{\sum_{l:P_l\in S}C_{l}(\mathcal{E}_{\mathcal{Z},H_T})}{\sum_{l:P_l\in S} C_{l}(\mathcal{E}_{\mathcal{Z},H_T})}- (1-\delta)\cdot \dfrac{\sum_{l:P_l\in S\setminus T}C_{l}(\mathcal{E}_{\mathcal{Z},H_T})}{\sum_{l:P_l\in S} C_{l}(\mathcal{E}_{\mathcal{Z},H_T})})\Rightarrow \\&R^{j}_{T}(\mathcal{E'}_{\mathcal{Z},\mathcal{A}})\leq R^{j}_{S}(\mathcal{E'}_{\mathcal{Z},\mathcal{A}})\cdot(\dfrac{\sum_{l:P_l\in T}C_{l}(\mathcal{E}_{\mathcal{Z},H_T})}{\sum_{l:P_l\in S} C_{l}(\mathcal{E}_{\mathcal{Z},H_T})} +\delta\cdot \dfrac{\sum_{l:P_l\in S\setminus T}C_{l}(\mathcal{E}_{\mathcal{Z},H_T})}{\sum_{l:P_l\in S} C_{l}(\mathcal{E}_{\mathcal{Z},H_T})})\Rightarrow \\ &\dfrac{R^{j}_{T}(\mathcal{E'}_{\mathcal{Z},\mathcal{A}})}{R^{j}_{S}(\mathcal{E'}_{\mathcal{Z},\mathcal{A}})}\leq \dfrac{\sum_{l:P_l\in T}C_{l}(\mathcal{E}_{\mathcal{Z},H_T})}{\sum_{l:P_l\in S} C_{l}(\mathcal{E}_{\mathcal{Z},H_T})} +\delta\cdot \dfrac{\sum_{l:P_l\in S\setminus T}C_{l}(\mathcal{E}_{\mathcal{Z},H_T})}{\sum_{l:P_l\in S} C_{l}(\mathcal{E}_{\mathcal{Z},H_T})}
\end{align*}}
\begin{align*}
&R^{j}_{S\setminus T}(\mathcal{E'}_{\mathcal{Z},\mathcal{A}})\geq (1-\delta)\cdot \dfrac{\sum_{l:P_l\in S\setminus T}C_{l}(\mathcal{E}_{\mathcal{Z},H_T})}{\sum_{l:P_l\in S} C_{l}(\mathcal{E}_{\mathcal{Z},H_T})}\cdot R^{j}_{S}(\mathcal{E'}_{\mathcal{Z},\mathcal{A}})\Rightarrow\\ &R^{j}_{T}(\mathcal{E'}_{\mathcal{Z},\mathcal{A}})\leq R^{j}_{S}(\mathcal{E'}_{\mathcal{Z},\mathcal{A}})\cdot(1- (1-\delta)\cdot \dfrac{\sum_{l:P_l\in S\setminus T}C_{l}(\mathcal{E}_{\mathcal{Z},H_T})}{\sum_{l:P_l\in S} C_{l}(\mathcal{E}_{\mathcal{Z},H_T})})\Rightarrow\\& R^{j}_{T}(\mathcal{E'}_{\mathcal{Z},\mathcal{A}})\leq R^{j}_{S}(\mathcal{E'}_{\mathcal{Z},\mathcal{A}})\cdot(\dfrac{\sum_{l:P_l\in S}C_{l}(\mathcal{E}_{\mathcal{Z},H_T})}{\sum_{l:P_l\in S} C_{l}(\mathcal{E}_{\mathcal{Z},H_T})}- (1-\delta)\cdot \dfrac{\sum_{l:P_l\in S\setminus T}C_{l}(\mathcal{E}_{\mathcal{Z},H_T})}{\sum_{l:P_l\in S} C_{l}(\mathcal{E}_{\mathcal{Z},H_T})})\Rightarrow \\ &\dfrac{R^{j}_{T}(\mathcal{E'}_{\mathcal{Z},\mathcal{A}})}{R^{j}_{S}(\mathcal{E'}_{\mathcal{Z},\mathcal{A}})}\leq \dfrac{\sum_{l:P_l\in T}C_{l}(\mathcal{E}_{\mathcal{Z},H_T})}{\sum_{l:P_l\in S} C_{l}(\mathcal{E}_{\mathcal{Z},H_T})} +\delta\cdot \dfrac{\sum_{l:P_l\in S\setminus T}C_{l}(\mathcal{E}_{\mathcal{Z},H_T})}{\sum_{l:P_l\in S} C_{l}(\mathcal{E}_{\mathcal{Z},H_T})}
\end{align*}
Note that with overwhelming probability $R^{j}_{S}(\mathcal{E'}_{\mathcal{Z},\mathcal{A}})>0$ and as a result 
\begin{align}
U^{j}_T(\mathcal{E'}_{\mathcal{Z},\mathcal{A}})\equiv \dfrac{R^{j}_{T}(\mathcal{E'}_{\mathcal{Z},\mathcal{A}})}{R^{j}_{S}(\mathcal{E'}_{\mathcal{Z},\mathcal{A}})}
\end{align}
 \par 
By weak fairness and by the fact that it holds with overwhelming probability $R^{\min}_{S}(\mathcal{E}_{\mathcal{Z},\mathcal{H_T}})>0$  we have the following result: \par
\begin{equation}{\label{fruit2}}
U^{min}_T(\mathcal{E}_{\mathcal{Z},H_T})\geq (1 -\delta)\cdot \dfrac{\sum_{l:P_l\in T}C_{l}(\mathcal{E}_{\mathcal{Z},H_T})}{\sum_{l:P_l\in S} C_{l}(\mathcal{E}_{\mathcal{Z},H_T})}
\end{equation}

By equations $(\ref{fruita}),(\ref{fruit2})$  we have that with overwhelming probability in the security parameter  
\begin{align*}
U^{\max}_T(\mathcal{E'}_{\mathcal{Z},\mathcal{A}})-U^{min}_T(\mathcal{E}_{\mathcal{Z},H_T})&\leq \dfrac{\sum_{l:P_l\in T}C_{l}(\mathcal{E}_{\mathcal{Z},H_T})}{\sum_{l:P_l\in S} C_{l}(\mathcal{E}_{\mathcal{Z},H_T})} + \\ &\delta\cdot \dfrac{\sum_{l:P_l\in S\setminus T}C_{l}(\mathcal{E}_{\mathcal{Z},H_T})}{\sum_{l:P_l\in S} C_{l}(\mathcal{E}_{\mathcal{Z},H_T})}-(1 -\delta)\cdot \dfrac{\sum_{l:P_l\in T}C_{l}(\mathcal{E}_{\mathcal{Z},H_T})}{\sum_{l:P_l\in S} C_{l}(\mathcal{E}_{\mathcal{Z},H_T})}\\&\leq \delta
\end{align*}

\end{proof}
\subsection{Proof of Theorem \ref{relativefruitchain}}
\label{fruitproof2}
\begin{proof}
%For any $\delta \in (0,1)$ the Fruitchain protocol satisfies $(T_0, \delta)$-approximate fairness property, where $T_0=170 \cdot \kappa \cdot \frac{p_f}{p \cdot \delta}$ as it is proved in \cite{fruitchain} when the adversary controls at most $n/2-1 $ participants. 
Given that the Fruitchain protocol satisfies $(T_0, \delta)$-approximate fairness property when the adversary controls at most $n/2-1 $ participants, then it satisfies also $(n/2-1, \delta)$-weak fairness property under the restriction that the environment performs the protocol so many rounds that with overwhelming probability (in the security parameter) any honest participant has a chain of at least $T_0$ fruits. Note that by chain growth rate proved in \cite{fruitchain} when 
$r \geq \frac{T_0}{p_f \cdot (\frac{n}{2}+1) \cdot (1- \delta)\cdot q}$ and the adversary controls at most $n/2-1 $ participants, then indeed it holds that with overwhelming probability any honest participant has a chain of at least $T_0$ fruits. %\footnote{Note that in \cite{fruitchain} they take that the questions $q$ each participant can ask during each round is $1$}.
In addition, by Chernoff bound and by the fact that the execution lasts at least one round, it holds with overwhelming probability in $\kappa$ the following:
for any $j: P_j \textrm{ honest}$, for any PPT static  adversary $\mathcal{A}$ that controls at most $n/2-1$ participants and for any $r$-admissible environment $\mathcal{Z}$ with input $1^{p'(\kappa)}$ $R^{j}_{S}(\mathcal{E'}_{\mathcal{Z},\mathcal{A}})> 0$. So by Theorem \ref{theorem8} we have that the Fruitchain protocol is  $( n/2-1,0,\delta)$-EVP under an $r$-admissible environment where $r \geq \frac{T_0}{p_f \cdot (\frac{n}{2}+1) \cdot (1- \delta)\cdot q}$. 
\end{proof}

\subsection{Proof of Theorem \ref{absolutefruitchain}}
\label{fruitproof3}
\par In this setting the adversary  again is PPT, static with fixed cost, it controls a set of participants $T=\lbrace  P_{i_1},...,P_{i_{t'}}\rbrace \subseteq \lbrace P_1,...,P_n \rbrace=S$  and chooses in the beginning the number $x_m$ of the questions that each participant controlled by the adversary $P_{i_m}$ will not ask during  each round of the execution. %The environment $\mathcal{Z}$ is  $r$-admissible with input $1^{p'(\kappa)}$, where $\kappa$ the security parameter.
\begin{proof}
 Let an arbitrary $\delta_1 \in (0,0.25)$ such that $c\leq p_f\cdot w_f \cdot (1-\delta_1)\cdot \phi$  and  $4 \cdot \delta_1<1- \phi$.
We choose also an arbitrary $r$-admissible environment $\mathcal{Z}$ with input $1^{p'(\kappa)}$, where $\kappa$ the security parameter and an arbitrary adversary $\mathcal{A}$ static with fixed cost that is PPT and it has corrupted a set $T$ with $t'$ participants, where $t' \in \lbrace 1,...,n-1\rbrace$. Note that if the adversary controls zero participants then the proof is trivial because adversary's utility is always zero. Let $x=\sum_{m=1}^{t'} x_m$ be the total number of the queries that all the corrupted participants collectively do not ask during each round. Note that $x$ is a constant, not a random variable, as it is determined in the beginning by the static adversary. It holds $0\leq x \leq q \cdot t' $.\par
We will examine two executions of the Fruitchain  protocol with the same environment, but with different adversary: in the first execution $\mathcal{E}_{\mathcal{Z},H_T}$ the adversary is $H_T$ and in the second execution 
$\mathcal{E'}_{\mathcal{Z},\mathcal{A}}$ the adversary is $\mathcal{A}$. Note that the last complete round of the executions is $r$.\par 
Firstly we have:

\begin{equation}
 U^{min}_T(\mathcal{E}_{\mathcal{Z},H_T}) \geq  q \cdot t'\cdot p_f \cdot r \cdot(1-\delta_1)\cdot w_f - c\cdot q\cdot t' \cdot r \geq   q \cdot t'\cdot p_f \cdot r \cdot(1-\delta_1)\cdot w_f \cdot (1- \phi) >0
\end{equation}
with overwhelming probability in $\kappa$.\par 
The above equation is proved by Chernoff bound and taking into account that all the fruits produced by $T$ will be included in the local chain of all the honest participants at the end of the  round $r$.\par 
In addition, the adversary cannot earn rewards for more fruits than that it has produced. Moreover $c\leq p_f\cdot w_f\cdot (1-\delta_1)\cdot \phi$ . As a result by Chernoff bound

\begin{equation}
 U^{\max}_T(\mathcal{E'}_{\mathcal{Z},\mathcal{A}}) \leq  (q \cdot t'-x)\cdot p_f \cdot r \cdot(1+\delta_1)\cdot w_f - c\cdot (q\cdot t'-x)  \cdot r\leq 
 q \cdot t'\cdot p_f \cdot r \cdot(1+\delta_1)\cdot w_f - c\cdot q\cdot t'  \cdot r
\end{equation}
with overwhelming probability in $\kappa$.\par 
As a result  
\begin{align}
 U^{\max}_T(\mathcal{E'}_{\mathcal{Z},\mathcal{A}}) -  U^{min}_T(\mathcal{E}_{\mathcal{Z},H_T}) &\leq (\dfrac{1+\delta_1}{1-\delta_1} -1)\cdot \dfrac{1}{1- \phi}\cdot  U^{\min}_T(\mathcal{E}_{\mathcal{Z},H_T})\\&\leq 4\cdot \delta_1\cdot \dfrac{1}{1- \phi}\cdot  U^{\min}_T(\mathcal{E}_{\mathcal{Z},H_T})
\end{align}
with overwhelming probability in $\kappa$. 
\end{proof}
\iffalse
\begin{table}[t]
  \begin{center}
    \caption{Notation.}
    \label{tab:table1}
    
    \begin{tabular}{l|c|r} % <-- Alignments: 1st column left, 2nd middle and 3rd right, with vertical lines in between
     
      \hline
      $p$: probability with which a query to the random oracle gives a block \\
    $p_f$: probability with which a query to the random oracle gives a fruit\\
      $q$: number of queries each participant can ask the random oracle during each round\\
      $t'$: number of corrupted participants \\
      $t$: upper bound of $t'$\\
      $r$: round after which an execution terminates \\
      $n$: number of participants \\
      $w$: flat reward per block (Bitcoin) \\
    $w_f$: flat reward per fruit (Fruitchain \cite{fruitchain})\\
      $s$: expected solutions per round \\
      $x$: the total number of the queries all the corrupted participants collectively do not ask during each round \\     $S$: the set of all the players \\
      $T$: the set of the corrupted participants \\ 
      \hline
    
    \end{tabular}
      
  \end{center}
\end{table}
\fi

\end{document}